\documentclass[11pt]{article}

\usepackage{amsfonts}
\usepackage{amssymb}
\usepackage{amstext}
\usepackage{amsmath}
\usepackage{enumerate}
\usepackage{algorithm}
\usepackage{algorithmic}

\usepackage{graphicx}

\usepackage{bbm}

\usepackage{amsthm}

\usepackage{thmtools}

\usepackage{verbatim} 

\usepackage[usenames,dvipsnames,svgnames,table]{xcolor}
\definecolor{dartmouthgreen}{rgb}{0.05, 0.5, 0.06}
\definecolor{ceruleanblue}{rgb}{0.16, 0.32, 0.75}

\usepackage[colorlinks=true,pdfpagemode=UseNone,citecolor=dartmouthgreen,linkcolor=ceruleanblue,urlcolor=BrickRed,pdfstartview=FitH]{hyperref}

\usepackage{framed}
\usepackage{enumitem}

\makeatletter
 \setlength{\textwidth}{6.5in}
 \setlength{\oddsidemargin}{0in}
 \setlength{\evensidemargin}{0in}
 \setlength{\topmargin}{0.25in}
 \setlength{\textheight}{8.25in}
 \setlength{\headheight}{0pt}
 \setlength{\headsep}{0pt}
 \setlength{\marginparwidth}{59pt}

 \setlength{\parindent}{0pt}
 \setlength{\parskip}{5pt plus 1pt}
 \setlength{\abovedisplayskip}{8pt plus 3pt minus 6pt}

 \makeatother


\newtheorem{theorem}{Theorem}[section]

\newtheorem{lemma}[theorem]{Lemma}
\newtheorem{definition}[theorem]{Definition}

\newtheorem{proposition}[theorem]{Proposition}
\newtheorem{claim}[theorem]{Claim}

\newtheorem*{problem*}{Problem}
\newtheorem{remark}[theorem]{Remark}
\newtheorem*{remark*}{Remark}

\newtheorem{example}[theorem]{Example}


\newcommand{\st}{\mbox{\rm subject to }}

\numberwithin{equation}{section}
\numberwithin{table}{section}

\renewcommand{\preceq}{\preccurlyeq}
\renewcommand{\succeq}{\succcurlyeq}

\renewcommand{\tilde}{\widetilde}

\newcommand{\R}{\ensuremath{\mathbb R}}

\newcommand{\E}[1]{{\mathbb{E}}\left[#1\right]}


\newcommand{\junk}[1]{}

\renewcommand{\l}{\lambda}
\renewcommand{\L}{{\mathcal L}}
\newcommand{\vol}{{\rm vol}}

\newcommand{\norm}[1]{\left\lVert#1\right\rVert}

\newcommand{\vertiii}[1]{{\left\vert\kern-0.25ex\left\vert\kern-0.25ex\left\vert #1 \right\vert\kern-0.25ex\right\vert\kern-0.25ex\right\vert}}

\newcommand{\one}{\ensuremath{\mathbbm{1}}}

\newenvironment{proofof}[1]{{\medbreak\noindent \em Proof of #1.  }}{\hfill\qed\medbreak}

\def\b1{{\bf 1}}
\def\eps{{\epsilon}}

\def\R{\mathbb{R}}

\def\vol{\operatorname{vol}} 
\def\diag{\operatorname{diag}} 

\def\tr{\operatorname{tr}}

\global\long\def\E{\mathbb{E}}

\global\long\def\R{\mathbb{R}}

\newcommand{\inner}[2]{\langle #1, #2 \rangle} 

\DeclareMathOperator{\supp}{supp}

\newcommand{\ip}[2]{\langle #1 , #2\rangle}






\title{Cheeger Inequalities for Directed Graphs and Hypergraphs \\Using Reweighted Eigenvalues}

\author{
Lap Chi Lau\footnote{Cheriton School of Computer Science, University of Waterloo. Supported by NSERC Discovery Grant. 
},~~~~~
Kam Chuen Tung\footnote{Cheriton School of Computer Science, University of Waterloo. Supported by NSERC Discovery Grant. 
},~~~~~
Robert Wang\footnote{Cheriton School of Computer Science, University of Waterloo. Supported by NSERC Discovery Grant and Canada Graduate Scholarship. 
}}

\date{}

\begin{document}

\begin{titlepage}
\def\thepage{}
\thispagestyle{empty}

\maketitle

\begin{abstract}
We derive Cheeger inequalities for directed graphs and hypergraphs using the reweighted eigenvalue approach that was recently developed for 
vertex expansion in undirected graphs~\cite{OZ22,KLT22,JPV22}. 
The goal is to develop a new spectral theory for directed graphs 
and an alternative spectral theory for hypergraphs.

The first main result is a Cheeger inequality relating the vertex expansion $\vec{\psi}(G)$ of a directed graph $G$ to the vertex-capacitated maximum reweighted second eigenvalue $\vec{\lambda}_2^{v*}$ that
\[
\vec{\lambda}_2^{v*} \lesssim \vec{\psi}(G) \lesssim \sqrt{\vec{\lambda}_2^{v*} \cdot \log \frac{\Delta}{\vec{\lambda}_2^{v*}}}
\]
where $\Delta$ is the maximum degree of $G$.
This provides a combinatorial characterization of the fastest mixing time of a directed graph by vertex expansion, and builds a new connection between reweighted eigenvalued, vertex expansion, and fastest mixing time for directed graphs.
 
The second main result is a stronger Cheeger inequality relating the edge conductance $\vec{\phi}(G)$ of a directed graph $G$ to the edge-capacitated maximum reweighted second eigenvalue $\vec{\lambda}_2^{e*}$ that
\[
\vec{\lambda}_2^{e*} \lesssim \vec{\phi}(G) \lesssim \sqrt{\vec{\lambda}_2^{e*} \cdot \log \frac{1}{\vec{\lambda}_2^{e*}}}.
\]
This provides a certificate for a directed graph to be an expander and a spectral algorithm to find a sparse cut in a directed graph,
playing a similar role as Cheeger's inequality in certifying graph expansion and in the spectral partitioning algorithm for undirected graphs.

We also use this reweighted eigenvalue approach to derive the improved Cheeger inequality for directed graphs, and furthermore to derive several Cheeger inequalities for hypergraphs that match and improve the existing results in~\cite{Lou15,CLTZ18}. 
These are supporting results that this provides a unifying approach to lift the spectral theory for undirected graphs to more general settings. 
\end{abstract}

\end{titlepage}

\thispagestyle{empty}


\newpage

\section{Introduction}

Cheeger's inequality~\cite{Che70,AM85,Alo86,Chu97} is a fundamental result in spectral graph theory that connects the edge expansion property of an undirected graph $G=(V,E)$ to the second eigenvalue of its associated matrix:
\vspace*{-3mm}
\begin{equation} \label{e:Cheeger}
\frac{\lambda_2}{2} \leq \phi(G) \leq \sqrt{2\lambda_2}
\end{equation}
where $\phi(G)$ is the edge conductance of $G$ and $\lambda_2$ is the second smallest eigenvalue of its normalized Laplacian matrix\footnote{See \autoref{sec:prelim} for various definitions that are not stated in this introduction.}.
There are two important applications of Cheeger's inequality.
One is to use the second eigenvalue to study expander graphs~\cite{HLW06} and its eigenvector for graph partitioning~\cite{SM00,Lux07}.
The other is to use the edge conductance to bound the mixing time of random walks~\cite{AF02,LP17}.
Together, Cheeger's inequality connects the second eigenvalue, edge conductance, and mixing time.
More recently, the spectral theory for undirected graphs is enriched by several interesting generalizations of Cheeger's inequality~\cite{Tre09,ABS10,LOT12,LRTV12,KLLOT13}, which establish further connections between edge expansion properties of the graph to other eigenvalues of its normalized Laplacian matrix.

In contrast, 
the spectral theory for directed graphs has not been nearly as well developed. 
One issue is that the Laplacian matrix of a directed graph is not Hermitian, and so its eigenvalues are not necessarily real numbers.
There are formulations~\cite{Fil91,Chu05,GM17,LL15} 
that associate certain Hermitian matrices to a directed graph,
and use the second eigenvalue of these matrices to bound the mixing time of random walks~\cite{Fil91,Chu05} (see \autoref{sec:related-work} for details).
But, to our knowledge,
there are no known formulations that relate the expansion properties of a directed graph to the eigenvalues of an associated matrix\footnote{The only formulation that we know about expansion properties of a directed graph is a {\em nonlinear} Laplacian operator in~\cite{Yos16,Yos19}.  See \autoref{sec:related-work} for details.}.
The main goal of this paper is to provide such formulations using ``reweighted eigenvalues'' and to develop a spectral theory for directed graphs that is comparable to that for undirected graphs.

The notion of reweighted eigenvalue for undirected graphs was first formulated in~\cite{BDX04} for studying the fastest mixing time problem on reversible Markov chains.
In this formulation, we are given an undirected graph $G=(V,E)$, 
and the task is to find a reweighted graph $G'=(V,E,w)$ with edge weight $w(uv)$ for $uv \in E$ and weighted degree one for each vertex, that maximizes the second eigenvalue $\lambda_2^*$ of its normalized Laplacian matrix.
It was known~\cite{Roc05} that the vertex expansion $\psi(G)$ is an upper bound on $\lambda_2^*$, but only recently~\cite{OZ22,KLT22,JPV22} was it established that there is a Cheeger-type inequality
relating these two quantities:
\begin{equation} \label{e:vertex-Cheeger}
\lambda_2^* \lesssim \psi(G) \lesssim \sqrt{\lambda_2^* \cdot \log \Delta}
\end{equation}
where $\Delta$ is the maximum degree of a vertex in $G$.
This inequality connects the reweighted second eigenvalue and vertex expansion and fastest mixing time, in a similar way that Cheeger's inequality connects the second eigenvalue and edge conductance and mixing time.
This reweighted eigenvalue approach was extended in~\cite{KLT22} to develop a spectral theory for undirected vertex expansion,
by proving that several generalizations of Cheeger's inequality~\cite{Tre09,LOT12,LRTV12,KLLOT13} have close analogs in connecting vertex expansion properties to other reweighted eigenvalues. 

\subsection{Our Results}
We formulate reweighted eigenvalues for directed graphs and hypergraphs.
The main idea is to reduce the study of expansion properties in directed graphs and hypergraphs to the basic setting of edge conductances in undirected graphs.
We show that this provides an intuitive and unifying approach to lift the spectral theory for undirected graphs to more general settings.

\subsubsection{Cheeger Inequality for Directed Vertex Expansion}

Classical spectral theory connects (i) undirected edge conductance, (ii) second eigenvalue, and (iii) mixing time of random walks on undirected graphs.
We present a new spectral formulation that connects (i) directed vertex expansion, (ii) reweighted second eigenvalue, and (iii) fastest mixing time of random walks on directed graphs.

\begin{definition}[Directed Vertex Expansion] \label{def:directed-vertex-expansion}
Let $G=(V,E)$ be a directed graph and $\pi:V \to \R_{\geq 0}$ be a weight function on the vertices.
For a subset $S \subseteq V$, let $\partial^+(S) := \{v \notin S \mid \exists u \in S \textrm{~with~} uv \in E \}$ be the set of out-neighbors of $S$, and $\pi(S) := \sum_{v \in S} \pi(v)$ be the weight of $S$.
The directed vertex expansion of a set $S \subseteq V$ and of the graph $G$ are defined as\footnote{When specialized to undirected graphs (by considering the bidirected graph), the current definitions are slightly different from that in~\cite{OZ22,KLT22}; see \autoref{sec:prelim}. 
We remark that the two definitions of $\psi(G)$ are within a factor of $2$ of each other.
The current definitions have the advantages that $\psi(S) \leq 1$ and are more convenient in the proofs.}
\[
\vec{\psi}(S) := \frac{\min\big\{ \pi\big(\partial^+(S)\big), \pi\big(\partial^+(\overline{S})\big) \big\}}{\min\big\{\pi(S),\pi(\overline{S})\big\}}
\quad {\textrm and} \quad
\vec{\psi}(G) := \min_{\emptyset \neq S \subset V} \vec{\psi}(S).
\]
where $\overline{S} := V-S$ is the complement set of $S$.
Note that $\vec{\psi}(S) \leq 1$
for all $S \subseteq V$ as $\partial^+(\overline{S}) \subseteq S$.
\end{definition}

To certify that a directed graph $G=(V,E)$ has large vertex expansion,
our idea is to find the best reweighted {\em Eulerian} subgraph $G'=(V,E,w)$ of $G$ with edge weight $w(uv)$ for $uv \in E$ and weighted degrees $\sum_{u \in V} w(uv) = \sum_{u \in V} w(vu) = \pi(v)$ for $v \in V$, and then use the edge conductance of $G'$ as a lower bound on the vertex expansion of $G$.
Since the weighted directed graph $G'$ is Eulerian,
the edge conductance of $G'$ is equal to the edge conductance of the underlying undirected graph $G''$ with edge weight $w''(uv) = \frac{1}{2} \big(w(uv) + w(vu)\big)$. 
Now, as the graph $G''$ is undirected, 
we can use Cheeger's inequality to lower bound the edge conductance of $G''$ by the second smallest eigenvalue of its normalized Laplacian matrix.
This leads to the following formulation of the reweighted second eigenvalue for directed vertex expansion
(see \autoref{prop:directed-vertex-easy} for more about this reduction).

\begin{definition}[Maximum Reweighted Spectral Gap with Vertex Capacity Constraints] \label{def:directed-vertex-primal}
Given a directed graph $G=(V,E)$ and a weight function $\pi: V \to \R_{\geq 0}$,
the maximum reweighted spectral gap with vertex capacity constraints is defined as
\begin{align*}
{\vec{\lambda}_2^{v*}}(G) ~:=~ \max_{A \geq 0} &~~~ \lambda_2\bigg( I - \Pi^{-\frac12} \Big(\frac{A+A^T}{2}\Big) \Pi^{-\frac12}\bigg) & 
\\
\st &~~~ A(u,v) = 0 & & \forall uv \notin E
\\
&~~~ \sum_{v \in V} A(u,v) = \sum_{v \in V} A(v,u) & & \forall u \in V
\\
&~~~ \sum_{v \in V} A(u,v) = \pi(u) & & \forall u \in V
\end{align*}
where $A$ is the adjacency matrix of the reweighted Eulerian subgraph and
$\Pi := \diag(\pi)$ is the diagonal degree matrix of $A$.
Then $\frac12 (A+A^T)$ is the adjacency matrix of the underlying undirected graph of the reweighted Eulerian subgraph, $\L := I - \frac12 \Pi^{-1/2} (A+A^T) \Pi^{-1/2}$ is its normalized Laplacian matrix, and $\lambda_2(\L)$ is the second smallest eigenvalue of $\L$.

To ensure that the optimization problem for $\lambda_2^{v}(G)$ is always feasible, we assume that the graph has a self-loop at each vertex.
In the context of Markov chains, this corresponds to allowing a non-zero holding probability on each vertex.
\end{definition}

Our first main result is a Cheeger-type inequality that relates $\vec{\lambda}_2^{v*}(G)$ and $\vec{\psi}(G)$, proving that the directed vertex expansion is large if and only if the reweighted eigenvalue is large. 

\begin{theorem}[Cheeger Inequality for Directed Vertex Expansion] \label{thm:directed-vertex-expansion}
For any directed graph $G=(V,E)$ and any weight function $\pi: V \to \R_{\geq 0}$,
\[
\vec{\lambda}_2^{v*}(G) \lesssim \vec{\psi}(G) \lesssim \sqrt{\vec{\lambda}_2^{v*}(G) \cdot  \log \frac{\Delta}{\vec{\psi}(G)} }
\lesssim \sqrt{\vec{\lambda}_2^{v*}(G) \cdot  \log \frac{\Delta}{\vec{\lambda}_2^{v*}(G)} },
\]
where $\Delta$ is the maximum (unweighted) degree of a vertex of $G$.
\end{theorem}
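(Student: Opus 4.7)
I would split the theorem into the easy direction $\vec{\lambda}_2^{v*}(G) \lesssim \vec{\psi}(G)$ and the hard direction $\vec{\psi}(G) \lesssim \sqrt{\vec{\lambda}_2^{v*}(G) \cdot \log(\Delta/\vec{\psi}(G))}$; the third inequality in the statement then follows formally, since replacing $\vec{\psi}(G)$ inside the logarithm by the smaller $\vec{\lambda}_2^{v*}(G)$ only loses a constant factor. The overall strategy mirrors the reweighted-eigenvalue framework for undirected vertex expansion in~\cite{OZ22,KLT22,JPV22}: use the Eulerian reweighting to reduce to a symmetric Laplacian on the easy side, and SDP duality plus a truncated threshold sweep on the hard side.

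For the easy direction, let $A$ be any feasible reweighting in \autoref{def:directed-vertex-primal}, and let $G_A$ be the underlying undirected weighted graph with adjacency matrix $\tfrac12(A+A^T)$, in which each vertex has weighted degree $\pi(v)$ by the Eulerian and row-sum conditions. For any $\emptyset \neq S \subsetneq V$, the Eulerian condition forces the $A$-flow from $S$ to $\overline{S}$ to equal the $A$-flow from $\overline{S}$ to $S$, and their common value is the $G_A$-cut weight across $S$. Every directed edge $(u,v)$ in the support of $A$ with $u \in S$ and $v \notin S$ has $v \in \partial^+(S)$, so the cut weight is at most $\pi(\partial^+(S))$; by symmetry it is also at most $\pi(\partial^+(\overline{S}))$. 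Thus $\phi_{G_A}(S) \leq \vec{\psi}(S)$ for every $S$, so $\phi(G_A) \leq \vec{\psi}(G)$, and standard Cheeger's inequality for $G_A$ gives $\tfrac12\lambda_2(\mathcal L(G_A)) \leq \phi(G_A) \leq \vec{\psi}(G)$; maximizing the left side over $A$ yields $\vec{\lambda}_2^{v*}(G) \leq 2\vec{\psi}(G)$.

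For the hard direction, I would take the min-max dual of the reweighted-eigenvalue SDP. Sion's minimax theorem together with LP duality on the inner maximization over $A$ (a transportation LP with supply and demand $\pi$ on the directed template $E(G)$) produces an embedding $y : V \to \R$ with $\pi^T y = 0$ and $\|y\|_\pi^2 = 1$, together with dual potentials $\alpha, \beta : V \to \R$ satisfying $\alpha(u) + \beta(v) \geq \tfrac12(y(u) - y(v))^2$ for every $(u,v) \in E(G)$ and $\sum_u \pi(u)\bigl(\alpha(u) + \beta(u)\bigr) = \vec{\lambda}_2^{v*}(G)$. The asymmetric pairing of $\alpha$ on the tail and $\beta$ on the head of each directed edge is precisely what captures the one-sided nature of the out-boundary. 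The rounding truncates $y$ at a level $\tau$ and sweeps threshold sets $S_t = \{v : y(v) \leq t\}$: for $v \in \partial^+(S_t)$ there is a predecessor $u \in S_t$ with $y(u) \leq t < y(v)$, and the dual edge inequality converts the gap into a charge against $\alpha(u) + \beta(v)$; averaging over a random $t$ and applying Cauchy--Schwarz should yield $\pi(\partial^+(S_t)) \lesssim \sqrt{\vec{\lambda}_2^{v*}(G) \cdot \log(\Delta/\vec{\psi}(G))} \cdot \min\{\pi(S_t),\pi(\overline{S_t})\}$, where the logarithmic factor arises from optimizing $\tau$ against the worst-case contribution of any single degree-$\Delta$ vertex to the truncation error, exactly as in the undirected case~\cite{KLT22}.

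The main obstacle is this rounding analysis. The Eulerian easy-direction bound points the wrong way --- a sparse edge cut in $G_A$ does not imply a sparse directed vertex cut in $G$ --- so the argument cannot simply pull back a Cheeger cut from $G_A$. Instead it must exploit the asymmetric dual potentials $(\alpha,\beta)$ to charge $\partial^+(S_t)$ directly through the directed edges of $E(G)$ leaving $S_t$, a maneuver with no analog in the undirected theory. Choosing the truncation parameter $\tau$ so that the degree bound $\Delta$ enters only through a logarithm while still preserving the $\sqrt{\vec{\lambda}_2^{v*}(G)}$ Cheeger scaling is the delicate technical step, and I expect it to drive the bulk of the proof.
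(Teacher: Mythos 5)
Your easy direction is exactly the paper's \autoref{prop:directed-vertex-easy} and is fine. The hard direction, however, has a genuine gap at its very first step: you claim that Sion's minimax plus LP duality on the inner transportation problem produces a \emph{one-dimensional} embedding $y:V\to\R$ together with potentials whose dual objective equals $\vec{\lambda}_2^{v*}(G)$. The minimax exchange is only valid after convexifying the inner minimization, i.e.\ for the vector-valued SDP ($f:V\to\R^n$, \autoref{prop:lambda2-vertex}); the scalar min-max value is $\vec{\lambda}_v^{(1)}(G)$ of \autoref{def:k-dim-primal}, and it can exceed $\vec{\lambda}_2^{v*}(G)$ by a factor of order $\log\Delta$ (the tight example for undirected vertex expansion already shows this). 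Collapsing the $n$-dimensional SDP solution to one dimension is precisely where the paper pays the $\log(\Delta/\vec{\psi}(G))$ factor, and it requires machinery your sketch does not contain: the asymmetric ratio $\alpha(G)$, Hoffman's circulation lemma (\autoref{lem:Hoffman}) giving the large optimal property (\autoref{lem:large-optimal}), the JPV-style random projection analysis (\autoref{thm:JPV}), and the separate combinatorial bound $\alpha(G)\lesssim\Delta/\vec{\psi}(G)$ (\autoref{lem:asymmetric-ratio-vertex}) that converts $\log(\Delta\cdot\alpha(G))$ into $\log(\Delta/\vec{\psi}(G))$. Your alternative source for the logarithm --- a truncation optimized against a degree-$\Delta$ vertex ``exactly as in KLT22'' --- misattributes even the undirected argument (there the $\log\Delta$ comes from Gaussian projection of the dual, not from the sweep), and the paper explicitly explains why that projection route fails here: the dual constraints contain sign-indefinite terms $-r(u)+r(v)$, and there is an example where projecting $f$ loses $\log\alpha(G)$ even at constant degree. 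As stated, your argument would in fact prove a bound at least as strong as $\sqrt{\vec{\lambda}_2^{v*}\log\Delta}$, which the authors pose as open.

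There is a second, smaller gap in the rounding itself. Sweeping only on the level sets of $y$ and charging each $v\in\partial^+(S_t)$ against $\alpha(u)+\beta(v)$ for the witnessing predecessor $u$ leads, after integrating over $t$, to a bound of the form $\sum_v \pi(v)\max_{u:uv\in E}\bigl(\alpha(u)+\beta(v)\bigr)$, and this per-vertex maximum is not controlled by the dual objective $\sum_v\pi(v)\bigl(\alpha(v)+\beta(v)\bigr)$ --- the in-neighbor potentials $\alpha(u)$ can be arbitrarily large on the wrong vertices. The paper's fix (\autoref{sec:threshold-rounding}) is to abandon the ordering given by $f$ alone and instead sweep on the four functions built from $f+r$ and $f-r$: for those orderings the boundary integral telescopes directly against $\sum_v\pi(v)q(v)$ with no maximum, and one of the four candidates is guaranteed to have constant $\pi$-mass, which handles the denominator. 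Some device of this kind (or an equivalent) is needed before your Cauchy--Schwarz charging step can go through.
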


Since directed vertex expansion is more general than undirected vertex expansion and \eqref{e:vertex-Cheeger} is tight up to a constant factor~\cite{KLT22}, we know that the $\log \Delta$ term in \autoref{thm:directed-vertex-expansion} is necessary.  
But we do not know whether the $\log(1/\vec{\psi}(G))$ term in \autoref{thm:directed-vertex-expansion} is necessary or not. \\

{\bf The Fastest Mixing Time Problem}:
The notion of reweighted eigenvalue for undirected graphs was first formulated in~\cite{BDX04} for studying the fastest mixing time problem on {\em reversible} Markov chains.
It turns out that the reweighted eigenvalue $\vec{\lambda}_2^{v*}(G)$ in \autoref{def:directed-vertex-primal} can be used to study the fastest mixing time problem on {\em general} Markov chains.

\begin{definition}[Fastest Mixing Time on General Markov Chain] \label{def:fastest-mixing}
Given a directed graph $G=(V,E)$ and a probability distribution $\pi$ on $V$,
the fastest mixing time problem is defined as
\begin{align*}
\tau^*(G) ~:=~ \min_{P \geq 0} &~~~ \tau(P) & 
\\
\st &~~~ P(u,v) = 0 & & \forall uv \notin E
\\
&~~~ \sum_{v \in V} P(u,v) = 1 & & \forall u \in V
\\
&~~~ \sum_{u \in V} \pi(u) \cdot P(u,v) = \pi(v) & & \forall v \in V
\end{align*}
where $P$ is the transition matrix of the Markov chain.
The constraints are to ensure that $P$ has nonzero entries only on the edges of $G$, that $P$ is a row stochastic matrix, and that the stationary distribution of $P$ is $\pi$.
The objective is to minimize the mixing time $\tau(P)$ to the stationary distribution $\pi$; 
see \autoref{sec:prelim} for definitions of random walks and mixing times.
\end{definition}

For the fastest mixing time problem on reversible Markov chains, we are given an undirected graph $G=(V,E)$ and a probability distribution $\pi$, and the last set of constraints in \autoref{def:fastest-mixing} is replaced by the stronger requirement that $\pi(u) \cdot P(u,v) = \pi(v) \cdot P(v,u)$ for all $uv \in E$.
With this stronger requirement, $P$ has real eigenvalues and it is well known that 
$\tau(P) \lesssim \frac{1}{1-\alpha_2(P)} \cdot \log( \frac{1}{\pi_{\min}}),$ 
where $\alpha_2(P)$ is the second largest eigenvalue of $P$ and $\pi_{\min} := \min_{v \in V} \pi(v)$.
Thus, the reweighted eigenvalue formulation in~\cite{BDX04} is to find such a transition matrix $P$ that maximizes the spectral gap $1-\alpha_2(P)$, which can be solved by a semidefinite program and can be used as a proxy to upper bounding the fastest mixing time.

For general Markov chains, $P$ may have complex eigenvalues, 
and there was no known efficient formulation for the fastest mixing time problem.
We observe that the reweighted spectral gap $\vec{\lambda}_2^{v*}(G)$ in \autoref{def:directed-vertex-primal} provides such a formulation through the results in~\cite{Fil91,Chu05}.
An interesting consequence of \autoref{thm:directed-vertex-expansion} is a combinatorial characterization of the fastest mixing time of general Markov chains, showing that small directed vertex expansion is the only obstruction of fastest mixing time.

\begin{theorem}[Fastest Mixing Time and Directed Vertex Expansion] \label{thm:fastest-mixing}
For any directed graph $G = (V,E)$ with maximum total degree $\Delta$, and for any probability distribution $\pi$ on $V$,
\[
\frac{1}{\vec{\psi}(G)} \cdot \frac{1}{\log(1/\pi_{\min})}\lesssim \tau^*(G) \lesssim \frac{1}{\vec{\psi}(G)^2} \cdot \log \frac{\Delta}{\vec{\psi}(G)} \cdot \log \frac{1}{\pi_{\min}}.
\]
\end{theorem}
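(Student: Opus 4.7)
The plan is to sandwich $\tau^*(G)$ in terms of the reweighted eigenvalue $\vec{\lambda}_2^{v*}(G)$ using classical mixing-time machinery, and then invoke \autoref{thm:directed-vertex-expansion} to translate between $\vec{\lambda}_2^{v*}(G)$ and $\vec{\psi}(G)$. The central observation is that the constraints on $P$ in \autoref{def:fastest-mixing} are in bijection with those on $A$ in \autoref{def:directed-vertex-primal} via the substitution $A = \Pi P$: row-stochasticity of $P$ becomes $\sum_v A(u,v) = \pi(u)$, and the stationarity of $\pi$ under $P$ becomes the Eulerian condition $\sum_v A(v,u) = \sum_v A(u,v)$. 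A direct computation then shows $\Pi^{-1/2} \frac{A+A^T}{2} \Pi^{-1/2} = \Pi^{1/2} M_P \Pi^{-1/2}$, where $M_P := (P + P^*)/2$ is the additive reversibilization with $P^*(u,v) := \pi(v) P(v,u)/\pi(u)$. Hence $\vec{\lambda}_2^{v*}(G) = \max_P \bigl(1 - \lambda_2(M_P)\bigr)$, the max being over feasible chains $P$.

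For the upper bound, I would pick a chain $\hat P$ attaining this maximum and apply the classical spectral mixing bound of Fill~\cite{Fil91} and Chung~\cite{Chu05}, which holds for arbitrary (non-reversible) chains:
\[
\tau(\hat P) \;\lesssim\; \frac{\log(1/\pi_{\min})}{1 - \lambda_2(M_{\hat P})} \;=\; \frac{\log(1/\pi_{\min})}{\vec{\lambda}_2^{v*}(G)}.
\]
Applying the right-hand inequality of \autoref{thm:directed-vertex-expansion} to replace $\vec{\lambda}_2^{v*}(G)$ by a quantity of order $\vec{\psi}(G)^2 / \log(\Delta/\vec{\psi}(G))$ then yields the claimed upper bound on $\tau^*(G)$.

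For the lower bound, I would run a direct bottleneck argument at an optimal set for $\vec{\psi}(G)$. For any feasible chain $P$ and any $S \subseteq V$, the stationarity of $\pi$ together with the fact that every edge leaving $S$ enters $\partial^+(S)$ gives
\[
Q(S,\bar S) \;:=\; \sum_{u \in S,\, v \notin S} \pi(u) P(u,v) \;\leq\; \sum_{v \in \partial^+(S)} \sum_{u \in V} \pi(u) P(u,v) \;=\; \pi(\partial^+(S)).
\]
Choosing $S$ achieving $\vec{\psi}(G)$ with $\pi(S) \leq 1/2$, the edge conductance of $P$ is at most $\pi(\partial^+(S))/\pi(S) \leq \vec{\psi}(G)$. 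The standard Cheeger-type bottleneck lower bound on total-variation mixing (which is valid for non-reversible chains) then yields $\tau(P) \gtrsim 1/\vec{\psi}(G)$; this is in fact strictly stronger than the claimed bound, the extra $\log(1/\pi_{\min})$ factor in the stated inequality only making the lower bound weaker and thus easier to verify.

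The hard part of the whole circle of results lives in \autoref{thm:directed-vertex-expansion}; granting that theorem, \autoref{thm:fastest-mixing} is essentially bookkeeping that strings together (i) the Fill--Chung spectral mixing bound, (ii) the equivalence $\vec{\lambda}_2^{v*}(G) = \max_P \bigl(1 - \lambda_2(M_P)\bigr)$, and (iii) the elementary conductance lower bound. The only subtlety is carefully verifying (ii) under the change of variables $A = \Pi P$, together with the fact that $M_P$ is conjugate via $\Pi^{1/2}$ to the symmetric matrix $\Pi^{-1/2}\frac{A+A^T}{2}\Pi^{-1/2}$, so that the spectral gap appearing in \autoref{def:directed-vertex-primal} genuinely coincides with $1 - \lambda_2(M_P)$ as used in the Fill--Chung bound.
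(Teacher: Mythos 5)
Two steps, one in each direction, do not hold as written, though both are repairable.

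For the upper bound, you invoke ``$\tau(\hat P)\lesssim \log(1/\pi_{\min})/(1-\lambda_2(M_{\hat P}))$'' for an arbitrary non-reversible chain $\hat P$. Without a lazification step this is false: take $G$ to be the directed cycle (with self-loops) and $\pi$ uniform; the feasible chain maximizing the additive-reversibilization gap is the deterministic rotation, whose $M_P$ is the non-lazy walk on the undirected cycle with gap $\Theta(1/n^2)$, yet the rotation is periodic and never mixes. The Fill--Chung bound the paper uses (\autoref{thm:chung-mixing-time}) is stated for the lazy chain $(I+P)/2$; accordingly, the paper's proof takes $P=\Pi^{-1}A$ for an optimal $A$ in \autoref{def:directed-vertex-primal}, observes that $(I+P)/2$ is still feasible for \autoref{def:fastest-mixing} thanks to the self-loop assumption, and applies the mixing bound to that lazy chain, losing only a constant in the gap. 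Your identification $\vec{\lambda}_2^{v*}(G)=\max_P\bigl(1-\lambda_2(M_P)\bigr)$ via $A=\Pi P$ is correct and is exactly the paper's; you only need to add the laziness step.

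For the lower bound, the step ``$\pi(\partial^+(S))/\pi(S)\le\vec{\psi}(G)$'' is wrong in general: for the optimal $S$ with $\pi(S)\le 1/2$ one has $\vec{\psi}(G)=\min\{\pi(\partial^+(S)),\pi(\partial^+(\overline{S}))\}/\pi(S)$, so $\pi(\partial^+(S))/\pi(S)$ can be far larger than $\vec{\psi}(G)$ when the minimum is attained by $\pi(\partial^+(\overline{S}))$; your displayed inequality $Q(S,\overline{S})\le\pi(\partial^+(S))$ does not by itself bound the conductance of $P$ at $S$ by $\vec{\psi}(G)$. The missing ingredient is flow conservation at stationarity: $Q(S,\overline{S})=Q(\overline{S},S)\le\pi(\partial^+(\overline{S}))$, hence $Q(S,\overline{S})\le\min\{\pi(\partial^+(S)),\pi(\partial^+(\overline{S}))\}=\vec{\psi}(G)\cdot\pi(S)$. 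With that one line, your route is sound (the bottleneck bound you quote indeed needs no reversibility: start from $\pi$ restricted to $S$, note the iterates stay pointwise below $\pi/\pi(S)$, so at most $Q(S,\overline{S})/\pi(S)$ mass escapes per step), and it genuinely differs from the paper's. The paper instead lower-bounds the $\infty$-norm mixing time by $1/\vec{\psi}(G)$ through a two-case analysis --- in the case $\pi(\partial^+(S))>\pi(\partial^+(\overline{S}))$ it starts the walk from $\overline{S}$ and tracks mass entering $S$, which controls the relative error on $S$ but not total variation --- and then pays a $\log(1/\pi_{\min})$ factor converting to total-variation mixing via \cite{Gan06}. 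Your repaired argument would give the cleaner and stronger bound $\tau^*(G)\gtrsim 1/\vec{\psi}(G)$, improving on the stated lower bound rather than merely matching it.
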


Together, \autoref{thm:directed-vertex-expansion} and \autoref{thm:fastest-mixing} connect the reweighted second eigenvalue, directed vertex expansion, and fastest mixing time on directed graphs, 
in a similar way that classical spectral graph theory connects the second eigenvalue, undirected edge conductance, and mixing time on undirected graphs.

\subsubsection{Cheeger Inequality for Directed Edge Conductance}

Two key applications of Cheeger's inequality are to use the second eigenvalue to certify whether an undirected graph is an expander graph, and to provide a spectral algorithm for graph partitioning that is useful in many areas.
We present a new inequality for directed graphs for these purposes.

\begin{definition}[Directed Edge Conductance~\cite{Yos16,Yos19}] \label{def:directed-edge-conductance}
Let $G=(V,E)$ be a directed graph and $w : E \to \R_{\geq 0}$ be a weight function on the edges.
For a subset $S \subseteq V$, let $\delta^+(S) := \{uv \in E \mid u \in S \textrm{~and~} v \notin S\}$ be the set of outgoing edges of $S$ and $w(\delta^+(S))$ be the total edge weight on $\delta^+(S)$,
and $\vol_w(S) := \sum_{v \in S} \sum_{u \in V} (w(uv) + w(vu))$ be the volume of $S$.
The directed edge conductance of a set $S \subseteq V$ and of the graph $G$ are defined as 
\[
\vec{\phi}(S) := \frac{\min\big\{ w\big(\delta^+(S)\big), w\big(\delta^+(\overline{S})\big) \big\}}{\min\big\{\vol_w(S),\vol_w(\overline{S})\big\}}
\quad {\textrm and} \quad
\vec{\phi}(G) := \min_{\emptyset \neq S \subset V} \vec{\phi}(S).
\]
\end{definition}

We use the same approach to prove a Cheeger-type inequality for directed edge conductance\footnote{The reader may wonder whether it is possible to reduce directed edge conductance to directed vertex expansion, and use \autoref{thm:directed-vertex-expansion} to obtain a Cheeger-type inequality for directed edge conductance.
This is indeed possible, but the result obtained in this way will have a dependency on the maximum total degree $\Delta$ as in \autoref{thm:directed-vertex-expansion}, while the result that we present in \autoref{thm:directed-edge-conductance} has no such dependency.}.
To certify that a directed graph $G=(V,E)$ with a weight function $w : E \to \R_{\geq 0}$ has large edge conductance,
we find the best reweighted Eulerian subgraph $G'$ with edge weight $w'(u,v) \leq w(u,v)$ for each $uv \in E$, and use the edge conductance of $G'$ (with respect to the volumes using $w$) to provide a lower bound on the edge conductance of $G$.
Then, the edge conductance of $G'$ is reduced to the edge conductance of the underlying undirected graph $G''$ with edge weight $w''(uv) = \frac12(w'(uv)+w'(vu))$, and the second smallest eigenvalue of the normalized Laplacian matrix of $G''$ is used to provide a lower bound on the edge conductance of $G''$.
See \autoref{prop:directed-edge-easy} for a proof.

\begin{definition}[Maximum Reweighted Spectral Gap with Edge Capacity Constraints] \label{def:directed-edge-primal}
Given a directed graph $G=(V,E)$ and a weight function $w: E \to \R_{\geq 0}$,
the maximum reweighted spectral gap with edge capacity constraints is defined as
\begin{align*}
{\vec{\lambda}_2^{e*}}(G) ~:=~ \max_{A \geq 0} &~~~ \lambda_2\bigg( D^{-\frac12} \Big(D_A - \frac{A+A^T}{2}\Big) D^{-\frac12}\bigg) & 
\\
\st &~~~ A(u,v) = 0 & & \forall uv \notin E
\\
&~~~ \sum_{v \in V} A(u,v) = \sum_{v \in V} A(v,u) & & \forall u \in V
\\
&~~~ A(u,v) \leq w(uv) & & \forall uv \in E
\end{align*}
where $A$ is the adjacency matrix of the reweighted Eulerian subgraph, $D_A$ is the diagonal degree matrix of $(A+A^T)/2$ with $D_A(v,v) = \sum_{u \in V} \frac12 (A(u,v)+A(v,u))$, and
$D$ is the diagonal degree matrix of $G$ with $D(v,v) = \sum_{u \in V} (w(uv) + w(vu))$ equal to the total weighted degree of $v$ in $G$.
\end{definition}

Our second main result is a stronger Cheeger-type inequality that relates $\vec{\lambda}_2^{e*}(G)$ and $\vec{\phi}(G)$.

\begin{theorem}[Cheeger Inequality for Directed Edge Conductance] \label{thm:directed-edge-conductance}
For any directed graph $G=(V,E)$ and any weight function $w: E \to \R_{\geq 0}$,
\[
\vec{\lambda}_2^{e*}(G) \lesssim \vec{\phi}(G) \lesssim \sqrt{\vec{\lambda}_2^{e*}(G) \cdot  \log \frac{1}{\vec{\phi}(G)} }
\lesssim \sqrt{\vec{\lambda}_2^{e*}(G) \cdot  \log \frac{1}{\vec{\lambda}_2^{e*}(G)} }.
\]
\end{theorem}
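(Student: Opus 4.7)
The easy direction $\vec\lambda_2^{e*}(G) \lesssim \vec\phi(G)$ reduces to the standard undirected Cheeger inequality, as previewed before \autoref{def:directed-edge-primal}. For any feasible Eulerian $A$ with $A \leq w$ entrywise, flow conservation in $A$ yields
$$\sum_{u \in S,\, v \notin S} A(u,v) \;=\; \sum_{u \in S,\, v \notin S} A(v,u)$$
for every cut $S$, and both sums are at most $\min\{w_G(\delta^+(S)), w_G(\delta^+(\bar S))\}$ since $A \leq w$. The common value equals the undirected cut of $S$ in the symmetrization $H := (A+A^T)/2$, so the ``Cheeger constant'' of the normalized matrix $D^{-1/2}(D_A - H) D^{-1/2}$---whose numerator measures $H$-cuts but whose denominator uses the original $G$-degrees---is at most $\vec\phi(G)$. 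Standard undirected Cheeger then gives $\lambda_2 \leq 2\vec\phi(G)$, and maximizing over $A$ concludes.

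For the hard direction, my plan combines a minimax/LP-duality reformulation with sweep-cut rounding. By a Sion-type minimax applied to
$$R(f, A) \;:=\; \frac{\tfrac12 \sum_{(u,v) \in E} A(u,v)(f(u)-f(v))^2}{\sum_v D(v,v) f(v)^2},$$
we have $\vec\lambda_2^{e*}(G) = \min_{f \perp D\mathbf{1}} \max_{A \in \mathcal F} R(f, A)$, where $\mathcal F$ is the set of Eulerian $A \leq w$. For fixed $f$, the inner maximum is a linear program in $A$ whose LP dual introduces potentials $p : V \to \R$ (for flow conservation) and positive-part slacks (for capacities), giving
$$\max_{A \in \mathcal F} R(f, A) \;=\; \min_p \frac{\sum_{(u,v) \in E} w(u,v)\bigl(\tfrac12 (f(u)-f(v))^2 + p(u) - p(v)\bigr)_+}{\sum_v D(v,v) f(v)^2}.$$
Thus $\vec\lambda_2^{e*}(G)$ equals the minimum of a ``directed Dirichlet form'' Rayleigh quotient over pairs $(f, p)$, where $p$ effectively orients each edge so that the positive-part picks out contributions aligned with the sign of $f(u) - f(v)$.

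Next, take the optimizer $(f^*, p^*)$ and round to a directed cut via a sweep whose test function uses both $f^*$ and $p^*$---e.g., $g(v)^2 := (f^*(v))^2 + 2 p^*(v)$ (plus an additive constant), chosen so that $g(u)^2 - g(v)^2$ aligns with the directed-Dirichlet numerator on edge $(u,v)$. After centering $g$ by its $D$-weighted median and restricting to the smaller-volume side, draw a random threshold $t$ from a distribution proportional to $1/t$ on a dyadic range (such as $[\vec\phi(G)^2, \max_v g(v)^2]$), and set $S_t := \{v : g(v)^2 \geq t\}$. A Cauchy-Schwarz computation then bounds both $\mathbb E[w_G(\delta^+(S_t))]$ and $\mathbb E[w_G(\delta^+(\overline{S_t}))]$ in terms of the directed-Dirichlet numerator, and dividing by $\mathbb E[\vol_G(S_t)]$ produces some $S$ with $\vec\phi(S) = O\bigl(\sqrt{\vec\lambda_2^{e*}(G) \log(1/\vec\phi(G))}\bigr)$. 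The rightmost inequality of the theorem then follows from the easy direction, since $\log(1/\vec\phi(G)) \lesssim \log(1/\vec\lambda_2^{e*}(G))$.

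The main obstacle is that $\vec\phi(S)$ is the \emph{minimum} of the two directional cut sizes, whereas standard sweep-cut rounding naturally controls only their symmetric average. The LP-dual potential $p^*$ is the mechanism that enables simultaneous control of both directions: it reorients each edge so that either directional cut contribution can be charged to the same directed Dirichlet form, and the $\log(1/\vec\phi(G))$ factor is the price of the dyadic threshold distribution required for this coordinated accounting, analogous to multi-scale arguments elsewhere in the (directed) Cheeger-inequality literature.
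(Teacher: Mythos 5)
Your easy direction is fine and matches the paper's Proposition 3.2. The hard direction, however, has a genuine gap at its very first step: the claimed identity $\vec{\lambda}_2^{e*}(G) = \min_{f \perp D\mathbf{1}} \max_{A} R(f,A)$. If $f$ ranges over scalar functions $V \to \R$ (which is what your later sweep with $g(v)^2 = f^*(v)^2 + 2p^*(v)$ requires), then a Sion-type exchange does not apply: $R(\cdot,A)$ is a Rayleigh quotient, not quasiconvex in $f$, and the exchange genuinely fails. The quantity $\min_f \max_A R(f,A)$ over one-dimensional $f$ is what the paper calls $\vec{\lambda}_e^{(1)}(G)$, and by the paper's Remark 3.11 it can exceed $\vec{\lambda}_2^{e*}(G)$ by a factor of order $\log \alpha(G)$, so your identity is false in general and your rounding would only bound $\vec{\phi}(G)$ in terms of $\vec{\lambda}_e^{(1)}(G)$, not $\vec{\lambda}_2^{e*}(G)$. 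If instead you take $f$ vector-valued, the minimax exchange is valid (this is the paper's Proposition 3.4, proved after the exact SDP relaxation makes the domain convex), but then your proposal is missing the step that converts the $n$-dimensional embedding into a one-dimensional one. That dimension reduction is where the paper does its main technical work — Hoffman's circulation lemma and the asymmetric ratio $\alpha(G)$ give a ``large optimal property'' for the inner Eulerian-subgraph LP, which feeds into an adaptation of the Gaussian-projection argument of Jain--Pham--Vuong to get a $O(\log\alpha(G))$-dimensional solution, and then a best-coordinate choice costs another $O(\log\alpha(G))$ — and it is exactly where the $\log(1/\vec{\phi}(G))$ factor enters, via $\alpha(G) \le 1/\vec{\phi}(G)$.

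Relatedly, your accounting of the log factor is misplaced: you attribute it to a dyadic random-threshold distribution in the rounding, but the paper shows that once a one-dimensional solution is in hand, the rounding loses only a square root and a constant — first an $\ell_2^2$-to-$\ell_1$ reduction (Cauchy--Schwarz, giving the $\sqrt{\cdot}$), then an LP dual with potentials $r$ and a threshold rounding performed on the four functions built from $f \pm r$ shifted by weighted medians, with no logarithmic loss at all. Your LP-dual reformulation with potentials $p$ and the instinct to fold the potential into the sweep function are in the right spirit (the paper's new rounding twist is precisely to sweep on $f \pm r$ rather than on $f$ alone), but without the dimension-reduction step the argument does not reach $\vec{\lambda}_2^{e*}(G)$, and the mechanism you propose for generating the $\log(1/\vec{\phi}(G))$ term does not substitute for it.
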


An important point about \autoref{thm:directed-edge-conductance} is that there is no dependence on the maximum degree of $G$ as in \autoref{thm:directed-vertex-expansion} or on the number of vertices of $G$ as in~\cite{ACMM05,Yos19}. 
As a consequence, $\vec{\lambda}_2^{e*}(G)$ is a polynomial time-computable quantity that can be used to certify whether a directed graph has constant edge conductance.
This is similar to the role of the second eigenvalue in Cheeger's inequality to certify whether an undirected graph has constant edge conductance.

Also, as in the proof of Cheeger's inequality, the proof of \autoref{thm:directed-edge-conductance} provides a polynomial time algorithm to return a set $S$ with $\vec{\phi}(S) \leq \sqrt{\vec{\phi}(G) \log 1/\vec{\phi}(G)}$.
Since many real-world networks are directed (see~\cite{Yos16}),
we hope that this ``spectral'' algorithm will find applications in clustering and partitioning for directed graphs, as the classical spectral partitioning algorithm does in clustering and partitioning for undirected graphs~\cite{SM00,Lux07}.

\subsubsection{Generalizations of Cheeger Inequality for Directed Graphs}

For undirected graphs, there are several interesting generalizations of Cheeger's inequality: Trevisan's result~\cite{Tre09} that relate $\lambda_n$ to bipartite edge conductance, the higher-order Cheeger's inequality~\cite{LOT12,LRTV12} that relates $\lambda_k$ to $k$-way edge conductance, and the improved Cheeger's inequality~\cite{KLLOT13} that relates $\lambda_2$ and $\lambda_k$ to edge conductance.
Using reweighted eigenvalues for vertex expansion, close analogs of these results were obtained in~\cite{KLT22}, relating $\lambda_n^*$ to bipartite vertex expansion, $\lambda_k^*$ to $k$-way vertex expansion, and $\lambda_2^*$ and $\lambda_k^*$ to vertex expansion. 

We study whether there are close analogs of these results for directed graphs, using reweighted eigenvalues for directed vertex expansion in \autoref{def:directed-vertex-expansion} and directed edge conductance in \autoref{def:directed-edge-conductance}.
Perhaps surprisingly, we show that the natural analogs of Trevisan's result and higher-order Cheeger's inequality do not hold, but we obtain analogs of the improved Cheeger's inequality for directed vertex expansion and directed edge conductance.
See \autoref{sec:supporting-results} for these results.

\subsubsection{Cheeger Inequalities for Hypergraph Edge Conductance}
\label{sec:intro-cheeger-hypergraph}
We also formulate reweighted eigenvalues for hypergraphs and use them to derive Cheeger-type inequalities for hypergraphs,
as supporting results that reweighted eigenvalues provide a unifying approach to study expansion properties in different settings.

\begin{definition}[Hypergraph Edge Conductance~\cite{Lou15,CLTZ18}] \label{def:hypergraph-edge-conductance}
Let $H=(V,E)$ be a hypergraph and $w : E \to \R_{\geq 0}$ be a weight function on the hyperedges.
For a subset $S \subseteq V$, let $\delta(S) := \{e \in E \mid e \cap S \neq \emptyset {\rm~and~} e \cap \overline{S} \neq \emptyset\}$ be the edge boundary of $S$ and $w(\delta(S))$ be the total edge weight of $\delta(S)$,
and let $\vol_w(S) := \sum_{v \in S} \sum_{e: v \in e} w(e)$ be the volume of $S$.
The hypergraph edge conductance of a set $S \subseteq V$ and of the graph $G$ are defined as 
\[
\phi(S) := \frac{ w\big(\delta(S)\big) }{\min\big\{\vol_w(S),\vol_w(\overline{S})\big\}}
\quad {\textrm and} \quad
{\phi}(H) := \min_{\emptyset \neq S \subset V} {\phi}(S).
\]
\end{definition}

The idea is simply to consider the ``clique-graph'' of the hypergraph $H$, and find the best reweighted subgraph of the clique-graph to certify the edge conductance of $H$, subject to the constraint that the total weight of the ``clique-edges'' of a hyperedge $e$ is bounded by $w(e)$.

\begin{definition}[Maximum Reweighted Spectral Gap for Hypergraphs] \label{def:hypergraph-edge-primal}
Given a hypergraph $H=(V,E)$ and a weight function $w: E \to \R_{\geq 0}$,
the maximum reweighted spectral gap for $H$ is defined as
\begin{align*}
\gamma_2^*(H) ~:=~ \max_{A \geq 0} &~~~ \lambda_2\Big( D^{-\frac12} \big(D_A-A\big) D^{-\frac12}\Big) & 
\\
\st
&~~~ \sum_{u,v \in e} c(u,v,e) \leq w(e) & & \forall e \in E
\\
&~~~ A(u,v) = \sum_{e \in E: u,v \in e} c(u,v,e) & & \forall u,v \in V.
\end{align*}
In this formulation, there is a clique-edge variable $c(u,v,e)$ for each pair of vertices $u,v$ in a hyperedge $e$, with the constraints that the total weight of the clique-edges in $e$ is bounded by $w(e)$.
Then, $A$ is the adjacency matrix of the reweighted subgraph of the clique-graph with edge weight $A(u,v)$ equal to the sum of the weight of the clique-edges involving $u$ and $v$, $D_A$ is the diagonal degree matrix of $A$ with $D_A(v,v) = \sum_{u \in V} A(u, v)$,
and $D$ is the diagonal degree matrix of $H$ with $D(v,v) = \sum_{e \in E : v \in e} w(e)$ equal to the weighted degree of $v$ in $H$.
\end{definition}

There is a spectral theory for hypergraphs based on a continuous time diffusion process with several Cheeger-type inequalities proven~\cite{Lou15,CLTZ18}.
We show that the reweighted eigenvalue approach can be used to provide a simpler and more intuitive way to obtain similar results.

\begin{theorem}[Cheeger Inequality for Hypergraph Edge Conductance] \label{thm:hypergraph-edge-conductance}
For any hypergraph $H=(V,E)$ and any weight function $w: E \to \R_{\geq 0}$,
\[
\gamma_2^*(H) \lesssim \phi(H) \lesssim \sqrt{\gamma_2^{*}(H) \cdot  \log(r)  }
\]
where $r$ is the maximum size of a hyperedge of $H$.
\end{theorem}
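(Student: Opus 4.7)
The plan is to follow the same two-sided reweighted-eigenvalue template the paper uses for directed vertex expansion and directed edge conductance: derive the easy inequality by plugging a cut indicator into the Rayleigh quotient of an arbitrary feasible reweighting, and derive the hard inequality by extracting from the \emph{optimal} reweighting a scalar SDP embedding of $H$ on which a Cheeger-type threshold rounding can be run. The new ingredient compared to the graph case is that hypergraph rounding loses a factor of $\sqrt{\log r}$, coming from the ``$\max$ over pairs in a hyperedge'' structure of the embedding's objective.

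\textbf{Easy direction $\gamma_2^*(H)\lesssim\phi(H)$.} Let $S^*$ achieve $\phi(S^*)=\phi(H)$, define $g(v)=1/\vol_w(S^*)$ for $v\in S^*$ and $g(v)=-1/\vol_w(\overline{S^*})$ otherwise, so $g^TD\mathbf{1}=0$ and $f=D^{1/2}g$ is orthogonal to the trivial eigenvector $D^{1/2}\mathbf{1}$. For any feasible $(c,A)$,
\[
\lambda_2\!\left(D^{-1/2}(D_A-A)D^{-1/2}\right)\;\le\;\frac{g^T(D_A-A)g}{g^TDg}\;=\;\frac{\frac12\sum_{u,v}A(u,v)(g(u)-g(v))^2}{g^TDg}.
\]
Substituting $A(u,v)=\sum_{e\ni u,v}c(u,v,e)$, only hyperedges in $\delta(S^*)$ contribute, and on each such $e$ the maximum of $(g(u)-g(v))^2$ is $(1/\vol_w(S^*)+1/\vol_w(\overline{S^*}))^2$; the capacity constraint then bounds the numerator by $\frac12 w(\delta(S^*))(1/\vol_w(S^*)+1/\vol_w(\overline{S^*}))^2$. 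The denominator equals $1/\vol_w(S^*)+1/\vol_w(\overline{S^*})$, yielding ratio $\lesssim \phi(S^*)$. Maximising over $(c,A)$ gives $\gamma_2^*(H)\lesssim\phi(H)$.

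\textbf{Hard direction.} Let $(c^*,A^*)$ be optimal and $g^*=D^{-1/2}f^*$ with $f^*$ the $\lambda_2$-eigenvector of $D^{-1/2}(D_{A^*}-A^*)D^{-1/2}$, normalised so $\sum_v D(v,v)(g^*(v))^2=1$. The map $c\mapsto\lambda_2(c)$ is concave (a minimum of linear functionals in $c$), so the program defining $\gamma_2^*(H)$ is convex. By the envelope theorem / KKT, at the optimum $c^*(\cdot,\cdot,e)$ is supported only on pairs of $e$ achieving $\max_{u,v\in e}(g^*(u)-g^*(v))^2$, and the capacity of $e$ is saturated whenever this maximum is positive. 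Therefore
\[
\gamma_2^*(H)\;=\;\tfrac12\sum_e\sum_{u,v\in e}c^*(u,v,e)(g^*(u)-g^*(v))^2\;=\;\tfrac12\sum_e w(e)\max_{u,v\in e}(g^*(u)-g^*(v))^2,
\]
so $g^*$ realises the value of the scalar SDP relaxation of $\phi(H)$ from~\cite{Lou15,CLTZ18} at level $\gamma_2^*(H)$. Now apply a Cheeger-style threshold-rounding to $g^*$: sort vertices by $g^*$-value, sweep the $n-1$ threshold cuts $S_t$, and bound the expected hyperedge boundary $w(\delta(S_t))$ against $\sum_e w(e)(\max_{v\in e}g^*(v)-\min_{v\in e}g^*(v))$, converting the $\ell_1$ spread into the $\ell_2^2$ objective above by Cauchy--Schwarz and absorbing the loss from the $\max$ over $r$ vertices of a hyperedge via a $\sqrt{\log r}$ factor (this is the standard hypergraph rounding of~\cite{Lou15,CLTZ18}, which can be invoked as a black box on the scalar embedding $g^*$). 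This produces a cut $S$ with $\phi(S)\lesssim\sqrt{\gamma_2^*(H)\cdot\log r}$.

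\textbf{Main obstacle.} The conceptual work is the KKT argument identifying $\gamma_2^*(H)$ with the SDP value at $g^*$ -- one must be careful that the programme is convex in $c$ (hence $\max_c\min_g=\min_g\max_c$ via the envelope theorem, despite the ratio objective not being jointly concave-convex). The technical bottleneck is the hypergraph rounding itself: unlike graphs, a hyperedge can be cut even when only a single pair of its vertices straddles a threshold, and recovering the optimal $\sqrt{\log r}$ loss requires the layered/Cauchy--Schwarz argument of~\cite{Lou15,CLTZ18} rather than naive Cheeger sweeping (which would lose a factor of $r$).
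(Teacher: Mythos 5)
Your easy direction is fine: plugging the scaled cut indicator into the Rayleigh quotient of an arbitrary feasible reweighting and using the capacity constraints gives $\gamma_2^*(H)\lesssim\phi(H)$, essentially as the paper does elsewhere (the paper itself gets this direction for free from the identification described below). The hard direction, however, has a genuine gap at its central step: the claim that, by KKT/complementary slackness, the optimal $c^*$ is supported on pairs maximizing $(g^*(u)-g^*(v))^2$ with saturated capacities, so that $\gamma_2^*(H)=\tfrac12\sum_e w(e)\max_{u,v\in e}(g^*(u)-g^*(v))^2$ for the \emph{scalar} second eigenvector $g^*$ of the optimal reweighting. This is false. Take $H$ to be a single hyperedge $e=V$ with $|V|=r$ and $w(e)=1$: by symmetry and concavity the optimal $c^*$ spreads the unit capacity uniformly over all $\binom r2$ pairs, giving $\gamma_2^*(H)=\tfrac{2}{r-1}$ with $\lambda_2$ of multiplicity $r-1$; yet for a legitimate unit eigenvector such as $g^*=(1/\sqrt2,-1/\sqrt2,0,\dots,0)$ one has $w(e)\max_{u,v\in e}(g^*(u)-g^*(v))^2=2$, off from $\gamma_2^*(H)$ by a factor $\Theta(r)$, and $c^*$ is supported on \emph{all} pairs, not only the maximizing pair. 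The reason is that complementary slackness holds against the optimal dual object of the \emph{convexified} inner problem: at the optimum $\lambda_2$ typically has multiplicity, and $c^*$ is a best response to a multi-dimensional embedding (a PSD combination of eigenvectors), not to any single eigenvector; the minimax exchange you invoke is not valid over scalar embeddings (the constraint set is a sphere, not convex), which is exactly why the envelope-theorem shortcut fails.

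The paper avoids this by relaxing the inner minimization to vector-valued $f:V\to\R^n$, applying von Neumann's minimax theorem and LP duality in the inner maximization to obtain $\gamma_2^*(H)=\min_f \sum_e w(e)\max_{\{u,v\}\subseteq e}\norm{f(u)-f(v)}^2$ subject to $\sum_v d_w(v)f(v)=\vec 0$ and $\sum_v d_w(v)\norm{f(v)}^2=1$, and then observing this is literally the SDP $\tilde{\gamma}_2$ of \cite{CLTZ18}, so both sides of the theorem follow from their result (equivalently, one can redo the rounding as in \cite{KLT22}: dimension reduction from $\R^n$ to $\R$ losing $\log r$, then the $\ell_2^2$-to-$\ell_1$ and threshold-rounding steps). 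Note also that your accounting of where $\sqrt{\log r}$ is paid is off: once one has a genuine one-dimensional solution to this program, the sweep loses only a constant factor; the $\log r$ is incurred in projecting the $n$-dimensional solution to one dimension. Indeed, if your identity for the scalar $g^*$ were true, your argument would yield $\phi(H)\lesssim\sqrt{\gamma_2^*(H)}$ with no $\log r$ at all, contradicting the necessity of the $\log\Delta$ factor in the vertex-expansion inequality \eqref{e:vertex-Cheeger} via the standard reduction from vertex expansion to hypergraph conductance with $r\approx\Delta$.
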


We also obtain generalizations of Cheeger's inequalities for hypergraphs using other reweighted eigenvalues such as $\gamma_k^*(H)$ and a new result about improved Cheeger inequality for hypergraphs.
We will mention these results and compare the two approaches in \autoref{sec:related-work}.

\vspace{-1mm}

\subsection{Techniques}

Conceptually, our contribution is to come up with new spectral formulations for expansion properties in directed graphs and hypergraphs, and to show that the reweighted eigenvalue approach provides a unifying method to reduce expansion problems in more general settings to the basic setting of edge conductances in undirected graphs.

Technically, the proofs are based on the framework developed in~\cite{OZ22,KLT22,JPV22} in relating reweighted eigenvalues to undirected vertex expansion in \eqref{e:vertex-Cheeger}.
We briefly describe this framework and then highlight some new elements in the proofs for directed graphs.
There are two main steps in proving \eqref{e:vertex-Cheeger}.
The first step is to construct the dual SDP for the reweighted eigenvalue, and to do random projection to obtain a $1$-dimensional solution to the dual program.
The second step is to analyze the threshold rounding algorithm for the $1$-dimensional solution.

For directed graphs, we identify a key parameter for our analysis.

\begin{definition}[Asymmetric Ratio of Directed Graphs] \label{def:asymmetric-ratio}
Given an edge-weighted graph $G = (V, E, w)$, the asymmetric ratio of a set $S \subseteq V$ and of the graph $G$ are defined as
\[
\alpha(S) := \frac{w(\delta^+(S))}{w(\delta^+(\overline{S}))} 
\quad {\rm and} \quad
\alpha(G) := \max_{\emptyset \neq S \subset V} \alpha(S).
\]
Given a vertex-weighted graph $G = (V, E, \pi)$,
we define the $\pi$-induced weight of an edge $uv \in E$ as $w_{\pi}(uv) = \min\{\pi(u),\pi(v)\}$,
and the asymmetric ratio of a set $S \subseteq V$ and of the graph are defined as above using the edge weight function $w_{\pi}$. 
\end{definition}

We note that the asymmetric ratio of an edge-weighted graph was defined in~\cite{EMPS16} with the name ``$\alpha$-balanced'' and was used in the analysis of oblivious routing in directed graphs.
The asymmetric ratio is a measure of how close a directed graph is to an undirected graph for our purpose, as when $\alpha(G)=1$ the directed graph is Eulerian and so its edge conductance is the same as the edge conductance of the underlying undirected graph.

This parameter is defined to satisfy two useful properties.
The first is that it can be used to prove more refined Cheeger inequalities that
\begin{equation} \label{e:refined}
\vec{\phi}(G) \leq \sqrt{\vec{\lambda}_2^{e*}(G) \cdot \log \alpha(G)} 
\quad {\rm and} \quad 
\vec{\psi}(G) \leq \sqrt{\vec{\lambda}_2^{v*}(G) \cdot \log \big( \Delta \cdot \alpha(G) \big)}.
\end{equation}
The second is that it can be related to the directed edge conductance and directed vertex expansion such that $\alpha(G) \leq 1/\vec{\phi}(G)$ in \autoref{lem:asymmetric-ratio-edge} and $\alpha(G) \leq \Delta / \vec{\psi}(G)$ in \autoref{lem:asymmetric-ratio-vertex}.
Combining the two properties gives \autoref{thm:directed-edge-conductance} and \autoref{thm:directed-vertex-expansion}.

We highlight two new elements in the proofs of \eqref{e:refined},
one in dimension reduction and one in threshold rounding.
In the dimension reduction step, the Johnson-Lindenstrauss lemma can be used to project to a $1$-dimensional solution with a factor of $\log |V|$ loss as in~\cite{OZ22}.
For undirected vertex expansion, this was improved to a factor of $\log \Delta$ loss in two ways: one is the Gauassian projection method in~\cite{LRV13,KLT22}, while the other is a better analysis of dimension reduction for maximum matching in~\cite{JPV22}.
For directed edge conductance and directed vertex expansion, the SDP is more complicated and we do not know how to extend the Gaussian projection method to improve on the $\log |V|$ loss; see \autoref{sec:dimension-reduction-previous} for discussions.
Instead, we extend the approach in~\cite{JPV22} to prove that random projections only lose a factor of $\log \alpha(G)$ with high probability.
When the asymmetric ratio is small, we use Hoffman's result in \autoref{lem:Hoffman} about bounded-weighted circulations to prove a ``large optimal property'' of the SDPs (see \autoref{lem:large-optimal}), 
and use it to adapt the proof in~\cite{JPV22} for maximum weighted Eulerian subgraphs; see \autoref{sec:dimension-reduction} for details.

In the threshold rounding step of the $1$-dimensional solution,
we consider the dual SDP of $\vec{\lambda}_2^{v*}(G)$ and $\vec{\lambda}_2^{e*}(G)$ as in~\cite{KLT22}.
Unlike the dual SDP for undirected vertex expansion, 
these dual SDPs (see~\autoref{lem:l1-dual-vertex}) has some negative terms from some vertex potential function $r : V \to \R$.
The new idea in our threshold rounding is to not just consider the ordering defined by the vertex embedding function $f : V \to \R$ as usual,
but to consider the two orderings defined by $f \pm r$ and show that threshold rounding will work on one of these two orderings.
This idea also leads to a cleaner and nicer proof of the hard directions than that in~\cite{KLT22}, e.g.~without the preprocessing and postprocessing steps;
see \autoref{sec:rounding-algorithms} for details.

The generalizations of Cheeger inequalities for directed graphs and all Cheeger-type inequalities for hypergraphs are based on the same proofs of the corresponding results in~\cite{KLT22} with no new ideas involved.
We believe these results show that the reweighted eigenvalue approach provides a unifying method to lift the spectral theory for undirected edge conductance to obtain new results in more general settings in a systematic way.

Finally, we note that the maximum degree $\Delta$ for undirected vertex expansion, the asymmetric ratio $\alpha(G)$ for directed edge conductance and directed vertex expansion, and the maximum hyperedge size $r$ for hypergraph edge conductance all play the same role as a measure of how close the respective problem is to the basic problem of undirected edge conductance.
The trivial reductions to undirected edge conductance lose a factor of $\Delta$ for undirected vertex expansion, a factor of $\alpha(G)$ for directed edge-conductance (by just ignoring the directions), and a factor of $r$ for hypergraph edge conductance (by just considering the clique graph).
But the reductions through the reweighted eigenvalue approach only lose a factor of $\log \Delta$ in \eqref{e:vertex-Cheeger}, a factor of $\log \alpha(G)$ in \eqref{e:refined}, and a factor of $\log r$ in \autoref{thm:hypergraph-edge-conductance} respectively.

\subsection{Related Work} \label{sec:related-work}

There have been much interests in developing a spectral theory for directed graphs and hypergraphs, with many papers that we cannot review them all here.  
We describe the most relevant ones and compare to our work.

{\bf Nonlinear Laplacian for Directed Graphs}:
Yoshida~\cite{Yos16} introduced a nonlinear Laplacian operator for directed graphs and used it to define the following second eigenvalue
\[
\lambda_G = \inf_{x \perp \mu_G} \frac{\sum_{uv \in E} \Big( \big[ x_u / \sqrt{d_u} - x_v / \sqrt{d_v} \big]^+ \Big)^2}{\sum_{u \in V} x_u^2}
\]
where $\mu_G$ denotes the first eigenvector, $[a-b]^+$ denotes $\max\{a-b,0\}$, and $d_u$ is the total degree of $u$.
He considered the same directed edge conductance as in \autoref{def:directed-edge-conductance} and proved the Cheeger inequality that $\lambda_G / 2 \leq \vec{\phi}(G) \leq 2 \sqrt{\lambda_G}$, but did not give an approximation algorithm for computing $\lambda_G$ in~\cite{Yos16}.
Later, Yoshida~\cite{Yos19} gave an SDP approximation algorithm for computing $\lambda_{G}$ and this gives a polynomial time computable quantity $\tilde{\lambda}_G$ that satisfies
$\tilde{\lambda}_G \lesssim \vec{\phi}(G) \lesssim \sqrt{\tilde{\lambda}_G \log |V|}.$
We note that this is comparable but improved by our result\footnote{We remark that we can use the Johnson-Lindenstrauss lemma to do the dimension reduction step as in~\cite{OZ22}, and this would give $\vec{\phi}(G) \lesssim \sqrt{\vec{\lambda}_2^{e*}(G) \cdot \log|V|}$ as well.} for $\vec{\phi}(G)$ in \eqref{e:refined}, and cannot be used for certifying constant directed edge conductance as in \autoref{thm:directed-edge-conductance}.
We also note that this result is dominated by the SDP-based $O(\sqrt{\log|V|})$-approximation algorithm for $\vec{\phi}(G)$ in \cite{ACMM05} that we describe below.
To our knowledge, this is the only spectral formulation known in the literature that relates to directed edge conductance, and no spectral formulation was known for directed vertex expansion.
We also believe that the reweighted eigenvalue approach is simpler and more intuitive than the nonlinear Laplacian operator approach.

{\bf Approximation Algorithms Using Semidefinite Programming}:
In~\cite{ACMM05}, Agarwal, Charikar, Makarychev and Makarychev gave an SDP-based $O\big(\sqrt{\log|V|}\big)$-approximation algorithm for the directed sparsest cut problem on a directed graph $G=(V,E)$,
where the objective is to find a set $S$ that minimizes $|\delta^+(S)| / \min\{|S|,|\overline{S}|\}$.
We note that in the unweighted case, directed vertex expansion and directed edge conductance can be reduced to directed sparsest cut via standard reductions.
In the weighted case, the SDP for directed sparsest cut can be slightly modified to obtain a $O\big(\sqrt{\log |V|}\big)$-approximation algorithm for directed edge conductance; see \autoref{sec:ACMM}.
To our knowledge, it was not known that the SDP in~\cite{ACMM05} can be used to certify whether a directed graph has constant edge conductance as in \autoref{thm:directed-edge-conductance}, as the analysis using triangle inequalities based on~\cite{ARV09} has a $\sqrt{\log |V|}$ factor loss.
We show in \autoref{sec:ACMM} that the SDP in~\cite{ACMM05} is stronger than the SDP for directed edge conductance in \autoref{prop:lambda2-edge}.
Therefore, using the new analysis through asymmetric ratio in this paper, 
we also prove that the SDP in~\cite{ACMM05} provides a polynomial time computable quantity to certify constant directed edge conductance as in \autoref{thm:directed-edge-conductance}.

{\bf Cheeger Constant for Directed Graphs}:
Fill~\cite{Fil91} and Chung~\cite{Chu05} defined some symmetric matrices for directed graphs, and related their eigenvalues to Cheeger's constant and to mixing time.
Their formulations are similar to each other, but Chung's formulation is closer and more consistent with ours, as her work is also based on an Eulerian reweighted subgraph (which was called a circulation in~\cite{Chu05}) that we describe below.

Given a directed graph $G=(V,E)$ with a weight function $w : E \to \R_{\geq 0}$, let $P$ be the transition matrix of the ordinary random walks on $G$ with $P(u,v) = w(uv) / \sum_{v \in V} w(uv)$ for $uv \in E$.
Suppose $G$ is strongly connected, then there is a unique stationary distribution $\pi : V \to \R_{+}$ of the random walks on $G$ such that $\pi^T P = \pi^T$.
Let $\Pi := \diag(\pi)$.  Fill~\cite{Fil91} defined the product and the sum matrices as
$M(P) := P \Pi^{-1} P^T \Pi$ and
$A(P) := ( P + \Pi^{-1} P^T \Pi ) / 2$.
Chung~\cite{Chu05} noted that if the weight of an edge $uv$ is defined as $f(u,v) = \pi(u) \cdot P(u,v)$, then the weighted directed graph $G' = (V,E,f)$ is Eulerian such that $\sum_{u:uv \in E} f(u,v) = \sum_{w:vw \in E} f(v,w)$ for all $v \in V$.
Then she used the underlying weighted undirected graph to define the Laplacian of a directed graphs as
\begin{equation} \label{e:Chung-Laplacian}
\tilde{\L} = I - \big(\Pi^{1/2} P \Pi^{-1/2} + \Pi^{-1/2} P^T \Pi^{1/2}\big)/2
=  I - \Pi^{-\frac12} \Big(F+F^T\Big) \Pi^{-\frac12}/2
\vspace{-2mm}
\end{equation}
where $F = \Pi P$ is the adjacency matrix of $G'$.
Note that the spectrums of $A(P)$ and $\tilde{L}$ are essentially the same, as $P+\Pi^{-1}P^T\Pi$ and $\Pi^{1/2} P \Pi^{-1/2} + \Pi^{-1/2} P^T \Pi^{1/2}$ are similar matrices.
Note also that $\tilde{\L}$ is exactly the same as the normalized Laplacian matrix in the objective function in \autoref{def:directed-vertex-primal}.
The Cheeger constant of a directed graph~\cite{Fil91,Chu05} is defined as
\vspace{-1mm}
\begin{equation} \label{e:Cheeger-constant}
h(G) := \min_{S : S \neq \emptyset, S \neq V} h(S)
\quad {\rm where} \quad
h(S) 
= \frac{\sum_{u,v:u \in S, v \notin S} \pi(u) P(u,v)}{\min \{\pi(S), \pi(\overline{S})\}},
\vspace{-1mm}
\end{equation}

and Chung~\cite{Chu05} proved that 
$\lambda_2(\tilde{\L})/2 \leq h(G) \leq \sqrt{2\lambda_2(\tilde{\L})}$.

The main difference between our formulations and Chung's formulation is that we search for an {\em optimal} reweighting while Chung used a specific vertex-based reweighing by the stationary distribution.
We note that the Cheeger constant in \eqref{e:Cheeger-constant} could be very different from the directed edge conductance in \autoref{def:directed-edge-conductance} and the directed vertex expansion in \autoref{def:directed-vertex-expansion}; see \autoref{sec:Chung} for some examples.
We remark that many subsequent works used Cheeger constant as the objective for clustering and partitioning for directed graphs, and these examples illustrate their limitations in finding sets of small directed edge conductance or directed vertex expansion which are much more suitable notions for clustering and partitioning (see \cite{Yos16} for related discussions).

{\bf Mixing Time and Fastest Mixing Time:}
A main result in~\cite{Fil91,Chu05} is to use the second eigenvalue of $M(P), A(P), \lambda_2(\tilde{\L})$ to bound the mixing time of the ordinary random walks on $G$.
We state the result using Chung's formulation as it is closer to our formulation in \autoref{def:directed-vertex-expansion}.

\begin{theorem}[Bounding Mixing Time by Second Eigenvalue of Directed Graphs~\cite{Fil91,Chu05}] \label{thm:chung-mixing-time}
Let $G$ be a strongly connected directed graph $G=(V,E)$ with a weight function $w: E \to \R_{\geq 0}$, and $P$ be the transition matrix of the ordinary random walks on $G$ with $P(u,v) = w(uv) / \sum_{v \in V} w(uv)$ for $uv \in E$. 
Then the mixing time of the lazy random walks of $G$ with transition matrix $(I+P)/2$ to the stationary distribution $\pi$ is
\vspace{-2mm}
\[
\tau\Big( \frac{I+P}{2} \Big)
\lesssim \frac{1}{\lambda_2(\tilde{\L})} \cdot \log \Big( \frac{1}{\pi_{\min}} \Big)
\vspace{-2mm}
\]
where $\lambda_2(\tilde{\L})$ is the second smallest eigenvalue of the Laplacian in \eqref{e:Chung-Laplacian} and $\pi_{\min} = \min_{v \in V} \pi(v)$.
\end{theorem}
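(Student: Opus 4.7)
The plan is to control the $L^2(\pi)$-distance of the lazy walk's distribution from the stationary $\pi$ via the variational characterization of $\lambda_2(\tilde{\L})$ as the spectral gap of the additive reversibilization $A(P) = (P + \Pi^{-1}P^T\Pi)/2$. Let $p_t$ denote the distribution at time $t$ of the lazy walk with kernel $P_L = (I+P)/2$, set $f_t = p_t/\pi$ entrywise, and define $\chi(t) := \|f_t - 1\|_{L^2(\pi)}^2$. A direct calculation gives $f_{t+1} = P_L^\ast f_t$, where $P_L^\ast = (I+P^\ast)/2$ is the $L^2(\pi)$-adjoint of $P_L$, and since $P_L^\ast \mathbf{1} = \mathbf{1}$ we have $f_t - 1 = (P_L^\ast)^t(f_0 - 1)$.

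The heart of the proof is a one-step $L^2(\pi)$-contraction: for every $g \in L^2(\pi)$ with $\mathbb{E}_\pi[g] = 0$,
\[
\|P_L^\ast g\|_{L^2(\pi)}^2 \leq \bigl(1 - \tfrac{1}{2}\lambda_2(\tilde{\L})\bigr) \|g\|_{L^2(\pi)}^2.
\]
Expanding the square gives $\|P_L^\ast g\|_\pi^2 = \tfrac{1}{4}\bigl(\|g\|_\pi^2 + 2\langle g, P^\ast g\rangle_\pi + \|P^\ast g\|_\pi^2\bigr)$. Three observations combine to control this. First, $P^\ast$ is itself a stochastic kernel with invariant distribution $\pi$ (the time-reversed chain), hence a contraction in $L^2(\pi)$: $\|P^\ast g\|_\pi \leq \|g\|_\pi$. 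Second, taking transposes of scalars shows $\langle g, Pg\rangle_\pi = \langle g, P^\ast g\rangle_\pi = \langle g, A(P) g\rangle_\pi$. Third, transferring $\tilde{\L}$ through the conjugation $g \mapsto \Pi^{1/2}g$ identifies $\lambda_2(\tilde{\L})$ as the Poincar\'e constant of the symmetric form $\langle g, (I - A(P))g\rangle_\pi$ on the subspace orthogonal to constants, which gives $\langle g, A(P)g\rangle_\pi \leq (1 - \lambda_2(\tilde{\L})) \|g\|_\pi^2$ whenever $\mathbb{E}_\pi[g] = 0$. Substituting these two bounds into the expansion yields the claimed contraction.

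Iterating produces $\chi(t) \leq \bigl(1 - \tfrac{1}{2}\lambda_2(\tilde{\L})\bigr)^t \chi(0)$, and for the worst-case start $p_0 = \delta_v$ a direct computation gives $\chi(0) = 1/\pi(v) - 1 \leq 1/\pi_{\min}$. Cauchy--Schwarz converts to total variation: $\|p_t - \pi\|_{\mathrm{TV}} \leq \tfrac{1}{2} \sqrt{\chi(t)}$. Demanding this be at most a fixed constant and solving for $t$ yields $\tau\bigl((I+P)/2\bigr) \lesssim \lambda_2(\tilde{\L})^{-1} \log(1/\pi_{\min})$. The delicate step is the one-step contraction: for a non-reversible $P$ we have no direct spectral control on $\|P^\ast g\|_\pi$ beyond the trivial contraction bound, and the laziness is essential for converting spectral information about the symmetric part $A(P)$ into a genuine $L^2$-contraction—without the $I/2$ term, negative real parts of eigenvalues of $P$ (as in bipartite-like chains) would invalidate the bound.
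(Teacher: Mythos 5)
Your proof is correct: the one-step $L^2(\pi)$-contraction for the lazy adjoint kernel, obtained by combining the trivial contraction $\|P^\ast g\|_\pi \le \|g\|_\pi$ with the Poincar\'e bound $\langle g, A(P)g\rangle_\pi \le (1-\lambda_2(\tilde{\L}))\|g\|_\pi^2$ on mean-zero functions, followed by the $\chi^2$-to-total-variation step with $\chi(0)\le 1/\pi_{\min}$, is exactly the standard argument. The paper itself gives no proof of this statement --- it is quoted from Fill and Chung --- and your chi-square contraction via the additive reversibilization, with laziness used to absorb the lack of spectral control on the non-reversible part, is essentially the argument in those cited works, so there is nothing to reconcile.
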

\vspace{-1mm}
We will use \autoref{thm:chung-mixing-time} to bound the fastest mixing time for general Markov chains in \autoref{thm:fastest-mixing}.
The fastest mixing time problem of reversible Markov chains was introduced by Boyd, Diaconis, and Xiao~\cite{BDX04}.  
This is a well-motivated problem in the study of Markov chains and has generated considerable interests (see the references in~\cite{OZ22}),
but there were no known combinatorial characterization of the fastest mixing time for quite some time.  
Recently, Oleskar-Taylor and Zanetti~\cite{OZ22} discovered a new Cheeger-type inequality relating reweighted second eigenvalue $\lambda_2^*$ and vertex expansion, and used it to give a combinatorial characterization of the fastest mixing time of reversible Markov chains by the vertex expansion of the graph. 
\autoref{thm:fastest-mixing} is a significant generalization of their result to general Markov chains, and we believe it is of independent interest.

{\bf Other Cheeger-Type Inequalities for Directed Graphs}:
Chan, Tang and Zhang~\cite{CTZ15} gave a higher-order Cheeger inequality for directed graphs.
Roughly speaking, they showed that there are $k$ disjoint subsets $S_1, \ldots, S_k \subseteq V$ with $\lambda_k(\tilde{\L}) \lesssim h(S_i) \lesssim k^2 \cdot \sqrt{\lambda_k(\tilde{\L})}$ for $1 \leq i \leq k$, where $h(S_i)$ is the Cheeger constant in \eqref{e:Cheeger-constant} and $\lambda_k(\tilde{\L})$ is the $k$-th smallest eigenvalue of the Laplacian in \eqref{e:Chung-Laplacian}.
The proof is a direct application of the higher-order Cheeger inequality for undirected graphs on the reweighted subgraph by the stationary distribution.
In \autoref{sec:Chung}, we show an example that rules out the possibility of having a higher-order Cheeger inequality for directed graphs relating $\lambda_k(\tilde{\L})$ to $k$-way directed edge conductance.

{\bf Other Hermitian Matrices of Directed Graphs}:
Besides the matrices in~\cite{Fil91,Chu05}, there are other Hermitian matrices associated to a directed graph studied in the literature.
Guo and Mohar~\cite{GM17} and Liu and Li~\cite{LL15} defined the Hermitian adjacency matrix $H$ of a directed graph as $H(u,v)=1$ if both $uv, vu \in E$, $H(u,v)=\imath$ if $uv \in E$ and $vu \notin E$ where $\imath$ is the imaginary number, $H(u,v)=-\imath$ if $uv \notin E$ and $vu \in E$, and $H(u,v)=0$ if both $uv, vu \notin E$.
There are also other Hermitian matrices defined for clustering directed graphs~\cite{LS20,CLSZ20} and for the Max-2-Lin problem~\cite{LSZ19}.
We confirm that there are no known relations between the eigenvalues of these Hermitian matrices and the expansion properties of a directed graph.

{\bf Directed Laplacian Solver Using Eulerian Reweighting}:
We note that the idea of reducing the problem for a directed graph to an Eulerian directed graph was also used in directed Laplacian solvers~\cite{CKP+16,CKP+17}.
As in~\cite{Chu05}, they also use the same reweighting by the stationary distribution to obtain an Eulerian graph from a directed graph.
(Furthermore, they introduced a notion of spectral sparsification of Eulerian directed graphs.)
We believe that the idea of reducing to Eulerian directed graphs and the concept of asymmetric ratio will find more applications in solving problems on directed graphs.

{\bf Spectral Theory for Hypergraphs}:
Louis~\cite{Lou15} and Chan, Louis, Tang, Zhang~\cite{CLTZ18} developed a spectral theory for hypergraphs.
They defined a continuous time diffusion process on a hypergraph $H=(V,E)$ and used it to define a {\em nonlinear} Laplacian operator and its eigenvalues $\gamma_1 \leq \gamma_2 \leq \ldots \leq \gamma_{|V|}$.
Then they derived a Cheeger inequality $\frac12 \gamma_2 \leq \phi(H) \leq \sqrt{2 \gamma_2}$,
where $\phi(H)$ is the hypergraph edge conductance of $H$ in \autoref{def:hypergraph-edge-conductance}.
But the quantity $\gamma_2$ is not polynomial time computable, 
and a semidefinite programming relaxation of $\gamma_2$ was used to output a set of edge conductance $O(\sqrt{\phi(H) \cdot \log r})$ where $r$ is the maximum size of a hyperedge.
They also proved an analog of higher-order Cheeger inequality for hypergraph edge conductance, such that for any $\epsilon \geq 1/k$ there are disjoint subsets $S_1, \ldots, S_{(1-\eps)k}$ with
\vspace{-2mm}
\begin{equation} \label{e:Louis-k}
\phi(S_i) \lesssim k^{2.5} \cdot \eps^{-1.5} \cdot \log k \cdot \log\log k \cdot \log r \cdot \sqrt{\tilde{\gamma}_k}
\end{equation}
for all $i \le (1-\eps)k$, where $\tilde{\gamma}_k$ can be thought of as a relaxation of $\gamma_k$.
They also gave an improved approximation algorithm for the small-set hypergraph edge conductance problem.

Using the reweighted eigenvalue approach, we can define the maximum reweighted $k$-th eigenvalue $\gamma_k^*$ as in \autoref{def:hypergraph-edge-primal}, and proved the following analog of higher-order Cheeger inequality for hypergraph edge conductance in \autoref{thm:higher-order-hypergraphs}:
For any $\eps \geq 1/k$, there are disjoint subsets $S_1, \ldots, S_{(1-\eps)k}$ with 
\vspace{-2mm}
\[
\phi(S_i) \lesssim \sqrt{k} \cdot \eps^{-4} \cdot \log{k} \cdot \sqrt{\log{r}} \cdot \sqrt{\gamma^*_k}
\]
for all $i \le (1-\eps)k$. This bound is comparable to that in~\cite{CLTZ18} when $\eps \approx 1/k$, and is an improvement when $\eps = \Theta(1)$ by a factor of more than $k^2$.
This also improves the approximation algorithm for the small-set hypergraph edge conductance problem in~\cite{CLTZ18} by a factor of more than $k$.
In addition, 
we also prove an analog of the improved Cheeger's inequality~\cite{KLLOT13} for hypergraphs.  
See \autoref{sec:hypergraphs} for the precise statements of all these results.

Compared to the spectral theory in~\cite{Lou15,CLTZ18} for hypergraphs using the continuous time diffusion process, we believe that the reweighted eigenvalue approach is simpler and more intuitive.
The definitions of the hypergraph diffusion process and its eigenvalues are quite technically involved and required considerable effort to make rigorous~\cite{CTWZ19}.
The reweighted eigenvalue approach allows us to recover and improve their results on hypergraph partitioning, and also to obtain a new result.
Since their spectral theory for hypergraph partitioning is gaining more attention in the machine learning community lately (e.g.~\cite{LM18}), we believe that it would be beneficial to have an alternative approach that is easier to understand and to prove new results and to have efficient implementations.

Finally, as a technical remark, we note that some careful reweighting schemes are crucially used in the construction of the diffusion process~\cite{Lou15,CLTZ18}, and also in recent exciting developments in hypergraph spectral sparsification~\cite{CKN20,KKTY21} (called balanced weight assignments).
This suggests that the concept of {\em reweighting} is central to these recent developments, and it would be very interesting to find connections between the different reweighting methods used in this work and these previous works.

\section{Preliminaries} \label{sec:prelim}

Given two functions $f, g$, we use $f \lesssim g$ to denote the existence of a positive constant $c > 0$, such that $f \le c \cdot g$ always holds.
We use $f \asymp g$ to denote $f \lesssim g$ and $g \lesssim f$.
For positive integers $k$, we use $[k]$ to denote the set $\{1, 2, \dots, k\}$.
For a function $f: X \rightarrow \R$, $\supp(f)$ denotes the domain subset on which $f$ is nonzero.
For an event $E$, $\one[E]$ denotes the indicator function that is $1$ when $E$ is true and $0$ otherwise.

\subsubsection*{Graphs and Hypergraphs}

The following definitions for hypergraphs also apply to undirected graphs.
Let $H = (V, E)$ be a hypergraph. A hyperedge $e \in E$ of $H$ is a nonempty subset of its vertices, i.e. $e \subseteq V$.
Throughout this paper, we use $n := |V|$ to denote the number of vertices and $m := |E|$ to denote the number of hyperedges.
Let $w : E \to \R_{\geq 0}$ be a weight function on the edges.
The weighted degree of a vertex $v$ is defined as $d_w(v) := \sum_{e:e \ni v} w(e)$.
The maximum degree of a graph is denoted as $\Delta := \max_{v \in V} d_{\vec{1}}(v)$, i.e.~the maximum unweighted degree of a vertex.
Let $S \subset V$ be a nonempty subset of vertices.
The edge boundary of $S$ is defined as $\delta(S) := \{e \in E \mid e \cap S \neq \emptyset {\rm~and~} e \cap \overline{S} \neq \emptyset\}$.
The volume of $S$ is defined as $\vol_w(S) := \sum_{v \in S} d_w(v)$.
The edge conductance of $S$ and of $G$ are defined as in \autoref{def:hypergraph-edge-conductance}.

The vertex boundary of $\emptyset \neq S \subset V$ is defined as $\partial(S) := \{v \in \overline{S} \mid \exists e \in \delta(S) \text{ with } v \in e\}$.
Let $\pi : V \to \R_{\geq 0}$ be a weight function on the vertices.
We write $\pi(S) := \sum_{v \in S} \pi(v)$ for a subset $S \subseteq V$.
In this paper, from \autoref{def:directed-vertex-expansion}, we define
$\psi(S) := \min\{ \pi(\partial(S)), \pi(\partial(\overline{S})) \} / \min\{\pi(S),\pi(\overline{S})\}$ and
$\psi(H) := \min_{\emptyset \neq S \subset V} \psi(S)$.
In~\cite{KLT22}, the weighted vertex expansion of $S$ and of $H$ are defined as 
$\psi(S) := \pi(\partial S) / \pi(S)$ and 
$\psi(H) := \min\{ 1, \min_{S \subseteq V:0 < \pi(S) \leq 1/2} \psi(S)\}$.
We remark that the two definitions of $\psi(H)$ can be shown to be within a factor of $2$ of each other.
The definition from \autoref{def:directed-vertex-expansion} has the advantages that $\psi(S) \leq 1$ for all $S \subseteq V$ and that it is more convenient in the threshold rounding proof in \autoref{sec:threshold-rounding}.

\subsubsection*{Directed Graphs}

Let $G = (V,E)$ be a directed graph.
Let $w : E \to \R_{\geq 0}$ be a weight function on the edges.
The weighted indegree of a vertex $v$ is defined as $d_w^-(v) := \sum_{u:uv \in E} w(uv)$
, the weighted outdegree of $v$ is defined as $d_w^+(v) := \sum_{u:vu \in E} w(vu)$, and the total weighted degree of $v$ is defined as $d_w(v) := d_w^+(v) + d_w^-(v)$. 
The maximum total degree of a graph is denoted as $\Delta := \max_{v \in V} d_{\vec 1}(v)$, i.e.~the maximum unweighted total degree of a vertex.
A directed graph is called Eulerian if $d_w^+(v) = d_w^-(v)$ for every $v \in V$.

Let $S \subset V$ be a nonempty subset of vertices.
The set of outgoing edges of $S$ is defined as $\delta^+(S) := \{uv \in E \mid u \in S {\rm~and~} v \notin S\}$, and the set of incoming edges to $S$ is defined as $\delta^-(S) := \delta^+(\overline{S})$.
The volume of $S$ is defined as $\vol_w(S) := \sum_{v \in S} d_w(v)$.
The directed edge conductance $\vec{\phi}(S)$ and $\vec{\phi}(G)$ are defined as in \autoref{def:directed-edge-conductance}.
The set of out-neighbors of $S$ is defined as $\partial^+(S) := \{v \notin S \mid \exists u \in S \text{ with } uv \in E\}$.
Let $\pi : V \to \R_{\geq 0}$ be a weight function on the vertices.
We write $\pi(S) := \sum_{v \in S} \pi(v)$ for a subset $S \subseteq V$.
The directed vertex expansion $\vec{\psi}(S)$ and $\vec{\psi}(G)$ are defined as in \autoref{def:directed-vertex-expansion}.

\subsubsection*{Random Walks}

Given a finite state space $X$, a Markov chain on $X$ is represented by a matrix $P \in \R^{X \times X}$, where $P(u, v)$ is the probability of traversing from state $u$ to state $v$ in one step. 
Thus, $P$ has nonnegative entries and satisfies $\sum_{v \in X} P(u, v) = 1$ for all $u \in X$.
Given an undirected or a directed graph $G=(V,E)$ with a weight function $w : E \to \R_{\geq 0}$, the transition matrix $P \in \R^{V \times V}$ of the random walks on $G$ is defined as $P(u,v) = w(uv) / \sum_{x \in V} w(ux)$.
A distribution $\pi: X \rightarrow \R$ is said to be a stationary distribution of $P$ if $\pi^T P = \pi^T$. 
It is well-known that if $G$ is strongly connected and aperiodic then there is a unique stationary distribution $\pi$.
Furthermore, if we do lazy random walks $P := (I+P')/2$ on a strongly connected graph, then $p_0 P^t \to \pi$ for any initial distribution $p_0$ when $t \to \infty$.
The mixing time measures how fast $p_0 P^t$ converges to $\pi$.
For a given Markov chain $P$ and $\eps \in (0, 1)$, the $\eps$-mixing time is defined as
\[
  \tau_{\eps}(P) := \min \Big\{t : \max_{p_0: V \rightarrow \R_{\ge 0}} \sum_{v \in V} \left| p_t(v) - \pi(v) \right| < \eps \Big\},
\]
where $p_0$ is an initial distribution on $V$ and $p_t$ denotes $p_0 P^t$.
The fastest mixing time $\tau^*(G)$ in \autoref{thm:fastest-mixing} is defined to be the minimum $(1/e)$-mixing time of a Markov chain on $V$, which is supported on $E$ and has stationary distribution $\pi$ as stated in \autoref{def:fastest-mixing}.

\subsubsection*{Spectral Graph Theory}

Given an undirected graph $G = (V, E)$ with a weight function $w : E \to R_{\geq 0}$, its adjacency matrix $A = A(G)$ is an $n \times n$ matrix where the $(u, v)$-th entry is $w(uv)$. 
The Laplacian matrix is defined as $L := D - A$, where $D:=\diag(\{d_w(v)\}_{v \in V})$ is the diagonal degree matrix. 
For any vector $x \in \R^n$, the Laplacian matrix has a useful quadratic form $x^T L x = \sum_{uv \in E} w(uv) \cdot (x(u)-x(v))^2$.

The normalized adjacency matrix is defined as $\mathcal{A} = D^{-1/2} A D^{-1/2}$, and the normalized Laplacian matrix is defined as $\mathcal{L} := I - \mathcal{A}$. 
Let $\lambda_1(\L) \leq \lambda_2(\L) \leq \cdots \leq \lambda_n(\L)$ be the eigenvalues of $\L$.
It is known that $\lambda_1(\L)=0$ with eigenvector $D^{1/2} \vec{1}$, and
\begin{equation} \label{e:lambda2}
\lambda_2(\L) = \min_{g \perp D^{1/2} \vec{1}} \frac{g^T \L g}{g^Tg}
= \min_{f \perp D \vec{1}} \frac{f^T L f}{f^T D f}
= \min_{f \perp D \vec{1}} \frac{\sum_{uv \in E} w(uv) \cdot (f(u)-f(v))^2}{\sum_v d_w(v) \cdot f(v)^2}.
\end{equation}

\subsubsection*{Semidefinite Programming}

Given a real symmetric matrix $M$, we say that $M$ is positive semidefinite (PSD) if $v^T M v \ge 0$ for all $v \in \R^n$, and we write $M \succeq 0$. 
Equivalently, $M \succeq 0$ if all its eigenvalues are nonnegative.
Also equivalently, $M \succeq 0$ if there exists a real matrix $X$ such that $M = X^T X$.
Let $x_i \in \R^n$ be the $i$-th column of $X$.
Then $M$ is called the Gram matrix of $x_1, \ldots, x_n \in \R^n$ as $M(i,j) = \inner{x_i}{x_j}$ for all $i,j \in [n]$.

Semidefinite programs (SDP) are optimization problems where the domain is $\{M \in \R^{n \times n} \mid M \succeq 0\}$, and the constraints and objective function are affine functions of the entries of $M$.
Equivalently, SDP's can be written as vector programs, where we optimize over the set of vectors $f_1,\ldots,f_n \in \R^n$, and the constraints and objective function are affine functions of the inner projects $\{\inner{f_i}{f_j}\}_{i,j \in [n]}$.

We will use the following exact SDP formulation of the second eigenvalue in \eqref{e:lambda2}:
\begin{equation} \label{e:lambda2-SDP}
\lambda_2(\L) 
= \min_{f: V \to \R^n,~\sum_{v} d_w(v) \cdot f(v) = \vec{0}} \frac{\sum_{uv \in E} w(uv) \cdot \norm{f(u)-f(v)}^2}{\sum_v d_w(v) \cdot \norm{f(v)}^2}.
\end{equation}

We will also use von Neumann's minimax theorem to construct semidefinite programming relaxations for reweighted second eigenvalues.
 
\begin{theorem}[Von Neumann's Minimax Theorem (see \cite{Sim95})] \label{thm:von-Neumann}
Let $X,Y$ be compact convex sets.
If $f$ is a real-valued continuous function on $X \times Y$ with 
$f(x,\cdot)$ concave on $Y$ for all $x \in X$
and $f(\cdot,y)$ convex on $X$ for all $y \in Y$,
then 
\[
\min_{x \in X} \max_{y \in Y} f(x,y) = \max_{y \in Y} \min_{x \in X} f(x,y).
\]
\end{theorem}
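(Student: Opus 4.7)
The plan is to prove each of the two inequalities separately. Weak duality, the direction $\max_{y} \min_{x} f(x,y) \leq \min_{x} \max_{y} f(x,y)$, is immediate: for any $x_0 \in X$ and $y_0 \in Y$ one has $\min_{x \in X} f(x,y_0) \leq f(x_0, y_0) \leq \max_{y \in Y} f(x_0, y)$, and taking supremum over $y_0$ on the left and infimum over $x_0$ on the right yields the inequality. Compactness of $X, Y$ together with continuity of $f$ ensures that all the extrema in the statement are attained (so one may write $\min$/$\max$ rather than $\inf$/$\sup$).

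For the nontrivial direction, I would argue by contradiction. Suppose $v_* := \max_{y} \min_{x} f(x,y) < v^* := \min_{x} \max_{y} f(x,y)$, and pick $v$ strictly between them. For each $y \in Y$ define the sublevel set $A_y := \{x \in X : f(x, y) \leq v\}$; this is closed by continuity and convex by convexity of $f(\cdot, y)$. Since $v < v^*$, every $x$ satisfies $\max_y f(x,y) > v$, so some $y$ places $x$ outside $A_y$, whence $\bigcap_{y \in Y} A_y = \emptyset$. Compactness of $X$ together with the finite intersection property extracts a finite subfamily $y_1, \dots, y_k \in Y$ with $\bigcap_{i=1}^k A_{y_i} = \emptyset$; equivalently, for every $x \in X$ some $f(x, y_i) > v$.

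The crux is now a finite-dimensional step: produce $\lambda$ in the probability simplex $\Delta_k$ with $\sum_i \lambda_i f(x, y_i) \geq v$ for every $x \in X$. Consider the set $T := \{t \in \R^k : \exists\, x \in X,~f(x, y_i) \leq t_i~\text{for all } i\}$. It is closed (by continuity and compactness of $X$) and convex (by convexity of each $f(\cdot, y_i)$ together with convexity of $X$), and the emptiness of $\bigcap_i A_{y_i}$ translates to $T \cap (-\infty, v]^k = \emptyset$. The separating hyperplane theorem in $\R^k$ then yields a nonzero $\lambda \in \R^k$ with $\lambda \cdot t \geq v \cdot \sum_i \lambda_i$ for every $t \in T$; since $(-\infty, v]^k$ is unbounded below in each coordinate, $\lambda$ must be componentwise nonnegative, so we normalize to $\lambda \in \Delta_k$. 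Setting $y^* := \sum_i \lambda_i y_i \in Y$, concavity of $f(x, \cdot)$ gives $f(x, y^*) \geq \sum_i \lambda_i f(x, y_i) \geq v$ for all $x$, whence $\min_x f(x, y^*) \geq v$ and therefore $v_* \geq v$, contradicting the choice $v > v_*$.

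The main obstacle is the finite-dimensional separation step, where the convex-geometric content of the theorem actually lives; both the convexity of $f(\cdot, y)$ (used to make $T$ convex) and the concavity of $f(x, \cdot)$ (used to push the convex combination back inside $f$) are essential there. Once hyperplane separation in $\R^k$ is in hand, the reduction from the infinite-dimensional compact-continuous setting via sublevel sets and the finite intersection property is routine.
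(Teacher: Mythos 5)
Your argument is correct, but there is nothing in the paper to compare it against: the paper states this theorem as a known background result and simply cites Simons' survey [Sim95] without giving a proof. What you have written is essentially the classical separation-based proof of the convex--concave minimax theorem (in the Kneser--Fan style surveyed in [Sim95]): weak duality is immediate; for the reverse inequality you pass, via the closed convex sublevel sets $A_y$ and compactness of $X$, to a finite family $y_1,\dots,y_k$, and then the whole convex-geometric content sits in $\R^k$, where you separate the convex set $T$ from the box $(-\infty,v]^k$ and use concavity of $f(x,\cdot)$ to fold the resulting convex combination back into a single $y^*\in Y$. Two small points are worth writing out if you polish this: (i) the separation step should be unpacked in the right order --- separation of the two disjoint convex sets gives $\lambda\neq 0$ and a constant $c$ with $\lambda\cdot t\ge c\ge\lambda\cdot s$ for $t\in T$, $s\in(-\infty,v]^k$; unboundedness of the box below forces $\lambda\ge 0$ componentwise, and only then does taking $s=(v,\dots,v)$ give $\lambda\cdot t\ge v\sum_i\lambda_i$, after which you normalize (note that non-strict separation of disjoint convex sets in $\R^k$ needs no closedness of $T$, so that part of your argument is optional); (ii) the attainment of the outer $\min$ and $\max$ deserves one line, since $\max_y f(\cdot,y)$ is lower semicontinuous and $\min_x f(x,\cdot)$ is upper semicontinuous, which together with compactness is exactly what is needed. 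With those details filled in, your proof is a complete and self-contained derivation of the cited theorem.
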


For reweighted $k$-th eigenvalues when $k > 2$,
we will use the following proposition in writing the maximum reweighted sum of $k$ smallest eigenvalue problem as a semidefinite program.

\begin{proposition} \label{prop:sum-of-lambda-k}
Let $X \in \R^{n \times n}$ be a symmetric matrix and let $1 \le k \le n$. 
Suppose the eigenvalues of $X$ are $\lambda_1 \le \lambda_2 \le \dots \le \lambda_n$. 
Then, $\lambda_1 + \lambda_2 + \cdots + \lambda_k$ is the value of the following SDP:
  \begin{eqnarray*}
    \min_{Y \in \R^{n \times n}} && \tr(XY) \\
    \st && 0 \preceq Y \preceq I_n \\
    && \tr(Y) = k,
  \end{eqnarray*}   
where $\tr(M) := \sum_{i=1}^n M_{i,i}$ is the trace of an $n \times n$ matrix $M$. 
\end{proposition}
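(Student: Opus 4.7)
The plan is to prove the two directions separately by diagonalizing $X$ and tracking what the constraints on $Y$ impose in the eigenbasis of $X$.

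First, I would establish the upper bound $\lambda_1+\cdots+\lambda_k \geq \text{SDP value}$ by exhibiting an explicit feasible $Y$ that attains the claimed objective. Let $X = V\Lambda V^T$ be an orthogonal diagonalization where $\Lambda = \diag(\lambda_1,\ldots,\lambda_n)$, and let $v_1,\ldots,v_n$ be the columns of $V$ (the orthonormal eigenvectors of $X$). Set $Y^\star := \sum_{i=1}^k v_i v_i^T$, the orthogonal projector onto $\sspan(v_1,\ldots,v_k)$. Then $Y^\star$ is PSD with eigenvalues in $\{0,1\}$, so $0 \preceq Y^\star \preceq I_n$, and $\tr(Y^\star)=k$. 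A direct computation gives $\tr(XY^\star) = \sum_{i=1}^k v_i^T X v_i = \sum_{i=1}^k \lambda_i$.

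Next, for the lower bound, I would show that every feasible $Y$ achieves objective value at least $\lambda_1+\cdots+\lambda_k$. The key step is a change of basis: let $Y' := V^T Y V$. Since $V$ is orthogonal, $Y'$ is PSD, $Y' \preceq I_n$, and $\tr(Y') = \tr(Y) = k$, so $Y'$ is also feasible. Moreover, $\tr(XY) = \tr(V\Lambda V^T Y) = \tr(\Lambda V^T Y V) = \tr(\Lambda Y') = \sum_{i=1}^n \lambda_i Y'_{ii}$. The constraints $0 \preceq Y' \preceq I_n$ force each diagonal entry $y_i := Y'_{ii}$ to lie in $[0,1]$ (since $y_i = e_i^T Y' e_i$ and $e_i^T(I-Y')e_i \geq 0$, $e_i^T Y' e_i \geq 0$), and the trace constraint gives $\sum_i y_i = k$. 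Thus the SDP value is bounded below by the linear program
\begin{equation*}
\min\Big\{\sum_{i=1}^n \lambda_i y_i ~:~ y_i \in [0,1], ~ \sum_{i=1}^n y_i = k\Big\}.
\end{equation*}
Because $\lambda_1 \leq \lambda_2 \leq \cdots \leq \lambda_n$, an elementary exchange argument (if $y_j < 1$ for some $j \leq k$ and $y_\ell > 0$ for some $\ell > k$, shifting mass from $y_\ell$ to $y_j$ does not increase the objective) shows this LP is minimized by $y_1=\cdots=y_k=1$ and $y_{k+1}=\cdots=y_n=0$, with value $\lambda_1+\cdots+\lambda_k$.

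Combining the two directions yields the claim. There is essentially no hard step here; the only mild subtlety is the reduction from the matrix problem to the scalar LP via the orthogonal change of basis, after which monotonicity of the ordered eigenvalues $\lambda_1\leq\cdots\leq\lambda_n$ makes the optimal assignment transparent.
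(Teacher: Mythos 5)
Your proof is correct and complete: the projector onto the bottom-$k$ eigenspace certifies the upper bound, and the change of basis $Y' = V^TYV$ together with the observation that $0 \preceq Y' \preceq I_n$ forces $Y'_{ii} \in [0,1]$ reduces the lower bound to an elementary linear program over $\{y \in [0,1]^n : \sum_i y_i = k\}$, whose minimum is $\lambda_1 + \cdots + \lambda_k$ by your exchange argument. The paper states this proposition without proof, treating it as a standard fact (it is Ky Fan's extremal characterization of the sum of the $k$ smallest eigenvalues), so there is no proof in the paper to compare against; your argument is the standard one. A minor remark: instead of the exchange argument you could bound the LP directly, since $\sum_i \lambda_i y_i - \sum_{i \le k} \lambda_i = \sum_{i \le k} \lambda_i (y_i - 1) + \sum_{i > k} \lambda_i y_i \ge \lambda_k \big( \sum_i y_i - k \big) = 0$, which avoids any appeal to termination of the exchange process.
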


\section{Cheeger Inequalities for Directed Graphs} \label{sec:main-results}

We prove the two main results \autoref{thm:directed-vertex-expansion} and \autoref{thm:directed-edge-conductance} in this section.
First, we prove the easy directions of the two results in \autoref{sec:easy}, and write the semidefinite programs for the reweighted eigenvalues in \autoref{sec:primal}.
Then, we show some properties of the asymmetric ratio in \autoref{sec:asymmetric-ratio}, and use these properties and the proof in~\cite{JPV22} to analyze a random projection algorithm to construct $1$-dimensional spectral solutions to the semidefinite programs in \autoref{sec:dimension-reduction}.
Then, we analyze a new threshold rounding algorithm for a $1$-dimensional solution to the dual programs, and prove the hard direction of the two results in \autoref{sec:rounding-algorithms}.
Finally, we show \autoref{thm:fastest-mixing} about fastest mixing time using~\cite{Fil91,Chu05} in \autoref{sec:fastest-mixing}, and provide details about the relations with some previous work that we mentioned in \autoref{sec:related-work} in \autoref{sec:relations}.

\subsection{Easy Directions by Reductions} \label{sec:easy}

There are two ways to prove the easy directions in \autoref{thm:directed-vertex-expansion} and \autoref{thm:directed-edge-conductance}.
A standard way is to construct a solution to $\vec{\lambda}_2^{v*}(G)$ or $\vec{\lambda}_2^{e*}(G)$ with small objective value when the directed vertex expansion or the directed edge conductance is small.
Instead, we use the reduction idea discussed in the introduction to prove the easy directions, as this is how we came up with the formulations and the reduction is the main theme in this paper.

\begin{proposition}[Easy Direction for Directed Vertex Expansion] \label{prop:directed-vertex-easy}

For any directed graph $G = (V,E)$ with weight function $\pi:V \to \R_{\geq 0}$, it holds that $\vec{\lambda}_2^{v*}(G) \leq 2\vec{\psi}(G)$.
\end{proposition}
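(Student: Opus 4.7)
The plan is to reduce to the undirected case via the Eulerian reweighting and then invoke the easy direction of the classical Cheeger inequality. Fix an arbitrary feasible reweighting $A$ in the program of \autoref{def:directed-vertex-primal}, and let $B := \tfrac{1}{2}(A+A^T)$ be the adjacency matrix of the underlying undirected graph $G''$. Because $A$ is Eulerian with $d^+_A(v) = d^-_A(v) = \pi(v)$ for every $v$, the weighted degree of $v$ in $G''$ is exactly $\pi(v)$, so the normalized Laplacian $\L = I - \Pi^{-1/2} B \Pi^{-1/2}$ appearing in the objective of $\vec{\lambda}_2^{v*}(G)$ is indeed the normalized Laplacian of $G''$. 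It then suffices to show that for every nonempty proper $S \subset V$ we have $\phi_{G''}(S) \le \vec{\psi}(S)$, because the easy direction of Cheeger's inequality \eqref{e:Cheeger} on $G''$ gives $\lambda_2(\L) \le 2 \phi_{G''}(S) \le 2 \vec{\psi}(S)$, and maximizing the left side over $A$ and minimizing the right side over $S$ yields $\vec{\lambda}_2^{v*}(G) \le 2 \vec{\psi}(G)$.

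The key step is the comparison $\phi_{G''}(S) \le \vec{\psi}(S)$. The volume term is immediate: $\vol_B(S) = \sum_{v\in S} \pi(v) = \pi(S)$, so the denominator in $\phi_{G''}(S)$ matches the denominator of $\vec{\psi}(S)$. For the numerator, the Eulerian property of $A$ gives $w_A(\delta^+(S)) = w_A(\delta^+(\overline{S}))$, so the undirected boundary weight in $G''$ satisfies
\[
w_B(\delta(S)) = \tfrac{1}{2}\bigl(w_A(\delta^+(S)) + w_A(\delta^+(\overline{S}))\bigr) = w_A(\delta^+(S)) = w_A(\delta^+(\overline{S})).
\]
Now any edge $uv$ in $\delta^+(S)$ has $v \in \partial^+(S)$, and the Eulerian degree constraints yield
\[
w_A(\delta^+(S)) \le \sum_{v \in \partial^+(S)} d_A^-(v) = \sum_{v \in \partial^+(S)} \pi(v) = \pi\bigl(\partial^+(S)\bigr),
\]
and symmetrically $w_A(\delta^+(\overline{S})) \le \pi(\partial^+(\overline{S}))$. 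Combining these two bounds with the equality above shows $w_B(\delta(S)) \le \min\{\pi(\partial^+(S)), \pi(\partial^+(\overline{S}))\}$, hence $\phi_{G''}(S) \le \vec{\psi}(S)$ as required.

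There is no real obstacle here; the only conceptually important point is to recognize that the Eulerian constraint is precisely what lets the two one-sided bounds $w_A(\delta^+(S)) \le \pi(\partial^+(S))$ and $w_A(\delta^+(\overline{S})) \le \pi(\partial^+(\overline{S}))$ collapse into a single bound on the undirected boundary by the minimum of the two vertex-boundary weights, matching the ``min'' in \autoref{def:directed-vertex-expansion}. Everything else is bookkeeping plus one application of the easy direction of Cheeger's inequality on $G''$.
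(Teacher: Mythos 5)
Your proof is correct and follows essentially the same route as the paper: fix a feasible Eulerian reweighting, pass to the underlying undirected graph whose degrees are exactly $\pi$, bound its cut weight by $\min\{\pi(\partial^+(S)),\pi(\partial^+(\overline{S}))\}$ using the Eulerian and degree constraints, and invoke the easy direction of Cheeger's inequality. The paper merely inserts the directed edge conductance of the reweighted subgraph as an intermediate quantity, which is a presentational difference only.
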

\begin{proof}
The idea is to reduce directed vertex expansion of $G$ to the directed edge conductance of the reweighted Eulerian subgraph defined by $A$ in \autoref{def:directed-vertex-primal}, and then reduce to the underlying undirected graph defined by $\frac12 (A+A^T)$ and use classical Cheeger's inequality to lower bound its edge conductance by the second eigenvalue of its normalized Laplacian matrix.

Let $w(uv) := A(u,v)$ be the edge weight in the Eulerian reweighted subgraph for $uv \in E$.
For any nonempty $S \subset V$, by \autoref{def:directed-vertex-expansion} of directed vertex expansion and \autoref{def:directed-edge-conductance} of directed edge conductance,
\[
\vec{\psi}(S)
= \frac{\min\big\{ \pi(\partial^+(S)), \pi(\partial^+(\overline{S})) \big\}}{\min\{ \pi(S), \pi(\overline{S}) \}}
\geq \frac{2 \cdot \min\big\{ w(\delta^+(S)), w(\delta^-(S)) \big\}}{\min\{ \vol_w(S), \vol_w(\overline{S}) \}}
=2\vec{\phi}(S)
\]
where we use the degree constraints in \autoref{def:directed-vertex-primal} to establish that $w(\delta^+(S)) \leq \pi(\partial^+(S))$ and $w(\delta^-(S)) \leq \pi(\partial^+(\overline{S}))$ (note that they are not necessarily equalities because of the self-loops), and $\vol_w(S) = 2\pi(S)$ for every nonempty $S \subset V$.

As the edge-weighted directed graph $G'=(V,E,w)$ is Eulerian, 
it holds that $w(\delta^+(S)) = w(\delta^-(S))$ for every nonempty $S \subset V$,
and thus the directed edge conductance of $G'$ is equal to half the edge conductance of the underlying undirected graph $G''$ with edge weight $w''(uv) = \frac{1}{2} \big(w(uv) + w(vu)\big)$, because
\[
2 \vec{\phi}(S) 
= \frac{\min\big\{ w(\delta^+(S)), w(\delta^-(S)) \big\}}{\frac{1}{2} \cdot \min\{ \vol_w(S), \vol_w(\overline{S}) \}}
= \frac{w''(\delta(S))}{ \min\{ \vol_{w''}(S), \vol_{w''}(\overline{S}) \}}
= \phi(S).
\]
As the graph $G''$ is undirected, 
we can use Cheeger's inequality in \eqref{e:Cheeger} to lower bound the edge conductance of $G''$ by the second smallest eigenvalue of its normalized Laplacian matrix $\L(A) := I - \frac12 \Pi^{-1/2} (A+A^T) \Pi^{-1/2}$.
Therefore, for any nonempty $S \subset V$,
$\vec{\psi}(S) \geq 2 \vec{\phi}(S) = \phi(S) \geq \lambda_2(\L(A)) / 2$.
Since this holds for any nonempty $S \subset V$ and any weighted Eulerian subgraph defined by $A$ satisfying the constraints in \autoref{def:directed-vertex-primal},
we conclude that $2 \psi(G) \geq \max_A \lambda_2(\L(A)) = \vec{\lambda}_2^{v*}(G)$.
\end{proof}

The proof of the easy direction of \autoref{thm:directed-edge-conductance} is similar, but with a subtle difference in handling the denominator.

\begin{proposition}[Easy Direction for Directed Edge Conductance] \label{prop:directed-edge-easy}
For any directed graph $G = (V,E)$ with weight function $w:E \to \R_{\geq 0}$, it holds that $\vec{\lambda}_2^{e*}(G) \leq 2\vec{\phi}(G)$.
\end{proposition}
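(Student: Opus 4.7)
The plan is to mirror the reduction proof of Proposition~3.1, but adapted to the edge-capacity setting. Let $A$ be any feasible matrix for the SDP in Definition~2.7, let $B := \frac{1}{2}(A+A^T)$ be the adjacency matrix of the underlying undirected graph $G''$, and let $L := D_A - B$ be its Laplacian. I will show that for every nonempty $S \subset V$,
\[
\lambda_2\bigl(D^{-1/2} L\, D^{-1/2}\bigr) \;\le\; 2\, \vec{\phi}(S),
\]
from which the claim follows by choosing $S$ to achieve $\vec{\phi}(G)$ and then maximizing over $A$.

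First I would bound the ``undirected'' cut weight $B(\delta(S))$ in terms of the directed edge conductance. Since $A$ is Eulerian, $A(\delta^+(S)) = A(\delta^-(S))$, so
\[
B(\delta(S)) \;=\; \tfrac{1}{2}\bigl(A(\delta^+(S)) + A(\delta^-(S))\bigr) \;=\; A(\delta^+(S)),
\]
and the capacity constraint $A(u,v) \le w(uv)$ gives $B(\delta(S)) \le \min\{w(\delta^+(S)),\, w(\delta^+(\overline S))\}$. Next I would upper-bound $\lambda_2(D^{-1/2}LD^{-1/2})$ by plugging the standard Cheeger test vector (namely $f(v) = 1/\vol_w(S)$ for $v \in S$ and $f(v) = -1/\vol_w(\overline S)$ for $v \notin S$, which is automatically $D$-orthogonal to $\vec{1}$) into the generalized Rayleigh quotient
\[
\lambda_2\bigl(D^{-1/2}LD^{-1/2}\bigr) \;=\; \min_{f \perp D\vec{1}} \frac{\sum_{uv \in E} B(u,v)\bigl(f(u)-f(v)\bigr)^2}{\sum_{v \in V} d_w(v)\, f(v)^2}.
\]
A direct computation yields Rayleigh quotient equal to $B(\delta(S))\bigl(\vol_w(S)^{-1} + \vol_w(\overline S)^{-1}\bigr) \le 2B(\delta(S))\big/\min\{\vol_w(S), \vol_w(\overline S)\} \le 2\vec{\phi}(S)$, completing the argument.

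The subtle point (and the only real obstacle) is the mismatch between the Laplacian $L = D_A - B$ of the reweighted graph and the normalization matrix $D$, which is the degree matrix of the original graph $G$ rather than of $B$. In the proof of Proposition~3.1, the vertex-weight equality constraints forced $\vol_w(S) = 2\pi(S)$ identically, so one could invoke Cheeger's inequality on $G''$ as a black box after the reduction; here, by contrast, volumes are measured with respect to $w$ in $G$ while the Laplacian weights come from the possibly much sparser $B$, so Cheeger's inequality on $G''$ does not apply verbatim. The edge-capacity constraint $A(u,v) \le w(uv)$ (hence $B(u,v) \le \tfrac{1}{2}(w(uv)+w(vu))$) is precisely what is needed to close this gap, and the direct Rayleigh-quotient computation above (rather than a black-box invocation) handles the mismatched normalization cleanly.
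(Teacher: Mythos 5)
Your proof is correct, and its skeleton is the same as the paper's: fix a feasible $A$, pass to $B=\frac12(A+A^T)$, use the Eulerian constraint to get $B(\delta(S))=A(\delta^+(S))=A(\delta^-(S))$ and the capacity constraint to get $B(\delta(S))\le\min\{w(\delta^+(S)),w(\delta^+(\overline{S}))\}$, and then bound $\lambda_2$ by a conductance-type quantity. The only divergence is how the normalization mismatch ($D$ versus $D_A$) is handled, and here your diagnosis of the paper's route is slightly off: the paper does \emph{not} apply Cheeger's inequality to the bare graph $B$, but first adds a self-loop at each vertex $v$ of weight $d_w(v)-D_A(v,v)\ge 0$ (nonnegative precisely because of the capacity constraints), so that the padded graph $G''$ has degree matrix exactly $D$; its normalized Laplacian is then exactly the matrix $D^{-1/2}(D_A-\frac{A+A^T}{2})D^{-1/2}$ from \autoref{def:directed-edge-primal}, and the easy direction of Cheeger's inequality applies verbatim (self-loops add nothing to the numerator and only enlarge the volume). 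Your alternative is to skip the padding and plug the two-valued test vector $f(v)=1/\vol_w(S)$ on $S$, $-1/\vol_w(\overline{S})$ off $S$ directly into the Rayleigh quotient $\min_{f\perp D\vec{1}}\,f^T(D_A-B)f / f^TDf$; this is exactly the computation hiding inside the easy direction of Cheeger's inequality, so the two arguments are the same calculation packaged differently. What your version buys is self-containedness (no self-loop bookkeeping, no appeal to a conductance of an auxiliary graph); what the paper's version buys is that the reduction-to-undirected-Cheeger picture is made explicit, which is the conceptual theme it wants to emphasize. Either way the chain $\lambda_2 \le A(\delta^+(S))\big(\vol_w(S)^{-1}+\vol_w(\overline{S})^{-1}\big) \le 2\vec{\phi}(S)$, followed by minimizing over $S$ and maximizing over $A$, is complete.
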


\begin{proof}
Let $w'(uv) := A(u,v)$ be the edge weight in the Eulerian reweighted subgraph $G'$ in \autoref{def:directed-edge-primal}.
Let $G''$ be the underlying undirected graph with edge weight $w''(uv) := \frac12 (w'(uv)+w'(vu))$, with an additional self-loop on each vertex so that the weighted degree $d_{w'}(v)$ on each vertex $v$ is exactly equal to the total degree $d_w(v) = \sum_{u \in V} (w(uv)+w(vu))$ of $v$ in $G$.
Then, by the edge capacity constraints and the Eulerian constraints in \autoref{def:directed-edge-primal}, for any nonempty $S \subset V$,
\[  \vec{\phi}(S) 
   = \frac{\min\big\{ w(\delta^+(S)), w(\delta^-(S)) \big\}}{\min\{ \vol_w(S), \vol_w(\overline{S}) \}}
   \geq
   \frac{\min\big\{ w'(\delta^+(S)), w'(\delta^-(S)) \big\}}{\min\{ \vol_w(S), \vol_w(\overline{S}) \}}
   =
   \frac{w''(\delta(S))}{\min\{ \vol_{w''}(S), \vol_{w''}(\overline{S}) \}}
   = \phi(S).
\]

Let $D := \diag(d_w)$ be the diagonal degree matrix of $G''$,
and $\L := D^{-1/2}(D - \frac12 (A+A^T))D^{-1/2} = I - \frac12 D^{-1/2} (A+A^T) D^{-1/2}$ be the normalized Laplacian matrix of $G''$.
As $G''$ is undirected, it follows from Cheeger's inequality in \eqref{e:Cheeger} that $\phi(S) \geq \lambda_2(\L) / 2$.
Since this holds for any nonempty $S \subset V$ and any weighted subgraph defined by $A$ satisfying the constraints in \autoref{def:directed-edge-primal},
we conclude that $\vec{\phi}(G) \geq \phi(G'') \geq \max_A \lambda_2(\L) / 2 = \vec{\lambda}_2^{e*}(G) / 2$.
\end{proof}

\subsection{Semidefinite Programs} \label{sec:primal}

We show that the optimization problems of reweighted eigenvalues can be formulated exactly as semidefinite programs, and so they can be approximated arbitrarily well in polynomial time.
The construction is similar to that of the semidefinite program for undirected vertex expansion in~\cite{BDX04,Roc05}, but von Neumann minimax theorem is used instead of SDP duality.

\begin{proposition}[SDP for Reweighted Second Eigenvalue with Vertex Capacity Constraints]
\label{prop:lambda2-vertex}
Given a directed graph $G = (V, E)$ and a weight function $\pi : V \to \R_{\geq 0}$, the optimization problem in \autoref{def:directed-vertex-primal} can be written as
  \begin{align*}
    \vec{\lambda}_2^{v*}(G) :=
     \min_{f: V \rightarrow \R^n} \max_{A \geq 0} &~~~ \frac12 \sum_{uv \in E} A(u,v) \cdot \norm{f(u) - f(v)}^2
    \\
    \st&~~~
    A(u, v) = 0 & & \forall uv \not \in E
    \\
    &~~~
    \sum_{v \in V} A(u, v) = \sum_{v \in V} A(v, u)  & & \forall u \in V
    \\
    &~~~
    \sum_{v \in V} A(v, u) = \pi(u) & & \forall u \in V
    \\
    &~~~
    \sum_{v \in V} \pi(v) \cdot f(v) = \vec{0}
    \\
    &~~~ \sum_{v \in V} \pi(v) \cdot \norm{f(v)}^2 = 1.
  \end{align*}
\end{proposition}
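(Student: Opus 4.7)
The plan is to start from the original definition $\vec{\lambda}_2^{v*}(G) = \max_A \lambda_2(\L(A))$ in \autoref{def:directed-vertex-primal} and carry out two moves: first rewrite the inner eigenvalue using the variational characterization to obtain a $\max_A \min_f$ form with exactly the objective and constraints in the statement, and then swap the order of optimization via \autoref{thm:von-Neumann} to arrive at $\min_f \max_A$.

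For the first move, I would set $M := (A+A^T)/2$ and observe that the Eulerian and degree constraints together force $\sum_v M(u,v) = \sum_v A(u,v) = \pi(u)$, so the diagonal degree matrix of $M$ is exactly $\Pi$ and the unnormalized Laplacian is $L_M = \Pi - M$. Hence $\L(A) = \Pi^{-1/2} L_M \Pi^{-1/2}$ has $\Pi^{1/2}\vec{1}$ as a kernel vector. Substituting $g = \Pi^{1/2} f$ in the vector-valued Courant-Fischer characterization (cf.~\eqref{e:lambda2-SDP}) and using the symmetry of $\norm{f(u)-f(v)}^2$ to fold $M$ back to $A$ in the quadratic form gives
\[
\lambda_2(\L(A)) \;=\; \min_{f : V \to \R^n}\; \frac12 \sum_{uv \in E} A(u,v)\,\norm{f(u) - f(v)}^2
\]
subject to $\sum_v \pi(v)\,f(v) = \vec{0}$ and $\sum_v \pi(v)\,\norm{f(v)}^2 = 1$. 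Composing with the outer $\max_A$ produces the right objective and the right constraints on the $f$-side.

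For the swap, the $A$-feasible region is a compact convex polytope (nonempty by the self-loop assumption) and the objective is linear, hence concave, in $A$. The obstacle is that the $f$-feasible region is the intersection of a hyperplane with a sphere, which is not convex, so \autoref{thm:von-Neumann} does not apply directly to the pair $(A,f)$. I plan to bypass this by lifting to the Gram matrix $X(u,v) := \inner{f(u)}{f(v)}$: the objective rewrites as $\frac12\sum_{uv\in E} A(u,v)\bigl(X(u,u)+X(v,v)-2X(u,v)\bigr)$, which is bilinear in $(A,X)$, while the constraints on $f$ become the affine conditions $\sum_v \pi(v) X(v,v) = 1$ and $\sum_{u,v}\pi(u)\pi(v) X(u,v) = 0$ together with $X \succeq 0$, defining a compact convex set. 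Applying \autoref{thm:von-Neumann} now yields $\max_A\min_X(\cdot) = \min_X\max_A(\cdot)$, and choosing $n = |V|$ so that any feasible $X$ factors exactly as the Gram matrix of some $f : V \to \R^n$ recovers the stated min-max form.

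The only genuine obstacle is the non-convexity of the sphere constraint on $f$; the Gram-matrix lifting is the standard convex relaxation that preserves the value (any optimal $X$ has rank at most $|V|$, hence is realizable by a vector-valued $f$). Everything else is bookkeeping with the Eulerian identity $\sum_v M(u,v) = \pi(u)$ and the Courant-Fischer substitution, so I expect the proof to be short once these two moves are in place.
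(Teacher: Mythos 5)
Your proposal is correct and follows essentially the same route as the paper's proof: rewrite $\lambda_2(\L(A))$ via the variational characterization \eqref{e:lambda2}, pass to the exact vector-valued relaxation \eqref{e:lambda2-SDP}, and swap $\max_A \min_f$ to $\min_f \max_A$ using \autoref{thm:von-Neumann}. Your explicit Gram-matrix lifting is just a spelled-out version of the step the paper states tersely as ``after the SDP relaxation, the feasible domain becomes convex,'' so the two arguments coincide in substance.
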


\begin{proof}
Let $\L := I - \frac12 \Pi^{-1/2} (A+A^T) \Pi^{-1/2}$ be the normalized Laplacian matrix in the objective function $\max_A \lambda_2(\L)$ in \autoref{def:directed-vertex-primal}.
By \eqref{e:lambda2},
\[
\lambda_2(\L)
= \min_{f \perp \Pi \vec{1}} \frac{\sum_{(u,v) \in \binom{V}{2}} \frac12 \big(A(u,v) + A(v,u)\big) \cdot |f(u)-f(v)|^2}{\sum_v \pi(v) f(v)^2}.
\]
Then we write $f \perp \Pi \vec{1}$ as the second last constraint and normalize the denominator to $1$ as the last constraint.
By \eqref{e:lambda2-SDP}, the SDP relaxation where we replace $f : V \to \R$ by $f : V \to \R^n$ is an exact relaxation. 
After the SDP relaxation, the feasible domain becomes convex and so we can apply von Neumann minimax theorem in \autoref{thm:von-Neumann} to switch the order of $\max_A \min_f$ in \autoref{def:directed-vertex-primal} to $\min_f \max_A$ as in the statement of this lemma.
\end{proof}

The same construction is used for $\vec{\lambda}_2^{e*}(G)$ in \autoref{def:directed-edge-primal} and the proof is omitted.

\begin{proposition}[SDP for Reweighted Second Eigenvalue with Edge Capacity Constraints]
\label{prop:lambda2-edge}
Given a directed graph $G = (V, E)$ and a weight function $w : E \to \R_{\geq 0}$, the optimization problem in \autoref{def:directed-edge-primal} can be written as
  \begin{align*}
    \vec{\lambda}_2^{e*}(G) :=
     \min_{f: V \rightarrow \R^n} \max_{A \geq 0} &~~~ \frac12 \sum_{uv \in E} A(u, v) \cdot \norm{f(u) - f(v)}^2
    \\
    \st&~~~
    A(u, v) = 0 & & \forall uv \not \in E
    \\
    &~~~
    \sum_{v \in V} A(u, v) = \sum_{v \in V} A(v, u)  & & \forall u \in V
    \\
    &~~~
    A(u, u) \leq w(uv)  & & \forall uv \in E
    \\
    &~~~
    \sum_{v \in V} d_w(v) \cdot f(v) = \vec{0}
    \\
    &~~~ \sum_{v \in V} d_w(v) \cdot \norm{f(v)}^2 = 1.
  \end{align*}
\end{proposition}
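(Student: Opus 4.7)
The plan is to mirror the proof of \autoref{prop:lambda2-vertex} step by step, with the only substantive change being that the normalizing degree matrix is now $D$ (the total degree matrix of the original graph $G$) rather than the reweighted degree matrix, since the edge-capacity constraint $A(u,v) \le w(uv)$ no longer forces the two to coincide.

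First, I would fix a feasible reweighted adjacency $A$ and analyze the inner quantity $\lambda_2(\L)$, where $\L := D^{-1/2}\big(D_A - \tfrac12(A+A^T)\big) D^{-1/2}$. Since $D_A - \tfrac12(A+A^T)$ is the combinatorial Laplacian of the underlying undirected reweighted graph (whose adjacency is $\tfrac12(A+A^T)$ and whose degree matrix is exactly $D_A$ thanks to the Eulerian constraint), the all-ones vector $\vec{1}$ lies in its kernel, so $D^{1/2}\vec{1}$ lies in the kernel of $\L$. Applying the Rayleigh characterization \eqref{e:lambda2} and substituting $g = D^{1/2} f$ converts the orthogonality $g \perp D^{1/2}\vec{1}$ into $\sum_v d_w(v) f(v) = 0$ and yields
\[
\lambda_2(\L) \;=\; \min_{f \perp D \vec{1}} \frac{\tfrac12 \sum_{uv \in E} A(u,v)\,(f(u)-f(v))^2}{\sum_v d_w(v)\, f(v)^2},
\]
where the numerator uses the identity $f^T\big(D_A - \tfrac12(A+A^T)\big)f = \tfrac12\sum_{uv \in E} A(u,v)(f(u)-f(v))^2$ obtained by symmetrizing over ordered pairs.

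Next, I would normalize the denominator to $1$ via the constraint $\sum_v d_w(v)\,f(v)^2 = 1$, then apply the standard exact SDP lift \eqref{e:lambda2-SDP} to replace the scalar-valued $f$ by a vector-valued $f : V \to \R^n$ without changing the optimum. Finally, to interchange $\max_A \min_f = \min_f \max_A$, I would invoke von Neumann's minimax theorem (\autoref{thm:von-Neumann}). The $A$-feasible set is a bounded polytope (compact and convex) cut out by nonnegativity, the Eulerian equalities, and the edge capacities $A(u,v) \le w(uv)$; the $f$-feasible set is compact after the normalization $\sum_v d_w(v)\|f(v)\|^2 = 1$; and the bilinear objective is linear (hence concave) in $A$ and convex quadratic in $f$, so the hypotheses of the minimax theorem hold.

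The main point to verify carefully — and the only place this argument departs from the proof of \autoref{prop:lambda2-vertex} — is that $D^{1/2}\vec{1}$, and \emph{not} $D_A^{1/2}\vec{1}$, is the correct bottom eigenvector used in the Rayleigh variational formula. This is the asymmetry introduced by the edge-capacity setting, and it is precisely why \autoref{def:directed-edge-primal} normalizes by $D$ rather than by $D_A$. Apart from this bookkeeping, the argument is a routine adaptation of the vertex-capacity proof.
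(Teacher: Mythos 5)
Your proposal is essentially the paper's argument: the paper omits the proof of \autoref{prop:lambda2-edge} precisely because it is the same construction as \autoref{prop:lambda2-vertex}, and the one substantive point you single out --- that $D^{1/2}\vec{1}$ (not $D_A^{1/2}\vec{1}$) is a bottom eigenvector of $D^{-1/2}\big(D_A - \tfrac12(A+A^T)\big)D^{-1/2}$ even though $D \neq D_A$, so Courant--Fischer gives the Rayleigh quotient with denominator $\sum_v d_w(v)\,f(v)^2$ --- is exactly the right check and is correct, since $D_A - \tfrac12(A+A^T)$ is a genuine graph Laplacian (PSD, annihilating $\vec{1}$). One small repair is needed at the minimax step: \autoref{thm:von-Neumann} requires \emph{both} feasible sets to be compact and \emph{convex}, and the set $\{f : \sum_v d_w(v) f(v) = \vec{0},\ \sum_v d_w(v)\norm{f(v)}^2 = 1\}$ is a weighted sphere, hence compact but not convex; convexity of the objective in $f$ does not substitute for convexity of the domain, so the hypotheses as you state them do not hold. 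The fix is the one the paper uses in the vertex case: after the exact vector lift, rewrite the outer problem in terms of the Gram matrix $X(u,v) = \inner{f(u)}{f(v)}$, where the centering and normalization constraints become linear in $X$, the domain $\{X \succeq 0\}$ intersected with these linear constraints is compact and convex, and the objective is bilinear in $(A,X)$; then exchanging $\max_A$ and $\min_X$ is legitimate. A cosmetic note: $D_A$ is by definition the degree matrix of $\tfrac12(A+A^T)$, so the identity $\big(D_A - \tfrac12(A+A^T)\big)\vec{1} = 0$ needs no appeal to the Eulerian constraint.
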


We will use these semidefinite programs to prove the two main results.

\subsection{Asymmetric Ratio} \label{sec:asymmetric-ratio}

A key parameter in our proofs is the asymmetric ratio $\alpha(G)$ in \autoref{def:asymmetric-ratio}.
This parameter satisfies two useful properties.
One is that $\alpha(G)$ can be used to bound the directed edge conductance and directed vertex expansion.
Another is that directed graphs with bounded asymmetric ratio satisfy the ``large optimal property'' that we will describe in \autoref{sec:large-optimal}, which can be used in the proof in~\cite{JPV22} to provide a better analysis of the random projection algorithm for dimension reduction of the SDP solutions.

\subsubsection{Asymmetric Ratio and Expansion Properties}

The relation between asymmetric ratio of edge-weighted graph and directed edge conductance is simple.

\begin{lemma}[Asymmetric Ratio and Directed Edge Conductance] \label{lem:asymmetric-ratio-edge}
For any directed graph $G=(V,E)$ and any weight function $w : E \to \R_{\geq 0}$, it holds that $\alpha(G)\leq 1/\vec{\phi}(G)$.
\end{lemma}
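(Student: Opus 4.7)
The plan is to show the stronger statement $\alpha(S) \leq 1/\vec{\phi}(S)$ for the set $S$ that attains $\alpha(G)$, and then use $\vec{\phi}(G) \leq \vec{\phi}(S)$ to conclude. The whole proof should be a couple of lines manipulating the definitions.

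First, let $S \subset V$ be a nonempty proper subset that achieves the maximum in the definition of $\alpha(G)$. Since $\alpha(\overline{S}) = w(\delta^+(\overline{S}))/w(\delta^+(S)) = 1/\alpha(S)$, and $S$ maximizes the asymmetric ratio, we may assume without loss of generality that $\alpha(G) = \alpha(S) \geq 1$, i.e.\ $w(\delta^+(S)) \geq w(\delta^+(\overline{S}))$. In particular, $\min\{w(\delta^+(S)), w(\delta^+(\overline{S}))\} = w(\delta^+(\overline{S}))$, so by \autoref{def:directed-edge-conductance},
\[
\vec{\phi}(S) = \frac{w(\delta^+(\overline{S}))}{\min\{\vol_w(S), \vol_w(\overline{S})\}}.
\]

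Next, I multiply by $\alpha(G)$ to get the key identity
\[
\alpha(G) \cdot \vec{\phi}(S) = \frac{w(\delta^+(S))}{w(\delta^+(\overline{S}))} \cdot \frac{w(\delta^+(\overline{S}))}{\min\{\vol_w(S), \vol_w(\overline{S})\}} = \frac{w(\delta^+(S))}{\min\{\vol_w(S), \vol_w(\overline{S})\}}.
\]
Now the only substantive observation is that every outgoing edge of $S$ is counted in the (total) degree of some vertex in $S$, and also counted in the (total) degree of some vertex in $\overline{S}$, so $w(\delta^+(S)) \leq \vol_w(S)$ and $w(\delta^+(S)) = w(\delta^-(\overline{S})) \leq \vol_w(\overline{S})$. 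Hence the numerator above is bounded by the denominator, giving $\alpha(G) \cdot \vec{\phi}(S) \leq 1$.

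Finally, since $\vec{\phi}(G) = \min_{S' \neq \emptyset, V} \vec{\phi}(S') \leq \vec{\phi}(S)$, we obtain $\alpha(G) \leq 1/\vec{\phi}(S) \leq 1/\vec{\phi}(G)$, as desired. There is no real obstacle here; the only point to be careful about is the WLOG reduction to $\alpha(S) \geq 1$, which is needed so that the minimum defining $\vec{\phi}(S)$ picks out $w(\delta^+(\overline{S}))$ (the denominator of $\alpha(S)$), enabling the cancellation.
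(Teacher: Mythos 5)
Your proof is correct and follows essentially the same route as the paper: both arguments boil down to the two observations that the smaller of the two boundary weights is at least $\vec{\phi}\cdot\min\{\vol_w(S),\vol_w(\overline{S})\}$ while the larger boundary weight is at most $\min\{\vol_w(S),\vol_w(\overline{S})\}$. The only cosmetic difference is that the paper bounds $\alpha(S)\leq 1/\vec{\phi}(G)$ directly for every nonempty $S$ (splitting on which volume is smaller), whereas you normalize to the maximizer with $\alpha(S)\geq 1$ and cancel against $\vec{\phi}(S)$; both are fine.
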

\begin{proof}
Let $S \subset V$ be a nonempty set. Suppose $\vol_w(S) \leq \vol_w(\overline{S})$; the other case is similar.
Then, by the definition of directed edge conductance in \autoref{def:directed-edge-conductance}, 
$w(\delta^+(S)) \geq \vec{\phi}(G) \cdot \vol_w(S)$ and $w(\delta^-(S)) \geq \vec{\phi}(G) \cdot \vol_w(S)$.
On the other hand, $w(\delta^+(S)) \leq \vol_w(S)$ and $w(\delta^-(S)) \leq \vol_w(S)$. 
Therefore, $\alpha(S) = w(\delta^+(S)) / w(\delta^-(S)) \leq 1/\vec{\phi}(G)$ for any nonempty $S \subset V$, 
and we conclude that $\alpha(G) \leq 1/\vec{\phi}(G)$.
\end{proof}

The relation between asymmetric ratio of vertex-weighted graph and directed vertex expansion is less trivial and has a dependency on the maximum total degree $\Delta$.

\begin{lemma}[Asymmetric Ratio and Directed Vertex Expansion] \label{lem:asymmetric-ratio-vertex}
For any directed graph $G=(V,E)$ and any weight function $\pi : V \to \R_{\geq 0}$, it holds that $\alpha(G) \lesssim \Delta/\vec{\psi}(G)$.
\end{lemma}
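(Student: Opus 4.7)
The plan is to show $\alpha(S) \lesssim \Delta/\vec{\psi}(G)$ for every nonempty proper $S \subset V$; taking the max over $S$ gives the lemma. By the identity $\alpha(\overline{S}) = 1/\alpha(S)$ and $\vec{\psi}(G) \leq 1$, I may assume $\alpha(S) \geq 1$, and by swapping $S$ with $\overline{S}$ if needed, I may also assume $\pi(S) \leq \pi(\overline{S})$.

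For the numerator $w_\pi(\delta^+(S))$, the bound is immediate: since $\min\{\pi(u),\pi(v)\} \leq \pi(u)$ for each edge $uv \in \delta^+(S)$, and each $u \in S$ has at most $\Delta$ outgoing edges, I get $w_\pi(\delta^+(S)) \leq \Delta \cdot \pi(S) = \Delta \cdot \min\{\pi(S),\pi(\overline{S})\}$. So it remains to establish the matching lower bound
\[
w_\pi(\delta^-(S)) \gtrsim \vec{\psi}(G) \cdot \min\{\pi(S),\pi(\overline{S})\},
\]
as dividing the two estimates then yields $\alpha(S) \lesssim \Delta/\vec{\psi}(G)$.

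To prove this lower bound, the plan is a layer-cake decomposition in $\tau$: writing $V_\tau = \{v : \pi(v) \geq \tau\}$ and $S_\tau = S \cap V_\tau$, the key identities are
\[
w_\pi(\delta^-(S)) = \int_0^\infty \bigl|\delta^-_{V_\tau}(S_\tau)\bigr|\, d\tau \quad\text{and}\quad \pi(S) = \int_0^\infty |S_\tau|\, d\tau.
\]
This reduces the task to an unweighted statement in the induced subgraph on $V_\tau$: for each $\tau$, lower-bound the edge count $|\delta^-_{V_\tau}(S_\tau)|$ by something proportional to $\vec{\psi}(G) \cdot |S_\tau|/\Delta$ (or establish an equivalent integrated bound). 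Since $S_\tau \subseteq S$ and $\pi(S) \leq \pi(\overline{S})$, vertex expansion applied to $S_\tau$ as a subset of $V$ gives $\pi(\partial^+(\overline{S_\tau})) \geq \vec{\psi}(G) \cdot \pi(S_\tau)$, and then I plan to derive the edge-count bound from the observation that every vertex of $\partial^+_{V_\tau}(V_\tau \setminus S_\tau) \subseteq \partial^+(\overline{S_\tau})$ has at least one incoming edge in $\delta^-_{V_\tau}(S_\tau)$, while a factor $\Delta$ accounts for multiplicity.

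The main obstacle is the last conversion step: the $\pi$-weighted vertex boundary $\partial^+(\overline{S_\tau})$ in $G$ may contain ``light'' predecessors sitting outside $V_\tau$ and thus invisible to $|\delta^-_{V_\tau}(S_\tau)|$, so a naive pointwise bound can fail. Handling this leakage—most naturally via a dyadic decomposition of $\pi$ that reduces to the uniform-weight argument at each scale (where the bound $|\delta^-(S)| \geq |\partial^+(\overline{S})|$ combined with $|\delta^+(S)| \leq \Delta \cdot \min\{|S|,|\partial^+(S)|\}$ yields $\alpha \leq \Delta/\vec{\psi}$ directly), or by a charging argument that moves heavy-but-only-lightly-reached boundary vertices into $S$—is the essential technical step. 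Once this key estimate is in place, combining with the numerator bound and the symmetrization yields $\alpha(G) \lesssim \Delta/\vec{\psi}(G)$.
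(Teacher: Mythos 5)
Your reduction is the same as the paper's: the numerator bound $w_{\pi}(\delta^+(S)) \leq \Delta \cdot \min\{\pi(S),\pi(\overline{S})\}$ is fine, and everything then hinges on the key lower bound $w_{\pi}(\delta^+(\overline{S})) \gtrsim \vec{\psi}(G)\cdot \min\{\pi(S),\pi(\overline{S})\}$ (and its analogue for $\delta^+(S)$). The genuine gap is that this key bound is never proved. The layer-cake identities $w_{\pi}(\delta^-(S)) = \int_0^\infty \bigl|\{uv \in \delta^-(S) : u,v \in V_\tau\}\bigr|\, d\tau$ and $\pi(S) = \int_0^\infty |S_\tau|\, d\tau$ are correct but only restate the inequality; the step that remains is to compare the integrands, and, as you yourself concede, no pointwise (or per-dyadic-scale) comparison holds: the vertex expansion of $S_\tau$ in $G$ can be witnessed entirely by in-neighbours of weight below $\tau$, in which case $G[V_\tau]$ has no edges entering $S_\tau$ at all, so the induced subgraph inherits no expansion from $G$. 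Your uniform-weight argument is correct as far as it goes, but the reduction from general $\pi$ to that case is precisely the missing content, and neither of your two suggested repairs is carried out. A secondary issue: your two normalizations $\alpha(S)\ge 1$ and $\pi(S)\le\pi(\overline{S})$ cannot be imposed simultaneously, since replacing $S$ by $\overline{S}$ inverts $\alpha$; to bound $\alpha(S)$ when $S$ is the heavier side you need a lower bound on the \emph{outgoing} cut weight of the lighter side $\overline{S}$, whereas your sketch only treats the incoming cut of the lighter side. Both directions are needed, and both suffer the same leakage problem.

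For comparison, the paper proves the key bound by a short global charging argument, with no thresholding, which is essentially the second repair you gesture at: suppose $w_\pi(\delta^+(\overline{S})) < \eps\,\pi(S)$ with $\eps = \vec{\psi}(G)/3$ and $\pi(S)\le\pi(\overline{S})$. Let $C_{\overline{S}}$ be the set of tails $u\in\overline{S}$ of edges $uv\in\delta^+(\overline{S})$ with $\pi(u)\le\pi(v)$, and $C_S$ the set of heads $v\in S$ of such edges with $\pi(v)\le\pi(u)$. Each vertex of $C_{\overline{S}}\cup C_S$ contributes its own $\pi$-weight, disjointly, to $w_\pi(\delta^+(\overline{S}))$, so $\pi(C_S),\pi(C_{\overline{S}}) < \eps\,\pi(S)$; every edge of $\delta^+(\overline{S})$ has an endpoint in $C_S\cup C_{\overline{S}}$, hence $\partial^+(\overline{S}-C_{\overline{S}}) \subseteq C_S\cup C_{\overline{S}}$, which forces $\vec{\psi}(\overline{S}-C_{\overline{S}}) < \vec{\psi}(G)$, a contradiction; the bound for $\delta^+(S)$ is symmetric. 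If you want to salvage your write-up, this deletion/charging step (modifying the set itself, rather than restricting the graph to heavy vertices) is what has to replace the threshold decomposition.
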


\begin{proof}
To upper bound $\alpha(G)$ for a vertex-weighted graph, by \autoref{def:asymmetric-ratio}, we need to upper bound $\alpha(S) = w_{\pi}(\delta^+(S)) / w_{\pi}(\delta^+(\overline{S}))$ and $\alpha(\overline{S}) = w_{\pi}(\delta^+(\overline{S})) / w_{\pi}(\delta^+(S))$ for any nonempty $S \subset V$, where $w_{\pi}(uv) = \min\{\pi(u),\pi(v)\}$ is the $\pi$-induced edge weight for $uv \in E$.
We assume without loss of generality that $\pi(S) \leq \pi(V)/2$.

For the numerators, note that
$
w_{\pi}(\delta^+(S)) \leq \sum_{u \in S} \sum_{v: uv \in E} w_{\pi}(uv) \leq \sum_{u \in S} \Delta \cdot \pi(u) = \Delta \cdot \pi(S),
$
and similarly
$
w_{\pi}(\delta^+(\overline{S})) = w_{\pi}(\delta^-(S)) \leq \sum_{u \in S} \sum_{v: vu \in E} w_{\pi}(vu) \leq \sum_{u \in S} \Delta \cdot \pi(u) = \Delta \cdot \pi(S).
$
Therefore, the numerators for $\alpha(S)$ and $\alpha(\overline{S})$ are at most $\Delta \cdot \pi(S)$. 
For the denominators, we claim that $w_{\pi}(\delta^+(S)) \geq \frac{1}{3} \vec{\psi}(G) \cdot \pi(S)$ and $w_{\pi}(\delta^+(\overline{S})) \geq \frac{1}{3} \vec{\psi}(G) \cdot \pi(S)$.
This claim implies that the denominators for $\alpha(S)$ and $\alpha(\overline{S})$ are at least $\frac{1}{3} \vec{\psi}(G) \cdot \pi(S)$, and the lemma follows immediately.

To prove the claim, we first consider the lower bound on $w_{\pi}(\delta^+(S))$. 
Let $\eps := \vec{\psi}(G) / 3$.
Suppose by contradiction that $w_\pi(\delta^+(S)) < \epsilon \cdot \pi(S)$.  
Let $C_S := \{u \in S \mid \exists v {\rm~with~} uv \in \delta^+(S) {\rm~and~} \pi(u) \leq \pi(v)\}$
and $C_{\overline{S}} := \{v \in \overline{S} \mid \exists u {\rm~with~} uv \in \delta^+(S) {\rm~and~} \pi(v) \leq \pi(u)\}$.
Since each $u \in C_S$ contributes at least $\pi(u)$ weight to $w_{\pi}(\delta^+(S))$ and these contributions are disjoint, it follows that $\pi(C_S) \leq w_{\pi}(\delta^+(S)) < \eps \cdot \pi(S)$.
By the same argument, $\pi(C_{\overline{S}}) < \eps \cdot \pi(S)$.
Note that, by definition of $C_S$ and $C_{\overline{S}}$, each edge in $\delta^+(S)$ has at least one vertex in $C_S \cup C_{\overline{S}}$.
This implies that $\partial^+(S-C_S) \subseteq C_S \cup C_{\overline{S}}$,
but this leads to the contradiction that 
\[
\vec{\psi}(S-C_S) 
\leq \frac{\pi(\partial^+(S-C_S))}{\pi(S-C_S)}
\leq \frac{\pi(C_S \cup C_{\overline{S}})}{\pi(S) - \pi(C_S)}
< \frac{2\eps \cdot \pi(S)}{(1-\eps) \cdot \pi(S)}
= \frac{2\vec{\psi}(G)}{3(1-\vec{\psi}(G)/3)}
\leq \vec{\psi}(G).
\]
The lower bound on $w_{\pi}(\delta^+(\overline{S}))$ is by a similar argument.
Suppose by contradiction that $w_\pi(\delta^+(\overline{S})) < \epsilon \cdot \pi(S)$.  
Let $C_{\overline{S}} := \{u \in \overline{S} \mid \exists v {\rm~with~} uv \in \delta^+(\overline{S}) {\rm~and~} \pi(u) \leq \pi(v)\}$
and $C_{S} := \{v \in S \mid \exists u {\rm~with~} uv \in \delta^+(\overline{S}) {\rm~and~} \pi(v) \leq \pi(u)\}$.
Once again, it follows that $\pi(C_S), \pi(C_{\overline{S}}) \leq w_{\pi}(\delta^+(\overline{S})) < \eps \cdot \pi(S)$, and $\partial^+(\overline{S} - C_{\overline{S}}) \subseteq C_S \cup C_{\overline{S}}$.
But this leads to the contradiction that
\[
\vec{\psi}(\overline{S}-C_{\overline{S}}) 
= \frac{\pi(\partial^+(\overline{S}-C_{\overline{S}}))}{ \min\{ \pi(\overline{S}-C_{\overline{S}}), \pi(S+C_{\overline{S}})  \} }
\leq \frac{\pi(C_S \cup C_{\overline{S}})}{ \min\{ \pi(\overline{S}) - \eps \cdot \pi(S), \pi(S) \}}
< \frac{2\eps \cdot \pi(S)}{(1-\eps)\cdot \pi(S)}
\leq \vec{\psi}(G),
\]
where the second inequality uses that $\pi(\overline{S}) \geq \pi(S)$.
This completes the proof of the claim.
\end{proof}

\subsubsection{Asymmetric Ratio and Large Optimal Property} \label{sec:large-optimal}

Consider the semidefinite programs for $\vec{\lambda}_2^{v*}(G)$ and $\vec{\lambda}_2^{e*}(G)$ in \autoref{prop:lambda2-vertex} and \autoref{prop:lambda2-edge}.
When the geometric embedding $f : V \to \R^n$ in the outer minimization problem is fixed, the inner maximization problem is simply to find a maximum weighted Eulerian subgraph $A$ with vertex capacity constraints in \autoref{prop:lambda2-vertex} and with edge capacity constraints in \autoref{prop:lambda2-edge}.
The following are trivial upper bounds on the optimal values of the inner maximization problems.

\begin{claim}[Maximum Weighted Eulerian Subgraph with Capacity Constraints] \label{def:Eulerian}
Given a directed graph $G=(V,E)$ and an embedding $f : V \to \R^n$, 
let $\nu_f^{v*}(G)$ and $\nu_f^{e*}(G)$ be the objective values of the inner maximization problem in \autoref{prop:lambda2-vertex} and \autoref{prop:lambda2-edge} respectively.
Then
\[
\nu_f^{v*}(G) \leq \frac12 \sum_{uv \in E} w_{\pi}(uv) \cdot \norm{f(u)-f(v)}^2
\quad {\rm and} \quad
\nu_f^{e*}(G) \leq \frac12 \sum_{uv \in E} w(uv) \cdot \norm{f(u)-f(v)}^2,
\]
where $w_{\pi}(uv) = \min\{\pi(u),\pi(v)\}$ be the $\pi$-induced edge weight function defined in \autoref{def:asymmetric-ratio}.
\end{claim}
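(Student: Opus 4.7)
The plan is to prove both inequalities by establishing the pointwise upper bound $A(u,v) \le w(uv)$ (respectively $A(u,v) \le w_\pi(uv)$) for every edge $uv \in E$, and then summing against the nonnegative coefficients $\tfrac12 \|f(u)-f(v)\|^2$ in the objective.

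The edge-capacitated case is immediate: the feasibility constraints in \autoref{prop:lambda2-edge} explicitly require $A(u,v) \le w(uv)$ for every $uv \in E$, and $A(u,v) = 0$ for $uv \notin E$, so
\[
\nu_f^{e*}(G) = \tfrac12 \sum_{uv \in E} A(u,v) \cdot \|f(u)-f(v)\|^2 \le \tfrac12 \sum_{uv \in E} w(uv) \cdot \|f(u)-f(v)\|^2,
\]
as claimed. No use of the Eulerian degree-balance constraint is needed here.

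For the vertex-capacitated case, the bound $A(u,v) \le w_\pi(uv) = \min\{\pi(u),\pi(v)\}$ will be derived from the two degree constraints in \autoref{prop:lambda2-vertex} together with $A \ge 0$. Specifically, for any edge $uv \in E$, the out-degree constraint at $u$ gives
\[
A(u,v) \le \sum_{x \in V} A(u,x) = \pi(u),
\]
using that all entries $A(u,x)$ are nonnegative; and the in-degree constraint at $v$ gives
\[
A(u,v) \le \sum_{x \in V} A(x,v) = \pi(v)
\]
by the same argument. Taking the minimum yields $A(u,v) \le \min\{\pi(u),\pi(v)\} = w_\pi(uv)$, and summing against $\tfrac12 \|f(u)-f(v)\|^2$ over $uv \in E$ gives the desired bound on $\nu_f^{v*}(G)$.

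There is no genuine obstacle here; the claim is essentially a feasibility observation rather than a theorem with a hard direction, and its role in the paper is to set up an \emph{undirected} upper bound (with edge weights $w_\pi$ or $w$) against which the outer minimization over $f$ will be compared via the reduction to undirected edge conductance. The only thing to be mildly careful about is that for the vertex case the indegree sum and outdegree sum at each vertex are \emph{both} required to be $\pi(u)$ (the Eulerian balance plus the prescribed-degree constraint in \autoref{def:directed-vertex-primal}); the proof uses the outdegree sum at $u$ and the indegree sum at $v$, and both equal the respective $\pi$-values exactly.
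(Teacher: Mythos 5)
Your proposal is correct and matches the intended argument: the paper states this claim as a trivial feasibility observation without an explicit proof, and your derivation (edge capacities directly, and in the vertex case $A(u,v)\le\pi(u)$ from the out-degree sum via Eulerian balance and $A(u,v)\le\pi(v)$ from the in-degree constraint, hence $A(u,v)\le w_\pi(uv)$) is exactly the straightforward reasoning behind it.
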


In the undirected vertex expansion problem~\cite{OZ22,JPV22}, when $\pi(v)=1$ for all $v\in V$, the inner maximization problem is exactly the maximum weighted fractional matching problem.
Jain, Pham and Vuong~\cite{JPV22} used the fact that any graph with maximum degree $\Delta$ has an edge coloring with at most $\Delta+1$ colors to show that the inner maximization problem has a solution with weight at least $1/(\Delta+1)$ fraction of the trival upper bound.
They then used this ``large optimal property'' to analyze a dimension reduction algorithm for maximum weighted matching; see \autoref{sec:dimension-reduction}.

We observe that the asymmetric ratio $\alpha(G)$ in \autoref{def:asymmetric-ratio} can be used to play the same role as $\Delta$ to establish the large optimal property for the maximum weighted Eulerian subgraph problems in \autoref{def:Eulerian}.
The proof uses the following characterization of asymmetric ratio by Hoffman (see also \cite[Theorem 2.3]{EMPS16}), rephrased using our terminologies.

\begin{lemma}[Hoffman's Circulation Lemma] \label{lem:Hoffman}
Let $G=(V,E)$ be a directed graph with a weight function $w : E \to R_{\geq 0}$.
Then $G$ has asymmetric ratio at most $\alpha$ if and only if there exists an Eulerian reweighting $A$ of $G$ such that 
\[
\sum_{v: uv \in E} A(u, v) = \sum_{v: vu \in E} A(v, u) {\rm~~for~all~} u \in V
\quad {\rm and} \quad
w(uv) \leq A(u,v) \leq \alpha \cdot w(uv) {\rm~~for~all~} uv \in E.
\]
\end{lemma}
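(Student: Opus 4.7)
The proof splits naturally into the two directions of the biconditional, and the easy direction should come first. For the ``if'' direction, assume we have an Eulerian reweighting $A$ with $w(uv) \le A(u,v) \le \alpha \cdot w(uv)$ on every edge. Since $A$ is a circulation, $A(\delta^+(S)) = A(\delta^-(S))$ for every nonempty $S \subset V$ (each vertex preserves flow, so any cut does too). Combining with the bounds on $A$,
\[
w(\delta^+(S)) \;\le\; A(\delta^+(S)) \;=\; A(\delta^-(S)) \;\le\; \alpha \cdot w(\delta^-(S)),
\]
which gives $\alpha(S) = w(\delta^+(S))/w(\delta^-(S)) \le \alpha$. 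Taking the maximum over $S$ yields $\alpha(G) \le \alpha$.

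For the ``only if'' direction, assume $\alpha(G) \le \alpha$. The task is to build a circulation $A$ with $w(uv) \le A(uv) \le \alpha w(uv)$ on each edge. This is a feasibility question for a bounded circulation on $G$, and the natural tool is Hoffman's classical circulation theorem: a circulation with lower bound $\ell$ and upper bound $u$ on edge capacities exists if and only if for every $S \subset V$ we have $\ell(\delta^+(S)) \le u(\delta^-(S))$. With our choice $\ell(uv) = w(uv)$ and $u(uv) = \alpha \cdot w(uv)$, this feasibility criterion reads
\[
w(\delta^+(S)) \;\le\; \alpha \cdot w(\delta^-(S)) \quad \forall\, \emptyset \neq S \subset V,
\]
which is equivalent to $\alpha(S) \le \alpha$ for every nonempty $S \subset V$, i.e.\ $\alpha(G) \le \alpha$. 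So the hypothesis is exactly the feasibility condition, and the desired $A$ exists.

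If a self-contained derivation is preferred over citing Hoffman's theorem, the standard route is via a max-flow reduction: subtract the lower bound $w$ from each edge (leaving residual capacities $(\alpha-1) w(uv)$), introduce a super-source $s$ and super-sink $t$ with $s \to v$ edges of capacity $\sum_{u} w(uv)$ (to supply missing inflow) and $u \to t$ edges of capacity $\sum_v w(uv)$ (to absorb excess outflow), and show that a feasible circulation corresponds to an $s$-$t$ flow saturating all source edges. By max-flow min-cut, such a saturating flow exists iff every $s$-$t$ cut has capacity at least $\sum_{uv \in E} w(uv)$, and a short case analysis collapses the cut inequality on the augmented graph into exactly the condition $w(\delta^+(S)) \le \alpha \cdot w(\delta^-(S))$ for every $S \subset V$ of the original graph.

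The main obstacle is the second direction, and more precisely verifying that the augmented min-cut condition simplifies cleanly to the asymmetric-ratio bound --- there is some bookkeeping in tracking which side of the cut the source and sink lie on and in using the trivial bounds $w(\delta^+(S)), w(\delta^-(S)) \le \vol_w(S)$ to discard redundant cut inequalities. Once this bookkeeping is done (or once Hoffman's theorem is invoked directly), both directions are short.
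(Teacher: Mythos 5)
The paper does not prove this lemma at all: it is stated as Hoffman's classical circulation theorem rephrased in the paper's terminology (with a pointer to \cite{EMPS16}), and your argument is exactly that intended justification, so it is correct and essentially the same approach. Your easy direction via cut conservation of the circulation is right, and in the hard direction your feasibility criterion $\ell(\delta^+(S)) \le u(\delta^-(S))$ with $\ell = w$ and $u = \alpha \cdot w$ is the correct specialization of Hoffman's theorem (equivalent, by complementing $S$, to the standard form), which collapses precisely to $\alpha(S) \le \alpha$ for all nonempty $S \subset V$.
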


The large optimal property in terms of asymmetric ratio is a simple consequence of Hoffman's circulation lemma.

\begin{lemma}[Large Optimal Property] \label{lem:large-optimal}
Given a directed graph $G=(V,E)$ and an embedding $f : V \to \R^n$, 
let $\nu_f^{v*}(G)$ and $\nu_f^{e*}(G)$ be the objective values of the inner maximization problem in \autoref{prop:lambda2-vertex} and \autoref{prop:lambda2-edge} respectively.
Then
\[
\nu_f^{v*}(G) \geq \frac{1}{2\Delta \cdot \alpha(G)} \sum_{uv \in E} w_{\pi}(uv) \cdot \norm{f(u)-f(v)}^2
{\rm~and~}
\nu_f^{e*}(G) \geq \frac{1}{2\alpha(G)} \sum_{uv \in E} w(uv) \cdot \norm{f(u)-f(v)}^2.
\]
\end{lemma}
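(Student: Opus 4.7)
The plan is to apply Hoffman's circulation lemma (\autoref{lem:Hoffman}) directly to construct a feasible reweighting $A$ for each inner maximization problem whose objective is close to the trivial upper bound from \autoref{def:Eulerian}. The key observation is that Hoffman's lemma guarantees an Eulerian reweighting $A$ satisfying $w(uv) \leq A(u,v) \leq \alpha(G) \cdot w(uv)$, so a uniform rescaling will yield a feasible solution whose edge weights are at least $1/\alpha(G)$ times the original weights, and hence whose objective is at least $1/\alpha(G)$ times the trivial upper bound.

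For the edge-capacity case I would apply Hoffman's lemma to $(G,w)$ directly to obtain an Eulerian $A$ with $w(uv) \le A(u,v) \le \alpha(G)\cdot w(uv)$. Then $A' := A/\alpha(G)$ satisfies $A'(u,v) \le w(uv)$ (edge-capacity constraint) and is still Eulerian (scaling preserves the equal-sum condition), so $A'$ is feasible for the inner maximization of \autoref{prop:lambda2-edge}. The resulting objective value is
\[
\tfrac{1}{2}\sum_{uv\in E} A'(u,v)\,\norm{f(u)-f(v)}^2 \;\ge\; \tfrac{1}{2\alpha(G)}\sum_{uv\in E} w(uv)\,\norm{f(u)-f(v)}^2,
\]
using the lower bound $A(u,v)\ge w(uv)$.

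For the vertex-capacity case I would apply Hoffman's lemma to $(G,w_\pi)$ where $w_\pi(uv) = \min\{\pi(u),\pi(v)\}$, obtaining an Eulerian $A$ with $w_\pi(uv)\le A(u,v)\le \alpha(G)\cdot w_\pi(uv)$. Now the constraints in \autoref{prop:lambda2-vertex} demand the equality $\sum_{v} A(u,v)=\pi(u)$, so I need to both normalize and pad. Since $w_\pi(uv)\le \pi(u)$ and $u$ has at most $\Delta$ out-neighbors, the out-degree of $A$ at $u$ is bounded by $\alpha(G)\cdot \Delta \cdot \pi(u)$; thus $A'' := A/(\alpha(G)\Delta)$ has both in- and out-degrees at most $\pi(u)$ at every vertex, and the deficit can be absorbed by adding $\pi(u) - d^+_{A''}(u)$ weight to the self-loop at $u$, which is allowed by the standing self-loop assumption. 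Self-loops preserve Eulerian-ness and contribute zero to the objective, so the resulting matrix is feasible with objective value at least $\tfrac{1}{2\alpha(G)\Delta}\sum_{uv\in E} w_\pi(uv)\norm{f(u)-f(v)}^2$.

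The main conceptual content is Hoffman's lemma itself, which is cited as a black box; the only real subtlety in carrying out the plan is the vertex case, where one must remember that the degree constraint is an equality and invoke the self-loop padding trick to turn the scaled reweighting into a feasible solution without changing the objective. Everything else is a direct rescaling argument.
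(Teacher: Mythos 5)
Your proposal is correct and follows essentially the same route as the paper: apply Hoffman's circulation lemma to $(G,w)$ or $(G,w_\pi)$, scale the resulting Eulerian reweighting down by $\alpha(G)$ or $\Delta\cdot\alpha(G)$ respectively, and use the lower bound $A(u,v)\ge w(uv)$ (or $w_\pi(uv)$) on the objective. Your explicit self-loop padding to meet the equality degree constraint in the vertex case is a minor point the paper leaves implicit, but it is the same argument.
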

\begin{proof}
First, consider $\nu_f^{e*}(G)$ in \autoref{prop:lambda2-edge} with weight function $w : E \to \R_{\geq 0}$.
Let $A$ be an Eulerian reweighting of $G$ with weight function $w$ given in \autoref{lem:Hoffman}.
As $w(uv) \leq A(u,v) \leq \alpha(G) \cdot w(uv)$ for $uv \in E$,
the scaled-down subgraph $A/\alpha(G)$ satisfies the edge capacity constraints 
and is a feasible solution to the inner maximization problem in \autoref{prop:lambda2-edge}, with objective value $\frac{1}{2} \sum_{uv \in E} \frac{A(u,v)}{\alpha(G)} \cdot \norm{f(u)-f(v)}^2 \geq \frac{1}{2 \alpha(G)} \sum_{uv \in E} w(uv) \cdot \norm{f(u)-f(v)}^2$.

Similarly, consider $\nu_f^{v*}(G)$ in \autoref{prop:lambda2-vertex} with weight function $\pi : V \to \R_{\geq 0}$ and induced function $w_{\pi} : E \to \R_{\geq 0}$.
Let $A$ be an Eulerian reweighting of $G$ with weight function $w_{\pi}$ given in \autoref{lem:Hoffman}.
For each vertex $u$, the weighted degree is $\sum_{v:uv \in E} A(u,v) \leq \alpha(G) \cdot \sum_{v:uv \in E} w_{\pi}(uv) \leq \alpha(G) \cdot \Delta \cdot \pi(u)$.
Therefore, scaling down $A$ by a factor of $\Delta \cdot \alpha(G)$ satisfies the vertex capacity constraints and is a feasible solution to the inner maximization problem of \autoref{prop:lambda2-vertex}, with objective value $\frac{1}{2\Delta \cdot \alpha(G)} \sum_{uv \in E} w_{\pi}(uv) \cdot \norm{f(u)-f(v)}^2$. 
\end{proof}

We will use \autoref{lem:large-optimal} in the analysis of the dimension reduction step in the next subsection.

\subsection{Dimension Reduction} \label{sec:dimension-reduction}

The goal in this subsection is to obtain a good low-dimensional solution to the semidefinite programs in \autoref{prop:lambda2-vertex} and \autoref{prop:lambda2-edge}.

\begin{definition}[Low-Dimensional Solutions to Semidefinite Programs] \label{def:k-dim-primal}
Define 
\[
\vec{\lambda}_v^{(k)}(G) := \min_{f : V \to \R^k} \max_{A \geq 0} \frac12 \sum_{uv \in E} A(u,v) \cdot \norm{f(u)-f(v)}^2
\] 
to be the objective value of the SDP in \autoref{prop:lambda2-vertex} when restricting $f$ to be a $k$-dimensional embedding and subjecting to the same constraints.

Define $\vec{\lambda}_e^{(k)}(G)$ similarly as the objective value of the SDP in \autoref{prop:lambda2-edge} when restricting $f$ to be a $k$-dimensional embedding subjecting to the same constraints.
\end{definition}

The main result that we will prove in this subsection is that there is a good $1$-dimensional solution when the asymmetric ratio of the graph is small.

\begin{theorem}[One Dimensional Solutions to Semidefinite Programs] \label{thm:dimension-reduction}
Let $\vec{\lambda}_v^{(k)}(G)$ and $\vec{\lambda}_e^{(k)}(G)$ be as defined in \autoref{def:k-dim-primal}.
Then 
\[
\vec{\lambda}_v^{(1)}(G) \lesssim \log \big(\Delta \cdot \alpha(G)\big) \cdot \vec{\lambda}_2^{v*}(G)
\quad {\rm and} \quad
\vec{\lambda}_e^{(1)}(G) \lesssim \log \alpha(G) \cdot \vec{\lambda}_2^{e*}(G).
\]
\end{theorem}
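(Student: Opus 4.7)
My plan is to construct a one-dimensional SDP solution $f'$ by projecting an optimal $n$-dimensional solution $f^*$ along a single random Gaussian direction, and to bound the resulting loss using the large optimal property (\autoref{lem:large-optimal}) together with an adaptation of the analysis in \cite{JPV22}. I describe the argument for $\vec{\lambda}_v^{(1)}(G)$; the argument for $\vec{\lambda}_e^{(1)}(G)$ is parallel and, via the edge-capacity version of \autoref{lem:large-optimal}, removes the factor $\Delta$.

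Let $f^*:V\to\R^n$ attain $\vec{\lambda}_2^{v*}(G)$ in \autoref{prop:lambda2-vertex}, draw $g\sim N(0,I_n)$, and set $f'(v):=\inner{g}{f^*(v)}$. The constraint $\sum_v\pi(v)f'(v)=0$ holds deterministically by linearity; $\mathbb{E}\bigl[\sum_v\pi(v)f'(v)^2\bigr]=\sum_v\pi(v)\|f^*(v)\|^2=1$, and standard Gaussian concentration yields $\sum_v\pi(v)f'(v)^2=\Theta(1)$ with constant probability, so after a deterministic rescale $f'$ is a valid one-dimensional candidate in \autoref{def:k-dim-primal}. For any feasible $A$, the Eulerian together with the row/column-sum constraints force $A(u,v)\le w_\pi(uv):=\min\{\pi(u),\pi(v)\}$, and consequently
\[
\nu^{v*}_{f'}(G) \;\le\; \tfrac12\sum_{uv\in E} w_\pi(uv)\bigl(f'(u)-f'(v)\bigr)^2 \;=:\; S(g),
\]
a scalar Gaussian quadratic form in $g$ whose mean equals $U:=\tfrac12\sum_{uv}w_\pi(uv)\|f^*(u)-f^*(v)\|^2$. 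Applying \autoref{lem:large-optimal} at $f^*$ yields $U\le\Delta\alpha(G)\cdot\vec{\lambda}_2^{v*}(G)$.

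The remaining task is to upgrade this mean estimate to $\nu^{v*}_{f'}(G)\lesssim\log(\Delta\alpha(G))\cdot\vec{\lambda}_2^{v*}(G)$ with positive probability. A direct concentration of $S(g)$ around $U$ is too lossy, since $U$ itself can be as large as $\Delta\alpha(G)\cdot\vec{\lambda}_2^{v*}(G)$; instead, following \cite{JPV22}, I would bucket the edges into $O(\log(\Delta\alpha(G)))$ geometric scales according to $\|f^*(u)-f^*(v)\|^2$ and, within each scale, apply a Hanson--Wright concentration to show that the inner maximum restricted to that scale exceeds its own expectation by at most a constant factor. The large optimal property, which follows from Hoffman's circulation lemma (\autoref{lem:Hoffman}), is the key structural input: it guarantees that for every embedding $f$ the polytope maximum $\nu^{v*}_f(G)$ is within a factor $\Delta\alpha(G)$ of the trivial $w_\pi$-envelope $\sum_{uv}w_\pi(uv)\|f(u)-f(v)\|^2$, so the effective dimension of the Eulerian polytope that matters for concentration reduces from its exponentially many extreme points to the $O(\log(\Delta\alpha(G)))$ scales. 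Summing the bucket-wise bounds yields $\nu^{v*}_{f'}(G)\lesssim\log(\Delta\alpha(G))\cdot\vec{\lambda}_2^{v*}(G)$, which gives the claimed inequality for $\vec{\lambda}_v^{(1)}(G)$.

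The main obstacle I anticipate is carrying out the bucket-wise concentration uniformly in the choice of $A$: a naive union bound over the extreme points of the Eulerian polytope costs $\log|V|$ and destroys the $\Delta\alpha(G)\to\log(\Delta\alpha(G))$ improvement, so \autoref{lem:large-optimal} and \autoref{lem:Hoffman} are exactly what let us replace the polytope maximum by a scale-by-scale analysis against $w_\pi$. The edge-conductance bound $\vec{\lambda}_e^{(1)}(G)\lesssim\log\alpha(G)\cdot\vec{\lambda}_2^{e*}(G)$ follows from the same template with $w_\pi$ replaced by the edge weights $w$: the edge-capacity constraint $A(u,v)\le w(uv)$ supplies the analogous pointwise upper bound, and the edge-capacity version of \autoref{lem:large-optimal} yields $\sum_{uv}w(uv)\|f^*(u)-f^*(v)\|^2\le \alpha(G)\cdot\vec{\lambda}_2^{e*}(G)$ without any $\Delta$ factor.
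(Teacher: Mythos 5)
Your skeleton---project the optimal embedding along Gaussian directions and control the loss via the large optimal property (\autoref{lem:large-optimal}, i.e.\ Hoffman's circulation lemma)---uses the right ingredients, but the step where the $\log$ factor is supposed to materialize does not hold up as written, and it is also not how the paper argues. The paper never projects directly to one dimension: it first projects to $k=\Theta(\log(\Delta\cdot\alpha(G)))$ dimensions (\autoref{thm:JPV}), where the per-edge distortion has tails $e^{-c\eps^2k}$; it splits edges into ``good'' ones (projected squared length at most $e^{2\eps}$ times the original) and ``bad'' ones, bounds the good part using feasibility of $A$, bounds the bad part \emph{in expectation} after replacing $A(u,v)$ by the capacity $w(uv)$ resp.\ $w_\pi(uv)$---a single fixed weighting, so no union bound or chaining over the Eulerian polytope is ever needed---and then uses \autoref{lem:large-optimal} so that $e^{-c\eps^2k}\cdot\alpha(G)$ (resp.\ $e^{-c\eps^2 k}\cdot\Delta\alpha(G)$) is $O(1)$. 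The $\log$ factor is paid in a separate elementary step: picking the best of the $k$ coordinates (\autoref{lem:k-to-1}).

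The concrete gaps in your one-shot plan are: (i) nothing bounds the number of geometric scales of $\norm{f^*(u)-f^*(v)}^2$ by $O(\log(\Delta\cdot\alpha(G)))$; the normalization $\sum_v\pi(v)\norm{f^*(v)}^2=1$ does not prevent edge lengths from spanning arbitrarily many scales; (ii) Hanson--Wright controls a single Gaussian quadratic form, not the supremum over the Eulerian polytope of such forms, and you supply no substitute for the union bound you rightly reject---the claim that \autoref{lem:large-optimal} ``reduces the effective dimension'' to the scales is an assertion, not an argument; (iii) even granting per-scale concentration around ``its own expectation,'' that expectation is $\mathbb{E}_g[\max_A(\cdot)]$, which cannot be exchanged with the maximum and is precisely the kind of quantity one must bound in terms of $\vec{\lambda}_2^{v*}(G)$, so summing over buckets does not yield $\log(\Delta\cdot\alpha(G))\cdot\vec{\lambda}_2^{v*}(G)$. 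If you want to keep a genuinely one-dimensional projection, the repair is to imitate the paper's good/bad split with a threshold $T=\Theta(\log(\Delta\cdot\alpha(G)))$ rather than bucketing: cap each edge term at $T\norm{f^*(u)-f^*(v)}^2$ (the capped sum is at most $T\cdot\vec{\lambda}_2^{v*}(G)$ by feasibility of $A$), bound the excess by replacing $A(u,v)$ with $w_\pi(uv)$ and using $\mathbb{E}\big[(\langle g,d\rangle^2-T\norm{d}^2)^+\big]\lesssim\sqrt{T}\,e^{-T/2}\norm{d}^2$ together with \autoref{lem:large-optimal}, and lower bound the denominator $\sum_v\pi(v)f'(v)^2$ with constant probability by a second-moment (Paley--Zygmund) argument rather than ``standard concentration,'' since a rank-one Gaussian quadratic form does not concentrate. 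With those changes your route becomes a valid alternative to the paper's two-step argument; as proposed, the central concentration step is missing.
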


\begin{remark} \label{r:tight}
Using the tight example in~\cite{KLT22} for undirected vertex expansion and the standard reduction from undirected vertex expansion to directed edge conductance, we can show that the second inequality in \autoref{thm:dimension-reduction} is tight up to a constant factor.
However, as this example has large maximum degree, we cannot conclude that the first inequality in \autoref{thm:dimension-reduction} is also tight.
\end{remark}

\subsubsection{Previous Work} \label{sec:dimension-reduction-previous}

For undirected vertex expansion, there are two different proofs~\cite{KLT22,JPV22} of an analogous dimension reduction result with a factor of $\log \Delta$ loss.

In~\cite{KLT22}, the approach was to first construct the dual SDP of $\lambda_2^*$, where the objective function is of the form $\min_{f:V \to \R^n} \sum_{v \in V} \pi(v) \cdot \max_{u:uv \in E} \norm{f(u)-f(v)}^2$.
Since each maximum is over at most $\Delta$ terms,
one can use the analysis of the Gaussian projection method in~\cite{LRV13} to directly project $f$ to a $1$-dimensional solution,
and prove that the expected maximum is at most a factor of $\log \Delta$ larger using properties of Gaussian random variables.

For the semidefinite programs for $\vec{\lambda}_2^{v*}(G)$ and $\vec{\lambda}_2^{e*}(G)$, however, the objective function of the dual SDP is of the form $\min_{f:V \to \R^n} \sum_{v \in V} \pi(v) \cdot \max_{u:uv \in E} \big( \norm{f(u)-f(v)}^2 - r(u) + r(v) \big)$ where $r(u)$ is a real number (see \autoref{lem:l1-dual-vertex} for the $1$-dimensional version).
Since the contribution of $-r(u)+r(v)$ could be negative, the same approach of projecting $f$ does not work anymore.
(We have an example showing that the random projection algorithm in~\cite{LRV13,KLT22} will lose a factor of $\log \alpha(G)$, even when the maximum degree is constant.)

Instead, we will follow the two-step approach of projecting the (primal) SDP solution in~\cite{JPV22}.
In the first step, the $n$-dimensional solution to $\lambda_2^{*}$ is projected to a $O(\log \Delta)$-dimensional solution, while the objective value only increases by a constant factor.
Then the $O(\log \Delta)$-dimension solution is reduced to a $1$-dimension solution, by choosing the best coordinate and losing a factor of $O(\log \Delta)$ as in \cite{OZ22}.

To analyze the first step, they proved a dimension reduction theorem for maximum weighted matchings.
We observe that their proof only needs the large optimal property of maximum matching as discussed in \autoref{sec:large-optimal} (but not any other property specific to matchings), and so it also works for maximum weighted Eulerian subgraphs in our problems with \autoref{lem:large-optimal} about their large optimal property in place.

\subsubsection{Random Projection}

We remark that the arguments in this subsubsection are essentially the same as in \cite{JPV22}. 
We cannot directly use their theorem as a black box and so we reproduce their arguments here.

\begin{definition}[Random Projection Algorithm] \label{def:random-projection-algorithm}
Let $G=(V,E)$ be a directed graph and $f: V \to \R^n$ be an embedding of the vertices in $G$.
For $1 \leq i \leq k$, let $g_i \sim \mathcal{N}(0,I_n)$ be an i.i.d. Gaussian vector.
Define $\Gamma : \R^n \to \R^k$ to be the random Gaussian projection operator with 
\[
\Gamma f(v) = \frac{1}{\sqrt{k}} \big( \ip{f(v)}{g_i} \big)_{i=1}^k.
\]
\end{definition}

The following properties of the random projection algorithm will be used.

\begin{lemma}[Gaussian Properties~\cite{MMR19,JPV22}] \label{lem:Gaussian}
Let $G=(V,E)$ be a directed graph and $f: V \to \R^n$ be an embedding of the vertices in $G$.
Let $\Gamma: \R^n \to \R^k$ be the random Guassian projection operator in \autoref{def:random-projection-algorithm} and let $h : V \to \R^k$ be the random projected solution with $h(v) := \Gamma f(v)$ for $v \in V$.
There exists a constant $c$ that satisfies the following two properties.
For all $u,v \in V$,
\[
\Pr_h \big[ \norm{h(u) - h(v)} \notin e^{\pm \eps} \norm{f(u)-f(v)} \big] \leq e^{-c \eps^2 k}.
\]
For all $u,v \in V$, let ${\mathcal{E}}_{u,v}$ be the event that $\norm{h(u) - h(v)} \geq e^{\eps} \norm{f(u)-f(v)}$, then
\[
\E_h \bigg[ \one_{\mathcal{E}_{u,v}} \bigg( \frac{\norm{h(u)-h(v)}^2}{\norm{f(u)-f(v)}^2} - e^{2\eps} \bigg) \bigg] \leq e^{-c\eps^2 k}. 
\]
\end{lemma}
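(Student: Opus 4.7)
The plan is to reduce both claims to concentration properties of a single scaled chi-squared random variable, which is possible because the projection acts linearly and the $g_i$ are i.i.d.\ standard Gaussians. Fix a pair $u, v \in V$ and set $d := f(u) - f(v) \in \R^n$. By \autoref{def:random-projection-algorithm}, $h(u) - h(v) = \tfrac{1}{\sqrt{k}}(\ip{d}{g_i})_{i=1}^k$, and since each $g_i \sim \mathcal{N}(0, I_n)$, the scalars $\ip{d}{g_i}/\norm{d}$ are i.i.d.\ standard Gaussians. Hence
\[
X := \frac{\norm{h(u)-h(v)}^2}{\norm{f(u)-f(v)}^2} = \frac{1}{k}\sum_{i=1}^k Z_i^2, \quad Z_i \sim \mathcal{N}(0,1) \text{ i.i.d.},
\]
has the distribution of $\tfrac{1}{k}\chi^2_k$, with $\E[X] = 1$ and $\E[X^2] = 1 + 2/k$. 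Both claims now become statements about this one random variable and are independent of the pair $u, v$.

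For the first property I would invoke the standard Laplace-transform/Chernoff analysis of chi-squared variables based on the MGF $\E[e^{\lambda X}] = (1 - 2\lambda/k)^{-k/2}$, valid for $\lambda < k/2$. Optimizing $\lambda$ against each tail yields $\Pr[X \geq e^{2\eps}] \leq \exp(-\tfrac{k}{2}(e^{2\eps} - 1 - 2\eps))$ and $\Pr[X \leq e^{-2\eps}] \leq \exp(-\tfrac{k}{2}(e^{-2\eps} - 1 + 2\eps))$. The convex inequality $e^t - 1 - t \geq t^2/2$, applied at $t = \pm 2\eps$, lower-bounds each of these exponents by $k\eps^2$, so a union bound delivers the first property with any sufficiently small absolute constant $c$ (working on a bounded range of $\eps$).

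For the second property, let $a := e^{2\eps}$. Rather than directly integrating the chi-squared tail via $\E[\one_{X \geq a}(X-a)] = \int_0^\infty \Pr[X \geq a + t]\,dt$, I would use a short Cauchy--Schwarz argument:
\[
\E\bigl[\one_{X \geq a}(X - a)\bigr] \;\leq\; \E\bigl[\one_{X \geq a}\,X\bigr] \;\leq\; \sqrt{\Pr[X \geq a]}\cdot\sqrt{\E[X^2]} \;=\; \sqrt{\Pr[X \geq a]}\cdot\sqrt{1 + 2/k}.
\]
Substituting the tail bound from the first property gives an upper bound of the form $O(1)\cdot e^{-c\eps^2 k/2}$, which matches the required form after halving the constant $c$ and absorbing the $\sqrt{1 + 2/k}$ factor.

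The only real obstacle is the first step, namely the Chernoff computation for the scaled chi-squared: one has to check carefully that the exponent degrades no worse than $\eps^2 k$ uniformly in the relevant range of $\eps$, which reduces to the elementary convex inequality above. Once that is in hand, the second property follows from Cauchy--Schwarz with essentially no additional probabilistic content, which is why it is natural to prove both parts in tandem with the same constant $c$ (adjusted by a constant factor).
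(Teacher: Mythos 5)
The paper never proves this lemma itself (it is imported from [MMR19, JPV22]), and your reduction to the scaled chi-squared variable $X=\frac1k\sum_{i=1}^k Z_i^2$ followed by a Chernoff bound is exactly the standard route; your Cauchy--Schwarz treatment of the truncated expectation is a legitimate, slightly lossier alternative to the usual tail-integration. The Chernoff exponents you state are correct. One small slip: $e^t-1-t\ge t^2/2$ holds only for $t\ge 0$, so it cannot be "applied at $t=\pm2\eps$"; for the lower tail you need instead, e.g., $e^{-s}-1+s\ge s^2/2-s^3/6\ge s^2/3$ for $s\in[0,1]$, which still gives exponent $\gtrsim k\eps^2$ on a bounded range of $\eps$, consistent with your hedge about a small absolute constant $c$.

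The substantive issue is the final "absorb the constants" step. The union bound gives $2e^{-c_0\eps^2k}$ and Cauchy--Schwarz gives $\sqrt{1+2/k}\,e^{-c_0\eps^2k/2}$, and a bound $Ce^{-c_0\eps^2k/2}$ with $C>1$ is at most $e^{-c\eps^2k}$ only when $\eps^2k\ge\Omega(1)$; when $\eps^2k=O(1)$ your intermediate bounds exceed $1$ while the target $e^{-c\eps^2k}$ is a constant below $1$, so nothing is proved in that regime (note a bounded range of $\eps$ does not exclude it, e.g.\ $\eps\asymp 1/\sqrt{k}$). The patch is easy but needs saying: for the second property, $\E\big[\one_{\mathcal{E}_{u,v}}(X-e^{2\eps})\big]\le\E[(X-1)^+]=\frac12\E|X-1|\le\frac12\sqrt{2/k}\le 1/\sqrt2<1$, which beats $e^{-c\eps^2k}$ once $c$ is chosen small relative to the threshold on $\eps^2k$; for the first property a Chernoff bound cannot help at all there, since the two-sided failure probability tends to $1$ as $\eps\to0$, and one needs a small-ball estimate (the density of $X$ near its mean is $\Theta(\sqrt{k})$, so $\Pr[X\in[e^{-2\eps},e^{2\eps}]]\gtrsim\min\{1,\eps\sqrt{k}\}\ge c\eps^2k$ in this regime), or else one states the lemma with a leading factor $2$ or under the restriction $\eps^2k\gtrsim1$ --- which is all that the application in \autoref{thm:JPV} uses, as there $\eps$ is a fixed constant and $k$ is taken of order $\eps^{-2}\log\alpha(G)$ plus a large constant.
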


The main technical result is the following adaptation of the dimension reduction theorem for maximum matchings in~\cite{JPV22}.

\begin{theorem}[Dimension Reduction for Maximum Weighted Eulerian Subgraphs] \label{thm:JPV}
Let $\vec{\lambda}_v^{(k)}(G)$ and $\vec{\lambda}_e^{(k)}(G)$ be as defined in \autoref{def:k-dim-primal}.
There exists a constant $C$ such that
\[
\vec{\lambda}_v^{\big( C \cdot \log (\Delta \cdot \alpha(G)) \big)}(G) \lesssim \vec{\lambda}_2^{v*}(G)
\quad {\rm and} \quad
\vec{\lambda}_e^{\big( C \cdot \log \alpha(G) \big)}(G) \lesssim \vec{\lambda}_2^{e*}(G).
\]
\end{theorem}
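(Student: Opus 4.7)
The plan is to adapt the Jain--Pham--Vuong dimension reduction argument from maximum matchings to maximum weighted Eulerian subgraphs, using the large optimal property from \autoref{lem:large-optimal} as the structural replacement for the edge-coloring fact used in~\cite{JPV22}. I describe the vertex-capacitated case; the edge-capacitated case is identical except that the $\Delta\cdot\alpha(G)$ factor becomes $\alpha(G)$ because \autoref{lem:large-optimal} saves a factor of $\Delta$ in the edge-capacitated version.

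Start by taking an optimal embedding $f^*\colon V\to\R^n$ for the SDP in \autoref{prop:lambda2-vertex}, and let $h = \Gamma f^*\colon V\to\R^k$ be its random Gaussian projection as in \autoref{def:random-projection-algorithm}. Feasibility is almost free: since $\Gamma$ is linear, the centering constraint $\sum_v \pi(v)\,h(v) = \Gamma\bigl(\sum_v \pi(v) f^*(v)\bigr) = \vec 0$ is preserved deterministically, and standard chi-squared concentration shows $\sum_v \pi(v)\|h(v)\|^2 \ge 1/2$ with probability $1-e^{-\Omega(k)}$, after which we rescale.

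The core of the argument is bounding $\nu_h^{v*}(G)$. Let $X_{uv} := \|h(u)-h(v)\|^2 / \|f^*(u)-f^*(v)\|^2$ and fix a small constant $\eps>0$. For any feasible Eulerian reweighting $A$,
\[
\sum_{uv\in E} A(u,v)\,\|h(u)-h(v)\|^2
\le e^{2\eps}\sum_{uv\in E} A(u,v)\,\|f^*(u)-f^*(v)\|^2
+ \sum_{uv\in E} A(u,v)\,\|f^*(u)-f^*(v)\|^2\,\one_{\mathcal E_{u,v}}\bigl(X_{uv}-e^{2\eps}\bigr),
\]
where $\mathcal E_{u,v}$ is the blow-up event from \autoref{lem:Gaussian}. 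Taking $\max_A$ and using subadditivity of the max, the first term contributes at most $e^{2\eps}\cdot\nu_{f^*}^{v*}(G) = e^{2\eps}\cdot\vec\lambda_2^{v*}(G)$. For the second (non-negative) term, the trivial entrywise bound $A(u,v)\le w_\pi(uv)$ from \autoref{def:Eulerian} allows us to replace $A(u,v)$ by $w_\pi(uv)$, yielding
\[
\nu_h^{v*}(G) \le e^{2\eps}\,\vec\lambda_2^{v*}(G)
+ \tfrac12 \sum_{uv\in E} w_\pi(uv)\,\|f^*(u)-f^*(v)\|^2\,\one_{\mathcal E_{u,v}}\bigl(X_{uv}-e^{2\eps}\bigr).
\]
Taking expectation, \autoref{lem:Gaussian} bounds the per-edge expectation by $e^{-c\eps^2 k}$, and \autoref{lem:large-optimal} bounds $\sum_{uv\in E} w_\pi(uv)\,\|f^*(u)-f^*(v)\|^2 \le 2\Delta\alpha(G)\cdot\vec\lambda_2^{v*}(G)$, so
\[
\E[\nu_h^{v*}(G)] \lesssim \bigl(e^{2\eps} + e^{-c\eps^2 k}\,\Delta\alpha(G)\bigr)\cdot \vec\lambda_2^{v*}(G).
\]
Choosing $k = C\log(\Delta\alpha(G))$ for $C$ sufficiently large in terms of $\eps,c$ makes the bracketed factor $O(1)$. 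A Markov bound on $\nu_h^{v*}(G)$ combined with the chi-squared concentration for the denominator produces a realization of $h$ that, after rescaling, is a feasible $k$-dimensional solution with objective $\lesssim \vec\lambda_2^{v*}(G)$.

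The main conceptual hurdle is the coupling between the adversarial $\max_A$ and the random projection: a naive union bound over edges to control blow-ups would force $k=\Omega(\log|E|)$, which is far too large. The large optimal property circumvents this because it certifies that the optimal Eulerian reweighting already captures an $\Omega(1/(\Delta\alpha(G)))$ fraction of the total $w_\pi$-weighted squared edge length, so the only quantitative loss from absorbing rare blow-ups into a worst-case $w_\pi$ bound is the factor $\Delta\alpha(G)$ inside a logarithm. This is precisely why the asymmetric ratio is the right parameter, and why the dimension $k$ depends on $\log\alpha(G)$ rather than on $\log |V|$ as in the Johnson--Lindenstrauss approach of~\cite{OZ22}.
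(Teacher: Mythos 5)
Your proposal is correct and follows essentially the same route as the paper's proof: Gaussian projection of the optimal SDP embedding, splitting the objective over blow-up versus non-blow-up edges, decoupling the bad term from the adversarial $A$ via the entrywise capacity bound $A(u,v)\le w_\pi(uv)$ (resp. $w(uv)$), bounding its expectation edge-by-edge with \autoref{lem:Gaussian}, and invoking the large optimal property of \autoref{lem:large-optimal} to pay only a $\log(\Delta\cdot\alpha(G))$ (resp. $\log\alpha(G)$) price in the dimension. The only cosmetic differences are that you present the vertex-capacitated case and defer the edge case (the paper does the reverse) and that you handle the denominator by quadratic-form concentration rather than the paper's per-vertex application of \autoref{lem:Gaussian} plus Markov, both of which are valid.
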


\begin{proof}
The proofs of the two inequalities are essentially the same,
and we explain the proof of the second inequality here.
Let $G=(V,E)$ be a directed graph and $f: V \to \R^n$ be an optimal embedding of the vertices in $G$ such that 
$\nu_f^{e*}(G) = \vec{\lambda}_2^{e*}(G)$.
Let $h : V \to \R^k$ be the random projected solution with $h(v) := \Gamma f(v)$ for $v \in V$.
We would like to use $h$ as a solution to $\vec{\lambda}_e^{(k)}(G)$.
First, note that 
\[
\sum_{v \in V} d(v) \cdot h(v)
= \sum_{v \in V} d(v) \cdot \Gamma f(v)
= \Gamma \Big( \sum_{v \in V} d(v) \cdot f(v) \Big) 
= 0,
\]
and so $h$ also satisifes this constraint in the SDP in \autoref{prop:lambda2-edge}.
But the normalization constraint $\sum_{v \in V} d(v) \cdot \norm{h(v)}^2 = 1$ may not be satisfied, and the objective value $\nu_h^{e*}(G) = \max_{A \geq 0} \frac12 \sum_{uv \in E} A(u,v) \cdot \norm{h(u)-h(v)}^2$ may be bigger than $\nu_f^{e*}(G)$.
Our plan is to prove that 
\begin{equation} \label{e:plan}
\vec{\lambda}_e^{(k)}(G) \leq \frac{\nu_h^{e*}(G)}{\sum_{v \in V} d(v) \norm{h(v)}^2}
\lesssim \frac{\nu_f^{e*}(G)}{\sum_{v \in V} d(v) \norm{f(v)}^2} 
= \vec{\lambda}_2^{e*}(G),
\end{equation}
when the dimension $k \geq C \cdot \log \alpha(G)$ for some large enough constant $C$,
and this would imply that a scaled version of $h$ will satisfy the constraint with objective value at most $O\big(\vec{\lambda}_2^{e*}(G)\big)$.

The main job is to bound $\nu_h^{e*}(G)$, 
for which we use the arguments in~\cite{JPV22}.
Given $h : V \to \R^k$,
let ${\mathcal B} = \{ uv \in E \mid \norm{h(u)-h(v)}^2 \geq e^{2\eps} \cdot \norm{f(u)-f(v)}^2 \}$
be the set of ``bad edges'' where the projected length is considerably longer than the original length.
We can bound $\nu_h^{e*}(G)$ in terms of the edges in ${\cal B}$ as follows.
For any Eulerian subgraph $A$ that satisfies the constraints in \autoref{prop:lambda2-edge}, twice its objective value is
\begin{eqnarray*}
& & \sum_{uv \notin {\cal B}} A(u,v) \norm{h(u) - h(v)}^2 
+ \sum_{uv \in {\cal B}} A(u,v) \norm{h(u)-h(v)}^2
\\
& = & \sum_{uv \notin {\cal B}} A(u,v) \norm{h(u) - h(v)}^2 
+ \sum_{uv \in {\cal B}} A(u,v) \big( \norm{h(u)-h(v)}^2 - e^{2\eps} \norm{f(u)-f(v)}^2 + e^{2\eps} \norm{f(u)-f(v)}^2 \big)
\\
& \leq & e^{2\eps} \sum_{uv \in E} A(u,v) \norm{f(u)-f(v)}^2
+ \sum_{uv \in {\cal B}} A(u,v) \big( \norm{h(u)-h(v)}^2 - e^{2\eps} \norm{f(u)-f(v)}^2 \big)
\\
& \leq & 2 e^{2\eps} \nu_{f}^{e*}(G) + \sum_{uv \in {\cal B}} w(uv) \big( \norm{h(u)-h(v)}^2 - e^{2\eps} \norm{f(u)-f(v)}^2 \big),
\end{eqnarray*}
where the last inequality is because $A$ is a feasible solution to the SDP in \autoref{prop:lambda2-edge}.
Since the upper bound on the last line no longer depends on $A$, it follows that
\begin{eqnarray*}
\E_h[ 2\nu_h^{e*}(G) ]
& \leq & 2e^{2\eps} \nu_f^{e*}(G) 
+ \E_h \Big[ \sum_{uv \in {\cal B}} w(uv) \big( \norm{h(u)-h(v)}^2 - e^{2\eps} \norm{f(u)-f(v)}^2 \big) \Big]
\\
& = & 2e^{2\eps} \nu_f^{e*}(G) 
+ \sum_{uv \in E} w(uv) \cdot \E_h \big[ \one_{{\cal E}_{u,v}}\big( \norm{h(u)-h(v)}^2 - e^{2\eps} \norm{f(u)-f(v)}^2 \big) \big]
\\
& \leq & 2e^{2\eps} \nu_f^{e*}(G) + e^{-c \eps^2 k} \sum_{uv \in E} w(uv) \norm{f(u)-f(v)}^2
\\
& \leq & 2e^{2\eps} \nu_f^{e*}(G) + 2 e^{-c \eps^2 k} \cdot \alpha(G) \cdot \nu_f^{e*}(G),
\end{eqnarray*}
where the second last inequality is by the second property in \autoref{lem:Gaussian}, and the last inequality is by the large optimal property in \autoref{lem:large-optimal}.
By choosing some constant $\eps \leq 1/4$ and $k \gtrsim \frac{1}{c\eps^2} \log \alpha(G)$, it follows that
\[
\E_h[\nu_h^{e*}(G)]
\leq \big(e^{2\eps} + e^{-c \eps^2 k} \alpha(G) \big) \cdot \nu_f^{e*}(G)
\leq 2 \cdot \nu_f^{e*}(G).
\]
Finally, we lower bound the denominator.
Let ${\cal E}'_v$ be the event that $\norm{h(v)}^2 < e^{-2\eps} \norm{f(v)}^2$.
Using a similar argument as above,
\[
\sum_{v \in V} d(v) \cdot \norm{h(v)}^2 
\geq e^{-2\eps} \sum_{v \in V} d(v) \norm{f(v)}^2 
- \sum_{v \in V} d(v) \cdot \one_{{\cal E}_v} \big(e^{-2\eps} \norm{f(v)}^2 - \norm{h(v)}^2 \big).
\]
We can view the event ${\cal E}'_v$ as ${\cal E}'_{v,0}$ where the zero vector is one of the embedding vectors, so that $\norm{h(v)}^2 < e^{-2\eps} \norm{f(v)}^2$ is equivalent to $\norm{h(v)-h(0)}^2 < e^{-2\eps} \norm{f(v)-f(0)}^2$  .
Thus, we can apply the first property in \autoref{lem:Gaussian} to bound $\E_h[\one_{{\cal E}'_v}] = \Pr[\one_{{\cal E}'_v}]$, so that
\begin{eqnarray*}
\E_h \Big[\sum_{v \in V} d(v) \cdot \one_{{\cal E}_v} \big(e^{-2\eps} \norm{f(v)}^2 - \norm{h(v)}^2 \big)\Big] 
\leq 
\sum_{v \in V} e^{-2\eps} d(v) \norm{f(v)}^2 \cdot \E_h[\one_{{\cal E}'_v}]
 \leq e^{-c \eps^2 k -2\eps} \sum_{v \in V} d(v) \norm{f(v)}^2.
\end{eqnarray*}
By Markov's inequality and the same choice of $\eps$ and $k$, with probability at least $9/10$,
\[
\sum_{v \in V} d(v) \cdot \norm{h(v)}^2
\geq e^{-2\eps} \big(1 - 10 e^{-c \eps^2 k} \big) \sum_{v \in V} d(v) \norm{f(v)}^2
\geq \frac12 \sum_{v \in V} d(v) \norm{f(v)}^2. 
\]
Therefore, \eqref{e:plan} follows by combining the upper bound on the numerator and this lower bound on the denominator.

The proof of the first inequality is the same, with $d(v)$ replaced by $\pi(v)$, $w(uv)$ replaced by $w_{\pi}(uv)$, and with $\alpha(G)$ in the large optimal property replaced by $\Delta \cdot \alpha(G)$ as stated in \autoref{lem:large-optimal}.
\end{proof}

By choosing the best coordinate from a $k$-dimensional embedding, one can achieve the following bound.
The proof is standard and is omitted; see \cite[Proposition 2.9]{OZ22}.

\begin{lemma}[One Dimensional Solution from $k$-Dimensional Solution] \label{lem:k-to-1}
Let $\vec{\lambda}_v^{(k)}(G)$ and $\vec{\lambda}_e^{(k)}(G)$ be as defined in \autoref{def:k-dim-primal}.
Then 
\[
\vec{\lambda}_v^{(1)}(G) \leq k \cdot \vec{\lambda}_v^{(k)}(G)
\quad {\rm and} \quad
\vec{\lambda}_e^{(1)}(G) \leq k \cdot \vec{\lambda}_e^{(k)}(G)
\]
\end{lemma}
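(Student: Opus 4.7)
The plan is to take an optimal $k$-dimensional embedding, decompose it coordinate by coordinate, and show that one of the coordinates yields a valid $1$-dimensional solution whose normalized objective value is at most $k$ times the $k$-dimensional objective value. The argument is essentially the standard pigeonhole/averaging trick; I describe it for $\vec{\lambda}_e^{(k)}(G)$, since the proof for $\vec{\lambda}_v^{(k)}(G)$ is identical with $d_w(v)$ replaced by $\pi(v)$.

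First, let $f^* : V \to \R^k$ be an optimal embedding for $\vec{\lambda}_e^{(k)}(G)$, and write its coordinates as $f^*(v) = (f_1(v), \ldots, f_k(v))$ so that each $f_i : V \to \R$. The orthogonality constraint $\sum_{v} d_w(v) \, f^*(v) = \vec{0}$ breaks into $k$ scalar constraints $\sum_{v} d_w(v) \, f_i(v) = 0$, one per coordinate, so each $f_i$ already satisfies the orthogonality condition of the $1$-dimensional SDP. For the normalization, let $x_i := \sum_v d_w(v) f_i(v)^2$, so that $\sum_{i=1}^k x_i = \sum_v d_w(v) \|f^*(v)\|^2 = 1$, and hence there exists some coordinate $i^\ast$ with $x_{i^\ast} \geq 1/k$.

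Next, I would bound the inner maximization for coordinate $i^\ast$. Since $\|f^*(u) - f^*(v)\|^2 = \sum_{i=1}^k (f_i(u) - f_i(v))^2 \geq (f_{i^\ast}(u) - f_{i^\ast}(v))^2$ for every edge $uv$, and every feasible $A$ has nonnegative entries, we get
\[
\max_{A \geq 0} \tfrac{1}{2} \sum_{uv \in E} A(u,v) \, (f_{i^\ast}(u) - f_{i^\ast}(v))^2 \;\leq\; \max_{A \geq 0} \tfrac{1}{2} \sum_{uv \in E} A(u,v) \, \|f^*(u) - f^*(v)\|^2 \;=\; \vec{\lambda}_e^{(k)}(G),
\]
where the maxima are taken over $A$ satisfying the feasibility constraints of \autoref{prop:lambda2-edge}.

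Finally, I would rescale. Set $g(v) := f_{i^\ast}(v) / \sqrt{x_{i^\ast}}$. Then $\sum_v d_w(v) g(v) = 0$ and $\sum_v d_w(v) g(v)^2 = 1$, so $g$ is a feasible $1$-dimensional embedding. Its objective value is obtained by scaling the previous bound by $1/x_{i^\ast}$, giving
\[
\vec{\lambda}_e^{(1)}(G) \;\leq\; \max_{A \geq 0} \tfrac{1}{2} \sum_{uv \in E} A(u,v) \, (g(u) - g(v))^2 \;\leq\; \frac{1}{x_{i^\ast}} \cdot \vec{\lambda}_e^{(k)}(G) \;\leq\; k \cdot \vec{\lambda}_e^{(k)}(G).
\]
The vertex version follows by running the same computation with $\pi$ in place of $d_w$ throughout. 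There is no real obstacle here; the only subtlety is verifying that all the constraints of the $1$-dimensional SDP survive the restriction to a single coordinate, which they do because each constraint in \autoref{prop:lambda2-vertex} and \autoref{prop:lambda2-edge} is either on $A$ alone or is a coordinate-wise linear/quadratic condition on $f$.
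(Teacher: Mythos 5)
Your proof is correct and is exactly the standard ``choose the best coordinate'' argument that the paper invokes (it omits the proof and cites \cite[Proposition 2.9]{OZ22}): pick the coordinate carrying at least a $1/k$ fraction of the normalization mass, bound its inner maximum by the full $k$-dimensional inner maximum using $A \geq 0$, and rescale. No gaps; the handling of the orthogonality and normalization constraints via coordinate-wise splitting and rescaling is exactly as needed.
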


\autoref{thm:dimension-reduction} follows immediately from \autoref{thm:JPV} and \autoref{lem:k-to-1}.

\subsection{Rounding Algorithms}  \label{sec:rounding-algorithms}

The main goal in this subsection is to show how to find a set of small directed vertex expansion (respectively directed edge conductance) from a solution to $\vec{\lambda}_v^{(1)}(G)$ (respectively $\vec{\lambda}_e^{(1)}(G)$).

\begin{theorem}[Rounding One Dimensional Solution] \label{thm:rounding}
For any vertex-weighted directed graph $G=(V,E,\pi)$, 
$$\vec{\psi}(G) \lesssim \sqrt{\vec{\lambda}_v^{(1)}(G)}.$$ 
For any edge-weighted directed graph $G=(V,E,w)$,
$$\vec{\phi}(S) \lesssim \sqrt{\vec{\lambda}_e^{(1)}(G)}.$$ 
\end{theorem}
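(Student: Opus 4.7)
The plan is to derive the LP dual of the one-dimensional SDPs in \autoref{prop:lambda2-vertex} and \autoref{prop:lambda2-edge}, and then to round an optimal dual solution via a Cheeger-style threshold argument applied to two different level-set orderings derived from the dual's potential function.

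First, I would write out the dual of the inner maximization over the Eulerian reweighting $A$. For the edge-capacitated SDP, introducing a multiplier $y(uv) \ge 0$ for each capacity $A(u,v) \le w(uv)$ and a multiplier $r(u) \in \R$ for each Eulerian equality, and eliminating $y$ using $A(u,v) \ge 0$, yields the $\ell_1$-style form
\[
\vec{\lambda}_e^{(1)}(G) ~=~ \min_{f, r : V \to \R} \frac{\sum_{uv \in E} w(uv) \cdot \bigl[\tfrac12(f(u)-f(v))^2 - r(u) + r(v)\bigr]^+}{\sum_{v \in V} d_w(v) \cdot f(v)^2},
\]
subject to $\sum_v d_w(v)\,f(v) = 0$, where $[x]^+ := \max(x,0)$. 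The vertex-capacitated case (this is \autoref{lem:l1-dual-vertex}) is analogous, with $\pi(v)$ replacing $d_w(v)$ and with the sum over $uv \in E$ replaced by a per-vertex maximum $\max_{u : uv \in E}$ weighted by $\pi(v)$, coming from the additional degree-equality constraints.

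Next, given an optimal dual pair $(f, r)$, I would perform the classical Cheeger threshold rounding but simultaneously on two shifted orderings, $g_+(v) := f(v) + r(v)$ and $g_-(v) := f(v) - r(v)$, defining threshold sets $S_t^{\pm} := \{v : g_{\pm}(v) > t\}$. The key algebraic observation is that the asymmetric dual cost $c(uv) := [\tfrac12(f(u)-f(v))^2 - r(u) + r(v)]^+$ dominates $\tfrac12(f(u)-f(v))^2$ exactly when $r(v) \ge r(u)$, and the excess $r(u)-r(v)$ is precisely the quantity that decides whether an edge is ``absorbed'' by the outgoing boundary of one ordering or the incoming boundary of the other. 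Concretely, after shifting $f$ by its volume-weighted median and restricting to the nonnegative side, averaging over a random threshold of $f(v)^2$ gives $\E_t[w(\delta^+(S_t^{-}))] = \sum_{uv:\,g_-(u) > g_-(v)} w(uv)(f(u)^2 - f(v)^2)$ on the $g_-$-ordering, and symmetrically $\E_t[w(\delta^-(S_t^{+}))]$ on the $g_+$-ordering. Applying Cauchy--Schwarz to each of these, as in classical Cheeger, splits the expected boundary into $\sqrt{\sum w(uv)(f(u)-f(v))^2}$ times $\sqrt{\sum w(uv)(f(u)+f(v))^2} \lesssim \sqrt{\sum_v d_w(v)\,f(v)^2}$. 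Since $\vec{\phi}(S) = \min\{w(\delta^+(S)), w(\delta^-(S))\}/\vol_w(S)$ takes the minimum over the two one-sided boundaries, it suffices that \emph{one} of the two orderings yields a good threshold set.

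The main obstacle is verifying that the dual cost $c(uv)$ controls the ``quadratic'' factor $\sum w(uv)(f(u)-f(v))^2$ on the correct subset of edges in each of the two orderings, so that Cauchy--Schwarz closes to $\sqrt{\vec{\lambda}_e^{(1)}(G)}$. This reduces to a case split on the signs of $f(u)-f(v)$ and $r(u)-r(v)$: when these agree, the edge contributes cleanly to one of the two orderings and the bound $\tfrac12(f(u)-f(v))^2 \le c(uv) + [r(u)-r(v)]^+$ loses nothing; when they disagree, the edge is ``reversed'' by $r$ and instead shows up in the other ordering's one-sided boundary, where the same bound applies after swapping roles. Summing both contributions gives $\sum_{uv} w(uv)(f(u)-f(v))^2 \lesssim \vec{\lambda}_e^{(1)}(G) \cdot \sum_v d_w(v) f(v)^2$ up to a constant factor, and taking the better of the two orderings yields the Cheeger bound. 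The two-ordering trick is what avoids the preprocessing and postprocessing steps of~\cite{KLT22}. The vertex-capacitated case follows the same template, with the per-edge sum in the numerator replaced by a per-vertex maximum; because each vertex contributes only its maximizing incident edge to the dual cost, the Cauchy--Schwarz step goes through with $w(uv)$ replaced by an effective per-vertex weight, yielding $\vec{\psi}(G) \lesssim \sqrt{\vec{\lambda}_v^{(1)}(G)}$ analogously.
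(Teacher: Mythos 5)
Your high-level ingredients match the paper's (dualize the inner maximization to get a potential $r$, then round by thresholding the two shifted orderings $f\pm r$), but the way you close the argument has a genuine gap. You dualize the quadratic ($\ell_2^2$) program directly and then try to take the square root inside the rounding, via Cauchy--Schwarz \emph{against the raw edge weights} $w(uv)$, which forces you to claim $\sum_{uv\in E} w(uv)(f(u)-f(v))^2 \lesssim \vec{\lambda}_e^{(1)}(G)\cdot\sum_v d_w(v)f(v)^2$ for the dual-optimal $(f,r)$. That inequality is false in general: the dual objective only controls $\big[\tfrac12(f(u)-f(v))^2 - r(u)+r(v)\big]^+$, and summing the constraint over edges leaves the term $\sum_{uv}w(uv)\,(r(u)-r(v))$, which does not vanish unless $w$ itself is Eulerian. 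Concretely, take two vertices with a heavy edge $a\to b$ and a light edge $b\to a$: the dual sets $r(a)-r(b)$ large so the heavy edge has zero cost and pays only $w(ba)\cdot(f(a)-f(b))^2$, while your left-hand side is $\approx w(ab)\cdot(f(a)-f(b))^2$; the ratio is the asymmetric ratio $\alpha(G)$, which is unbounded. So your Cauchy--Schwarz step loses a factor that the theorem cannot afford (the whole point of \autoref{thm:rounding} is that the rounding is lossless; the $\alpha(G)$-dependence must live only in the dimension reduction). A second, related inconsistency: your identity $\E_t[w(\delta^+(S_t^{-}))] = \sum_{uv:\,g_-(u)>g_-(v)} w(uv)(f(u)^2-f(v)^2)$ mixes level sets of one function ($g_-$) with a co-area computation for another ($f^2$); whichever function defines $S_t$, both the crossing condition and the telescoping differences must come from that same function.

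The paper resolves exactly this tension by splitting the square-root loss and the $r$-handling into two separate stages. First (\autoref{prop:l22-to-l1}) it converts the one-dimensional $\ell_2^2$ solution into a solution of an $\ell_1$ program by the squaring map $g\approx\pm(f+c)^2$, and the Cauchy--Schwarz there is taken against an optimal \emph{Eulerian} inner solution $B$, whose degree constraints give $\sum_{uv}B(u,v)(|g(u)|+|g(v)|)=2\sum_u\pi(u)|g(u)|$ and for which the $r$-terms would telescope -- this is what makes the square-root step lossless. Only then does it dualize the $\ell_1$ program (\autoref{lem:l1-dual-vertex}, \autoref{lem:l1-dual-edge}), where the cost is $q\ge|f(u)-f(v)|-r(u)+r(v)$, and the threshold rounding on the four functions $\max\{0,\pm(f\pm r)-c_i\}$ is purely linear: the one-sided boundary is bounded directly by $\sum_v\pi(v)q(v)$ with no Cauchy--Schwarz. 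Your proposal also leaves the denominator uncontrolled: after shifting by $r$, you need at least one of the orderings to retain $\Omega(1)$ weighted mass, which the paper gets from the median centering together with $\sum_u\pi(u)\big(|f(u)+r(u)|+|f(u)-r(u)|\big)\ge 2\sum_u\pi(u)|f(u)|=2$; this step is absent from your sketch. To repair your route you would essentially have to reintroduce the $\ell_1$ intermediate program (or an Eulerian-weighted Cauchy--Schwarz), at which point you recover the paper's proof.
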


Assuming \autoref{thm:rounding}, we can complete the proofs of the two main results.

\begin{proofof}{\autoref{thm:directed-vertex-expansion} and \autoref{thm:directed-edge-conductance}}
The easy directions are proved in \autoref{prop:directed-vertex-easy} and \autoref{prop:directed-edge-easy}.
For the hard directions, first we solve the semidefinite programs for $\vec{\lambda}_2^{v*}(G)$ in \autoref{prop:lambda2-vertex} and $\vec{\lambda}_2^{e*}(G)$ in \autoref{prop:lambda2-edge}.
Then, we use the dimension reduction result in \autoref{thm:dimension-reduction} to obtain $1$-dimensional solutions to the semidefinite programs with 
$\vec{\lambda}_v^{(1)}(G) \lesssim \log (\Delta \cdot \alpha(G)) \cdot \vec{\lambda}_2^{v*}(G)$ and 
$\vec{\lambda}_e^{(1)}(G) \lesssim \log \alpha(G) \cdot \vec{\lambda}_2^{e*}(G)$.
Then, we apply the rounding result in \autoref{thm:rounding} to establish that
\begin{equation} \label{e:refined2}
\vec{\psi}(G) 
\lesssim \sqrt{\log (\Delta \cdot \alpha(G)) \cdot \vec{\lambda}_2^{v*}(G)}
\quad {\rm and} \quad
\vec{\phi}(G) 
\lesssim \sqrt{\log \alpha(G) \cdot \vec{\lambda}_2^{e*}(G)}.
\end{equation}
Finally, we use the inequality $\alpha(G) \lesssim \Delta / \vec{\psi}(G)$ in \autoref{lem:asymmetric-ratio-vertex} and $\alpha(G) \leq 1 / \vec{\phi}(G)$ in \autoref{lem:asymmetric-ratio-edge} to obtain the final forms in \autoref{thm:directed-vertex-expansion} and \autoref{thm:directed-edge-conductance}.
\end{proofof}

We remark that all the steps in the proofs of the two main results can be implemented in polynomial time, and so these give efficient ``spectral'' algorithms to find a set of small directed vertex expansion or small directed edge conductance.

\subsubsection{Proof Structure and Auxiliary Programs}

The programs $\vec{\lambda}_v^{(1)}$ and $\vec{\lambda}_e^{(1)}$ can be considered ``$\ell_2^2$ programs'' because the embedded distance across an edge is the squared $\ell_2$ distance $\norm{f(u) - f(v)}^2$.
To prove \autoref{thm:rounding}, we first obtain a solution to the following $\ell_1$ versions of $\vec{\lambda}_v^{(1)}$ and $\vec{\lambda}_e^{(1)}$.

\begin{definition}[$\ell_1$ Version of $\vec{\lambda}_v^{(1)}$]
\label{def:l1-vertex}
Given a vertex-weighted directed graph $G = (V, E, \pi)$, let 
  \begin{align*}
    \eta_v(G) :=
     \min_{f: V \rightarrow \R} \max_{A \geq 0} &~~~ \frac12 \sum_{uv \in E} A(u,v) \cdot |f(u) - f(v)|
    \\
    \st&~~~
    A(u, v) = 0 & & \forall uv \not \in E
    \\
    &~~~
    \sum_{v \in V} A(u, v) = \sum_{v \in V} A(v, u)  & & \forall u \in V
    \\
    &~~~
    \sum_{v \in V} A(v, u) = \pi(u) & & \forall u \in V
    \\
    &~~~
    \sum_{v \in V} \pi(v) \cdot f(v) = 0
    \\
    &~~~ \sum_{v \in V} \pi(v) \cdot |f(v)| = 1.
  \end{align*}
\end{definition}

\begin{definition}[$\ell_1$ Version of $\vec{\lambda}_e^{(1)}$]
\label{def:l1-edge}
Given an edge-weighted directed graph $G = (V, E, w)$, let
  \begin{align*}
    \eta_e(G) :=
     \min_{f: V \rightarrow \R} \max_{A \geq 0} &~~~ \frac12 \sum_{uv \in E} A(u, v) \cdot |f(u) - f(v)|
    \\
    \st&~~~
    A(u, v) = 0 & & \forall uv \not \in E
    \\
    &~~~
    \sum_{v \in V} A(u, v) = \sum_{v \in V} A(v, u)  & & \forall u \in V
    \\
    &~~~
    A(u, u) \leq w(uv)  & & \forall uv \in E
    \\
    &~~~
    \sum_{v \in V} d_w(v) \cdot f(v) = 0
    \\
    &~~~ \sum_{v \in V} d_w(v) \cdot |f(v)| = 1.
  \end{align*}
\end{definition}

We will prove in \autoref{sec:l22-to-l1} that there is a square root loss by going from $\ell_2^2$ to $\ell_1$.

\begin{proposition}[Reductions from $\ell_2^2$ to $\ell_1$] \label{prop:l22-to-l1}
For any vertex-weighted directed graph $G = (V, E, \pi)$, 
\[
\eta_v(G) \lesssim \sqrt{ \vec{\lambda}_v^{(1)}(G) }.
\]
For any edge-weighted directed graph $G = (V, E, w)$, 
\[
\eta_e(G) \lesssim \sqrt{ \vec{\lambda}_e^{(1)}(G) }.
\]
\end{proposition}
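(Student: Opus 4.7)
The plan is to pass from an optimal $\ell_2^2$ embedding $f$ to a feasible $\ell_1$ embedding $g$ via the classical ``signed square'' trick $g(v) = \phi(v)\cdot|\phi(v)|$ for a suitable shift $\phi(v) = f(v) - c$, and then bound the $\eta$-objective by Cauchy--Schwarz applied to every feasible reweighting $A$. Let $f : V \to \R$ be an optimizer of $\vec{\lambda}_v^{(1)}(G)$, so $\sum_v \pi(v) f(v) = 0$ and $\sum_v \pi(v) f(v)^2 = 1$, and for $c \in \R$ set $\phi_c(v) := f(v) - c$ and $g_c(v) := \phi_c(v)\,|\phi_c(v)|$. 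The main obstacle is that the $\ell_1$ program \autoref{def:l1-vertex} requires the centering constraint $\sum_v \pi(v) g(v) = 0$; I handle this by the intermediate value theorem, since the map $c \mapsto \sum_v \pi(v) g_c(v)$ is continuous, tends to $+\infty$ as $c \to -\infty$ and to $-\infty$ as $c \to +\infty$. Fix $c^*$ with $\sum_v \pi(v) g_{c^*}(v) = 0$ and write $\phi := \phi_{c^*}$, $g := g_{c^*}$.

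A case check on the signs of $\phi(u), \phi(v)$ yields the pointwise bound
\[
  |g(u) - g(v)| \le |\phi(u) - \phi(v)|\bigl(|\phi(u)| + |\phi(v)|\bigr) = |f(u) - f(v)|\bigl(|\phi(u)| + |\phi(v)|\bigr),
\]
so for every feasible $A$ in \autoref{def:l1-vertex}, Cauchy--Schwarz gives
\[
  \sum_{uv \in E} A(u,v)\,|g(u) - g(v)| \le \sqrt{\sum_{uv} A(u,v)(f(u)-f(v))^2}\cdot\sqrt{\sum_{uv} A(u,v)\bigl(|\phi(u)|+|\phi(v)|\bigr)^2}.
\]
By optimality of $f$ in the SDP of \autoref{prop:lambda2-vertex}, the first factor is at most $\sqrt{2\vec{\lambda}_v^{(1)}(G)}$. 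For the second, expand $(|\phi(u)|+|\phi(v)|)^2 \le 2\phi(u)^2 + 2\phi(v)^2$ and use the constraints $\sum_{v} A(u,v) = \sum_{v} A(v,u) = \pi(u)$ of \autoref{prop:lambda2-vertex} to get $\sum_{uv} A(u,v)(\phi(u)^2 + \phi(v)^2) = 2 \sum_v \pi(v) \phi(v)^2$ exactly, independent of $A$. Combining these with the denominator $\sum_v \pi(v) |g(v)| = \sum_v \pi(v)\phi(v)^2$ and the fact that $\sum_v \pi(v)\phi(v)^2 = 1 + (c^*)^2 \pi(V) \ge 1$, I conclude
\[
  \eta_v(G) \le \frac{1}{2}\cdot\frac{2\sqrt{2\vec{\lambda}_v^{(1)}(G)}\cdot\sqrt{\sum_v \pi(v)\phi(v)^2}}{\sum_v \pi(v)\phi(v)^2} \le \sqrt{2\,\vec{\lambda}_v^{(1)}(G)}.
\]

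The edge case is structurally identical with $\pi$ replaced by $d_w$ throughout, starting from an optimizer $f$ of $\vec{\lambda}_e^{(1)}(G)$ with $\sum_v d_w(v) f(v) = 0$ and $\sum_v d_w(v) f(v)^2 = 1$, and choosing $c^*$ so that $\sum_v d_w(v) g_{c^*}(v) = 0$. The only change is in the second Cauchy--Schwarz factor: now the edge-capacity constraint $A(u,v) \le w(uv)$ together with the Eulerian constraint gives $\sum_u A(u,v) + \sum_u A(v,u) \le d_w(v)$, hence $\sum_{uv} A(u,v)(\phi(u)^2 + \phi(v)^2) \le \sum_v d_w(v) \phi(v)^2$. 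Since $\sum_v d_w(v) \phi(v)^2 \ge 1$, the identical computation yields $\eta_e(G) \lesssim \sqrt{\vec{\lambda}_e^{(1)}(G)}$. The only real subtlety in the argument is the centering shift $c^*$; once this is in place, the Cauchy--Schwarz step is routine because the Eulerian constraint converts the $\phi$-terms across edges into a clean vertex-sum against $\pi$ (or $d_w$).
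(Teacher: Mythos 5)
Your proposal is correct and follows essentially the same route as the paper's proof: the signed-square map $\phi \mapsto \phi|\phi|$ with a centering shift chosen so that $\sum_v \pi(v) g(v)=0$, the pointwise bound $|g(u)-g(v)| \le |f(u)-f(v)|\,(|\phi(u)|+|\phi(v)|)$, Cauchy--Schwarz, collapsing the edge sums to vertex sums via the degree (resp.\ capacity plus Eulerian) constraints, and comparing with the inner maximization of the $\ell_2^2$ program. The only differences are cosmetic: you carry the normalization factor $\sum_v \pi(v)|g(v)| = 1+(c^*)^2\pi(V)$ through and observe it only helps, which lets you skip the paper's separate argument that this mass is at most $2$, and you bound uniformly over all feasible $A$ rather than fixing the inner-max optimizer.
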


For threshold rounding, we construct the duals of $\eta_v(G)$ and $\eta_e(G)$ using linear programming duality in the inner maximization problems.

\begin{lemma}[Dual Program of $\eta_v(G)$]
\label{lem:l1-dual-vertex}
Given a vertex-weighted directed graph $G = (V, E, \pi)$, let
  \begin{align*}
    \xi_v(G) :=
     \min_{f: V \rightarrow \R}~\min_{\substack{q: V \rightarrow \R_{\ge 0} \\ r: V \rightarrow \R}} &~~~ \sum_{v \in V} \pi(v) \cdot q(v)
    \\
    \st&~~~
    q(v) \geq |f(u)-f(v)| - r(u) + r(v) & & \forall uv \in E
    \\
    &~~~
    \sum_{v \in V} \pi(v) \cdot f(v) = 0
    \\
    &~~~ \sum_{v \in V} \pi(v) \cdot |f(v)| = 1.
  \end{align*}
Then $\xi_v(G) = 2\eta_v(G)$.
\end{lemma}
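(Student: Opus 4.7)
The plan is to apply LP duality to the inner maximization in $\eta_v(G)$ with the outer variables $f$ held fixed, and then recognize the resulting dual as a rescaling of the inner minimization in $\xi_v(G)$. For any admissible $f:V \to \R$, the inner problem is a standard linear program in the nonnegative variables $\{A(u,v): uv \in E\}$, with two kinds of equality constraints: the Eulerian balance $\sum_v A(u,v) - \sum_v A(v,u) = 0$ and the in-degree equation $\sum_v A(v,u) = \pi(u)$.

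A preliminary step is to relax each in-degree equality to $\sum_v A(v,u) \le \pi(u)$ without changing the optimum. This uses the standing assumption following \autoref{def:directed-vertex-primal} that $G$ has a self-loop at every vertex: given any solution feasible for the relaxed program, one can add at each vertex $u$ a self-loop of weight $\pi(u) - \sum_v A(v,u) \ge 0$, which restores both in- and out-degree to exactly $\pi(u)$, preserves the Eulerian property, and leaves the objective unchanged since $|f(u)-f(u)| = 0$. This relaxation is what forces the dual variable attached to the in-degree constraint to be nonnegative, matching the sign convention on $q$ in $\xi_v(G)$; I expect this to be the only non-routine step in the proof.

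Now I would assign dual variable $r(u) \in \R$ (free) to the Eulerian equality at $u$ and $q(u) \ge 0$ to the relaxed in-degree inequality at $u$. The variable $A(u,v)$ appears with coefficient $+1$ in the Eulerian constraint at $u$, coefficient $-1$ in the Eulerian constraint at $v$, and coefficient $+1$ in the in-degree inequality at $v$, so the dual feasibility condition reads
\[
r(u) - r(v) + q(v) \;\ge\; \tfrac{1}{2}\,\bigl|f(u) - f(v)\bigr| \qquad \forall\, uv \in E,
\]
with dual objective $\min \sum_v \pi(v)\, q(v)$. LP strong duality applies since the primal is feasible (take $A(u,u) = \pi(u)$) and bounded, so this equals the inner maximum for each $f$.

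Combining this with the outer minimization over $f$ gives
\[
\eta_v(G) \;=\; \min_{f}\,\min_{q \ge 0,\, r}\, \sum_v \pi(v)\, q(v),
\]
subject to the displayed inequality and the normalizations on $f$. Finally, substituting $(q, r) \mapsto (q/2, r/2)$ rescales the dual constraint to $q(v) \ge |f(u)-f(v)| - r(u) + r(v)$ and halves the objective, which is precisely $\tfrac12 \xi_v(G)$. This yields $\xi_v(G) = 2\eta_v(G)$, with the remaining work being routine tracking of signs and the factor $\tfrac12$.
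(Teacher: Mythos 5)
Your proposal is correct and follows essentially the same route as the paper: relax the in-degree equalities to inequalities (justified by the self-loops, which the paper handles by the same "remove/add self-loops" device), apply standard LP duality to the inner maximization with a free multiplier $r$ on the Eulerian constraints and a nonnegative multiplier $q$ on the in-degree constraints, and absorb the factor $\tfrac12$ by rescaling (the paper scales the objective by $2$ up front, you rescale $(q,r)$ at the end — the same bookkeeping). No gaps; the sign analysis and the strong-duality justification are as in the paper.
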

\begin{proof}
To write the dual program, we consider the equivalent program of $\vec{\lambda}_v^{(1)}(G)$, where we remove the self-loops and replace the constraint $\sum_{v \in V} A(v,u) = \pi(u)$ by $\sum_{v \in V} A(v,u) \leq \pi(u)$.
Then we multiply the objective of $\eta_v(G)$ by a factor of $2$ (to avoid the factor $1/2$ carrying around).
Then we associate a dual variable $q(u) \geq 0$ to each constraint $\sum_{v \in V} A(v,u) \leq \pi(u)$, and a dual variable $r(u)$ to each constraint $\sum_{v \in V} A(u, v) = \sum_{v \in V} A(v, u)$. The result follows from standard linear programming duality.
\end{proof}

The dual program of $\eta_e(G)$ is constructed in the same way and the proof is omitted.

\begin{lemma}[Dual Program of $\eta_e(G)$]
\label{lem:l1-dual-edge}
Given an edge-weighted directed graph $G = (V, E, w)$, let
  \begin{align*}
    \xi_e(G) :=
     \min_{f: V \rightarrow \R}~\min_{\substack{q: E \rightarrow \R_{\ge 0} \\ r: V \rightarrow \R}} &~~~ \sum_{uv \in E} w(uv) \cdot q(uv)
    \\
    \st&~~~
    q(uv) \geq |f(u)-f(v)| - r(u) + r(v) & & \forall uv \in E
    \\
    &~~~
    \sum_{v \in V} d_w(v) \cdot f(v) = 0
    \\
    &~~~ \sum_{v \in V} d_w(v) \cdot |f(v)| = 1.
  \end{align*}
Then $\xi_e(G) = 2\eta_e(G)$.
\end{lemma}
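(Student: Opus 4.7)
The plan is to mirror the proof sketch for \autoref{lem:l1-dual-vertex}, which also constructs the dual of the inner maximization using standard linear programming duality. Fix the outer embedding $f: V \to \R$, so that the inner problem becomes a linear program in the variables $A(u,v)$ for $uv \in E$. The first step is to rescale the objective by $2$ (so the $\frac12$ vanishes), yielding an LP with objective $\sum_{uv \in E} A(u,v) \cdot |f(u)-f(v)|$ subject to the nonnegativity $A \ge 0$, the Eulerian equalities $\sum_{v} A(u,v) = \sum_{v} A(v,u)$ for each $u \in V$, and the edge capacities $A(u,v) \le w(uv)$ for each $uv \in E$.

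Next I would introduce dual multipliers. Associate an unrestricted multiplier $r(u) \in \R$ to each Eulerian conservation constraint at vertex $u$, written in the form $\sum_{v} A(u,v) - \sum_{v} A(v,u) = 0$, and a nonnegative multiplier $q(uv) \ge 0$ to each capacity constraint $A(u,v) \le w(uv)$. Collecting the coefficient of a fixed variable $A(u,v)$, the Eulerian constraint at $u$ contributes $+r(u)$, the one at $v$ contributes $-r(v)$, and the capacity for $uv$ contributes $+q(uv)$. So the dual feasibility inequality becomes
\[
q(uv) + r(u) - r(v) \;\ge\; |f(u)-f(v)|
\qquad \forall uv \in E,
\]
which rearranges to the constraint $q(uv) \ge |f(u)-f(v)| - r(u) + r(v)$ appearing in \autoref{lem:l1-dual-edge}. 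The Eulerian constraints contribute no constant to the dual objective (their right-hand sides are $0$), while the edge capacities contribute $\sum_{uv \in E} w(uv) \cdot q(uv)$, matching the dual objective in the lemma.

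Finally, strong LP duality applies: the primal is clearly feasible (take $A \equiv 0$) and bounded above (by $\sum_{uv} w(uv) |f(u)-f(v)|$), so by strong duality the rescaled inner maximum equals the inner minimum over $(q,r)$. Taking the outer minimum over $f$ subject to the unchanged normalization constraints $\sum_v d_w(v) f(v) = 0$ and $\sum_v d_w(v) |f(v)| = 1$ gives $\xi_e(G) = 2\eta_e(G)$, completing the proof.

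There is no real obstacle here; the only thing to be careful about is the sign convention on $r$ (unrestricted, since the Eulerian constraints are equalities) versus $q$ (nonnegative, since the capacity constraints are inequalities), and remembering the factor of $2$ introduced at the start to clear the $\frac12$ in the objective. This is essentially a transcription of the argument behind \autoref{lem:l1-dual-vertex} with the vertex-capacity constraint $\sum_v A(v,u) \le \pi(u)$ replaced by the edge-capacity constraint $A(u,v) \le w(uv)$, and the corresponding dual variable indexed by edges rather than vertices.
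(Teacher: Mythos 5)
Your proposal is correct and is essentially the paper's argument: the paper proves the vertex version (\autoref{lem:l1-dual-vertex}) by fixing $f$, doubling the objective, assigning a nonnegative dual variable to each capacity constraint and an unrestricted dual variable $r(u)$ to each Eulerian constraint, and invoking standard LP duality, then states that the edge version is obtained "in the same way" — which is exactly your transcription, and indeed slightly simpler here since the edge capacities $A(u,v)\le w(uv)$ are already inequalities so no self-loop removal or equality-to-inequality relaxation is needed.
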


In \autoref{sec:threshold-rounding}, we will present a threshold rounding algorithm to return a set of small directed vertex expansion (respectively directed edge conductance) from a solution to $\xi_v(G)$ (respectively $\xi_e(G)$), with only a constant factor loss.

\begin{proposition}[Threshold Rounding] \label{prop:threshold-rounding}
For any vertex-weighted directed graph $G = (V, E, \pi)$, 
\[
\vec{\psi}(G) \lesssim \xi_v(G).
\]
For any edge-weighted directed graph $G = (V, E, w)$, 
\[
\vec{\phi}(G) \lesssim \xi_e(G).
\]
\end{proposition}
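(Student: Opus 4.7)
The plan is to prove both statements by a threshold rounding argument that uses two orderings of the vertices simultaneously, defined by $h_-(v) := f(v) - r(v)$ and $h_+(v) := f(v) + r(v)$. I describe the vertex expansion case in detail; the edge conductance case proceeds identically, replacing $\pi$ by $d_w$ and the vertex sum by the edge sum throughout.

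Let $(f, q, r)$ be an optimal solution for $\xi_v(G)$. The constraint $q(v) \geq |f(u)-f(v)| - r(u) + r(v)$ together with $|f(u) - f(v)| \geq \pm (f(u) - f(v))$ yields, for every edge $uv \in E$,
\[
q(v) \geq h_-(u) - h_-(v) \quad \text{and} \quad q(v) \geq h_+(v) - h_+(u),
\]
and also $q(v) \geq 0$ by feasibility. For the ordering by $h_-$, write $S_t^{h_-} := \{w : h_-(w) > t\}$. A vertex $v$ lies in $\partial^+(S_t^{h_-})$ precisely when $h_-(v) \leq t < \max_{u : uv \in E} h_-(u)$, so a layer-cake computation gives
\[
\int_{-\infty}^{\infty} \pi(\partial^+ S_t^{h_-})\, dt = \sum_v \pi(v) \max_{u : uv \in E} \bigl(h_-(u) - h_-(v)\bigr)_+ \leq \sum_v \pi(v) q(v) = \xi_v(G).
\]
The symmetric argument for the ordering by $h_+$ bounds $\int \pi(\partial^+ \overline{S_t^{h_+}})\, dt \leq \xi_v(G)$. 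The point is that each ordering controls exactly one of the two boundary directions, reflecting the asymmetry inherent in directed vertex expansion.

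For the denominators, let $m_\pm$ be a weighted median of $h_\pm$ with respect to $\pi$. A layer-cake identity gives $\int \min(\pi(S_t^{h_\pm}), \pi(\overline{S_t^{h_\pm}}))\, dt = \sum_v \pi(v) |h_\pm(v) - m_\pm|$, so summing over the two signs and using the triangle inequality produces
\[
\sum_v \pi(v)\bigl(|h_-(v) - m_-| + |h_+(v) - m_+|\bigr) \geq \sum_v \pi(v) \bigl|2f(v) - (m_- + m_+)\bigr|.
\]
The normalizations $\sum_v \pi(v) f(v) = 0$ and $\sum_v \pi(v)|f(v)| = 1$ now force the right-hand side to be at least $1$ for any real shift: setting $\phi(c) := \sum_v \pi(v)|2f(v) - c|$, we have $\phi(0) = 2$, $\phi$ is $\pi(V)$-Lipschitz in $c$, and by the triangle inequality $\phi(c) \geq |\sum_v \pi(v)(2f(v) - c)| = |c|\pi(V)$. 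Hence $\phi(c) \geq \max(|c|\pi(V),\, 2 - |c|\pi(V)) \geq 1$ for every $c$.

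It follows that at least one of the two denominator integrals is at least $1/2$. For that choice of sign $\varepsilon \in \{+,-\}$, the averaging principle produces a threshold $t$ with either $\pi(\partial^+ S_t^{h_\varepsilon})/\min(\pi(S_t^{h_\varepsilon}), \pi(\overline{S_t^{h_\varepsilon}})) \leq 2\xi_v(G)$ or $\pi(\partial^+ \overline{S_t^{h_\varepsilon}})/\min(\pi(S_t^{h_\varepsilon}), \pi(\overline{S_t^{h_\varepsilon}})) \leq 2\xi_v(G)$, depending on which boundary is controlled by the sign $\varepsilon$; since $\vec{\psi}(S)$ is bounded above by either of these ratios, we conclude $\vec{\psi}(G) \lesssim \xi_v(G)$. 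The main technical point to set up correctly is the pairing of each ordering with the matching one-sided boundary ($\partial^+ S$ for $h_-$, $\partial^+ \overline{S}$ for $h_+$), together with verifying the uniform convex lower bound on $\phi$; once these are in place, the proof is simply two threshold roundings in parallel. The edge conductance case is literally the same argument with $d_w$ in place of $\pi$, using $\sum_v d_w(v) f(v) = 0$ and $\sum_v d_w(v)|f(v)| = 1$ to get the analogous uniform lower bound $\sum_v d_w(v) |2f(v) - c| \geq 1$.
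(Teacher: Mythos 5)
Your proof is correct and is essentially the paper's argument: the key idea in both is to threshold-round on the two orderings $f+r$ and $f-r$, pairing each with the one-sided directed boundary ($\partial^+(\overline{S_t})$ for $f+r$, $\partial^+(S_t)$ for $f-r$) that the dual constraint $q(v) \geq |f(u)-f(v)| - r(u) + r(v)$ controls, and then showing via the normalizations $\sum_v \pi(v) f(v)=0$, $\sum_v \pi(v)|f(v)|=1$ that one of the two sweeps has total denominator mass $\Omega(1)$ (the paper implements this with the four truncated functions $g_1,\dots,g_4$ and a case analysis after shifting $r$, while your median layer-cake identity plus the uniform bound $\sum_v \pi(v)|2f(v)-c|\ge 1$ is a slightly cleaner route to the same lower bound). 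The edge-conductance case carries over exactly as you and the paper both note.
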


Note that \autoref{thm:rounding} follows immediately from \autoref{prop:l22-to-l1} and \autoref{prop:threshold-rounding},
so it remains to prove the two propositions in \autoref{sec:l22-to-l1} and \autoref{sec:threshold-rounding}.

\subsubsection[Reduction from l\_2\^2 to l\_1]{Reduction from $\ell_2^2$ to $\ell_1$} \label{sec:l22-to-l1}

We first prove the first inequality in \autoref{prop:l22-to-l1} about directed vertex expansion.
Let $G=(V,E,\pi)$ be a vertex-weighed directed graph.
Let $f: V \to \R$ be a solution to $\vec{\lambda}_v^{(1)}(G)$ with objective value $\lambda_f$, with $A$ being an optimal solution to the inner maximization problem (which can be computed by linear programming).
Our goal is to construct a solution to $\eta_v(G)$ in \autoref{def:l1-vertex} with objective value $O\big(\sqrt{\lambda_f}\big)$.

To this end, define $g: V \to \R$ by
\[
  g(u) := \begin{cases}
    (f(u) + c)^2 & \text{ if } f(u) + c > 0 \\
    -(f(u) + c)^2 & \text{ otherwise },
  \end{cases}
\]
where $c \in \R$ is chosen so as to satisfy the constraint $\sum_u \pi(u) \cdot g(u) = 0$ in \autoref{def:l1-vertex}. 
Note that such $c$ exists and is unique.

We would like to prove that $1 \leq \sum_u \pi(u) \cdot |g(u)| \leq 2$, so that scaling $g$ down by a factor of at most $2$ will satisfy the constraint $\sum_u \pi(u) \cdot |g(u)|=1$ in \autoref{def:l1-vertex}. 
Using $\sum_u \pi(u) f(u) = 0$, it follows that 
  \[
    \sum_u \pi(u) \cdot |g(u)|
    = \sum_u \pi(u) \cdot (f(u) + c)^2
    = \sum_u \pi(u) \cdot f(u)^2 + \pi(V) \cdot c^2
    \ge 1.
  \]
To show that $\sum_u \pi(u) \cdot |g(u)| \le 2$, let
  \[
    s^+(x) := \sum_{u:f(u) + x > 0} \pi(u) \cdot |g(u)|
    \quad {\rm and} \quad
    s^-(x) := \sum_{u:f(u) + x < 0} \pi(u) \cdot |g(u)|.
  \]
Then, note that $s^+(0), s^-(0) \in [0, 1]$, $s^+(x)$ increases as $x$ increases, and $s^-(x)$ decreases as $x$ increases.
If $s^+(0) > s^-(0)$, starting from $x=0$, we decrease $x$ until $x=c$ so that $s^+(c) = s^-(c)$, whence
  \[
    \sum_u \pi(u) \cdot |g(u)| = 2 s^+(c) \le 2 s^+(0) \le 2.
  \]
The case where $s^+(0) \le s^-(0)$ is similar. Therefore, $\sum_u \pi(u) \cdot |g(u)| \in [1,2]$.

Now we bound the objective value of the $\ell_1$ program in \autoref{def:l1-vertex} using $g$ as a solution.
Let $B$ be an optimal solution to the inner maximization problem in \autoref{def:l1-vertex} after fixing $g$.
Assuming the inequality $|g(u)-g(v)|^2 \leq 2(f(u)-f(v))^2 \big( |g(u)| + |g(v)| \big)$ that we will prove below,
the objective value to the $\ell_1$ program is
  \begin{eqnarray*}
    &  &
    \frac12 \sum_{uv \in E} B(u,v) \cdot |g(u) - g(v)|
    \\
    & \lesssim &
    \sum_{uv \in E} B(u,v) \sqrt{(f(u)-f(v))^2 \big(|g(u)| + |g(v)|\big)}
    \\
    & \leq &
    \sqrt{ \sum_{uv \in E} B(u,v) (f(u)-f(v))^2} \cdot \sqrt{\sum_{uv \in E} B(u,v) \big(|g(u)| + |g(v)|\big)}
    \\
    & = &
    \sqrt{ \sum_{uv \in E} B(u,v) (f(u)-f(v))^2 } \cdot \sqrt{ \sum_{u \in V} |g(u)| \cdot \Big( \sum_{v : uv \in E} B(u,v) + \sum_{v : vu \in E} B(v,u) \Big)}
    \\
    & = &
    \sqrt{ \sum_{uv \in E} B(u,v) (f(u)-f(v))^2 } \cdot \sqrt{ 2\sum_{u \in V} \pi(u) \cdot |g(u)| }
    \\
    & \lesssim &
    \sqrt{ \sum_{uv \in E} B(u,v) (f(u)-f(v))^2 } 
    \\
    & \leq &
    \sqrt{ \sum_{uv \in E} A(u,v) (f(u)-f(v))^2 } 
    \\
    & \lesssim &
    \sqrt{\lambda_f},
  \end{eqnarray*}
where the second inequality is by Cauchy-Schwarz, 
the second equality is by the degree constraints in~\autoref{def:l1-vertex},
and the second last inequality is because $A$ is an optimal solution to the inner maximization problem when $f$ is fixed.
Therefore, we conclude that $g$ (after normalizing to satisfy $\sum_{u \in V} \pi(u) \cdot |g(u)| = 1$) is a solution to $\nu_v(G)$ with objective value $O\big(\sqrt{\lambda_f} \big)$.
  
It remains to verify the inequality 
$|g(u)-g(v)|^2 \leq 2(f(u)-f(v))^2 \big( |g(u)| + |g(v)| \big)$.    
There are two cases to consider.
  \begin{itemize}
      \item Case 1: $f(u) + c$ and $f(v) + c$ are of the same sign. 
In this case,
      \begin{eqnarray*}
        |g(u) - g(v)|^2
        &=&
        ((f(u) + c)^2 - (f(v) + c)^2)^2
        \\
        &=&
        (f(u) - f(v))^2 \cdot
        ((f(u) + c) + (f(v) + c))^2
        \\
        &\le&
        2 (f(u) - f(v))^2 \cdot (|g(u)| + |g(v)|).
      \end{eqnarray*}
      \item Case 2: $f(u) + c$ and $f(v) + c$ are of different signs. 
In this case, 
      \begin{eqnarray*}
        |g(u) - g(v)|^2
        &=&
        ((f(u) + c)^2 + (f(v) + c)^2)^2
        \\
        &=&
        ((f(u) + c)^2 + (f(v) + c)^2) \cdot (|g(u)| + |g(v)|)
        \\
        &\le&
        ((f(u) + c) - (f(v) + c))^2 \cdot (|g(u)| + |g(v)|)
        \\
        &=&
        (f(u) - f(v))^2 \cdot (|g(u)| + |g(v)|).
      \end{eqnarray*}
  \end{itemize}
This completes the proof of the first inequality about directed vertex expansion in \autoref{prop:l22-to-l1}.

The proof of the second inequality about directed edge conductance is the same (with $\pi(u)$ replaced by $d(u)$) and is omitted.

\subsubsection{Threshold Rounding} \label{sec:threshold-rounding}

Finally, we prove \autoref{prop:threshold-rounding}.
Again, we first prove the first inequality in \autoref{prop:threshold-rounding} about directed vertex expansion.
Let $G = (V, E, \pi)$ be a vertex-weighted directed graph.
Let $(f,q,r)$ be a feasible solution to $\xi_v(G)$ in \autoref{lem:l1-dual-vertex} with objective value $\xi_{f}$.
Our goal is to construct a nonempty set $S \subset V$ with $\psi(S) \lesssim \xi_{f}$.

The algorithm is a threshold rounding algorithm, where each vertex $u$ is mapped to some $g(u) \in [0,\infty)$ and the output is a set $S_t := \{u \in V \mid g(u) > t\}$ for some threshold $t$.
In previous threshold rounding algorithms for Cheeger-type inequalities, only the embedding function $f: V \to \R$ is used as the function $g$ to produce the output set, so in particular only one ordering of the vertices is considered.

The new twist in our algorithm is that we would consider a few candidate choices for $g(u)$. 
They will all ensure that the threshold rounding would produce a set with small expected directed vertex boundary, and we will choose the one that gives large expected set size.

To this end, define the following four functions:
  \begin{itemize}
      \item $g_1(u) := \max\{ 0, f(u) + r(u) - c_1\}$
      \item $g_2(u) := \max\{0, f(u) - r(u) - c_2\}$
      \item $g_3(u) := \max\{0, -f(u) + r(u) + c_2\}$
      \item $g_4(u) := \max\{0, -f(u) - r(u) + c_1\}$,
  \end{itemize}
  where $c_1$ is a $\pi$-weighted median of $f(u) + r(u)$, so that $\max(\pi(\supp(g_1)), \pi(\supp(g_4))) \le \pi(V)/2$.
  Similarly, $c_2$ is a $\pi$-weighted median of $f(u) - r(u)$, so that $\max(\pi(\supp(g_2)), \pi(\supp(g_3))) \le \pi(V)/2$.\\ 
  
\textbf{Numerator:}
We bound the size of the outer boundary of either $S_t$ or $\overline{S_t}$ for uniformly random $t$, depending on whether the coefficient of $r(u)$ is $-1$ or $+1$ in the function $g_i$.
  
On the one hand, if we consider $g_1$ (similar for $g_3$), then we would bound the expected outer boundary size of $\overline{S_t}$ as:
  \begin{eqnarray*}
    \int_0^{\infty} \pi(\partial^+(\overline{S_t})) \,dt
    &=&
    \sum_v \pi(v)\int_0^{\infty} \one[v \in \partial^+(\overline{S_t})] \, dt
    \\
    &=&
    \sum_v \pi(v) \int_0^{\infty} \one[\exists\, u \text{ with } uv \in E \text{ and } g_1(u) \le t < g_1(v)] \, dt
    \\
    &=&
    \sum_v \pi(v) \max_{u: uv \in E} \{g_1(v) - g_1(u)\}
    \\
    &\le&
    \sum_v \pi(v) \max_{u: uv \in E} \{(f(v) + r(v)) - (f(u) + r(u))\}
    \\
    &\le&
    \sum_v \pi(v) \max_{u: uv \in E} \{|f(u) - f(v)| + r(v) - r(u)\}
    \\
    &\le&
    \sum_v \pi(v) \cdot q(v).
  \end{eqnarray*}
On the other hand, if we consider the function $g_2$ (similar for $g_4$), then we bound the expected outer boundary size of $S_t$ as
  \begin{eqnarray*}
    \int_0^{\infty} \pi(\partial^+(S_t)) \,dt
    &=&
    \sum_v \pi(v) \int_0^{\infty} \one[v \in \partial^+(S_t)] \, dt
    \\
    &=&
    \sum_v \pi(v) \int_0^{\infty} \one[\exists\, u \text{ with } uv \in E \text{ and } g_2(v) \le t < g_2(u)] \, dt
    \\
    &=&
    \sum_v \pi(v) \max_{u: uv \in E} \{g_2(u) - g_2(v)\}
    \\
    &\le&
    \sum_v \pi(v) \max_{u: uv \in E} \{(f(u) - r(u)) - (f(v) - r(v))\}
    \\
    &\le&
    \sum_v \pi(v) \max_{u: uv \in E} \{|f(v) - f(u)| + r(v) - r(u)\}
    \\
    &\le&
    \sum_v \pi(v) \cdot q(v).
  \end{eqnarray*}
To summarize, when we do threshold rounding with respect to any of $g_1,g_2,g_3,g_4$, it holds that
\[
  \int_0^{\infty} \min \big\{ \pi(\partial^+(S_t)), \pi(\partial^+(\overline{S_t})) \big\} \,dt \le \sum_v \pi(v) q(v).
\]

\textbf{Denominator:}
For the function $g_i$, the expected size of $S_t$ is given by
\[
  \int_0^{\infty} \pi(S_t) \,dt
  = \sum_u \pi(u) \int_0^{\infty} \one[g_i(u) > t] \,dt
  = \sum_u \pi(u) \cdot g_i(u).
\]
Therefore, our goal is to show that there exists $1 \leq i \leq 4$ with $\sum_u \pi(u) g_i(u) \ge \Omega(1)$. 
To do so, we will show that
\[
  \sum_{i=1}^4 \sum_u \pi(u) \cdot g_i(u) \ge \Omega(1).
\]
Note that, for any $u \in V$,
\[
  g_1(u) + g_4(u) = \max \{ 0, f(u) + r(u) - c_1 \} + \max\{ 0, -f(u) - r(u) + c_1 \} = |(f(u) + r(u)) - c_1|,
\]
and
\[
  g_2(u) + g_3(u) = \max \{ 0, f(u) - r(u) - c_2 \} + \max\{0, -f(u) + r(u) + c_2\} = |(f(u) - r(u)) - c_2|.
\]
Thus it suffices to show that
\[
  \sum_u \pi(u) \cdot \Big( \big|(f(u) + r(u)) - c_1\big| + \big|(f(u) - r(u)) - c_2\big| \Big) \ge \frac{1}{2}.
\]
To this end, we note that either $\sum_u \pi(u) |f(u) + r(u)| \ge 1$ or $\sum_u \pi(u) |f(u) - r(u)| \ge 1$, because
\[
  \sum_u \pi(u) \left( |f(u) + r(u)| + |f(u) - r(u)| \right)
=  \sum_u \pi(u) \cdot 2 \max(|f(u)|, |r(u)|)
\geq  2 \sum_u \pi(u) |f(u)| = 2.
\]
Assume without loss that $\sum_u \pi(u) \cdot r(u) = 0$ (as we can shift every $r(u)$ by the same amount without changing anything). 
Then both $\sum_u \pi(u) (f(u) + r(u)) = 0$ and $\sum_u \pi(u) (f(u) - r(u)) = 0$. 

Consider first the case where $\sum_u \pi(u) |f(u) + r(u)| \ge 1$; the other case is treated similarly.
Then, since $\sum_u \pi(u)(f(u) + r(u)) = 0$ and $\sum_u \pi(u)|f(u) + r(u)| \ge 1$, it follows that
\[
  \sum_{u: f(u) + r(u) \le 0} \pi(u) |f(u) + r(u)|
  = \sum_{u: f(u) + r(u) \ge 0} \pi(u) |f(u) + r(u)|
  = \frac{1}{2} \sum_u \pi(u) |f(u) + r(u)|
  \ge \frac{1}{2}.
\]
If $c_1 \ge 0$, then
\[
  \sum_u \pi(u) |(f(u) + r(u)) - c_1|
  \ge
  \sum_{u: f(u) + r(u) \le 0} \pi(u) |(f(u) + r(u)) - c_1|
  \ge
  \sum_{u: f(u) + r(u) \le 0} \pi(u) |f(u) + r(u)|
  \ge
  \frac{1}{2},
\]
and similarly if $c_1 < 0$, then
\[
  \sum_u \pi(u) |(f(u) + r(u)) - c_1|
  \ge
  \sum_{u: f(u) + r(u) \ge 0} \pi(u) |(f(u) + r(u)) - c_1|
  \ge
  \sum_{u: f(u) + r(u) \ge 0} \pi(u) |f(u) + r(u)|
  \ge
  \frac{1}{2}.
\]
To summarize,
\[
  \sum_{i=1}^4 \sum_u \pi(u) \cdot g_i(u) =
  \sum_u \pi(u) \cdot \Big( \big|(f(u) + r(u)) - c_1\big| + \big|(f(u) - r(u)) - c_2\big| \Big) \ge \frac{1}{2}.
\]

\textbf{Conclusion:}
There exists $g = g_i$ for some $1 \leq i \leq 4$, such that if we use this function for threshold rounding,
\begin{itemize}
  \item $\int_0^{\infty} \min \big\{ \pi(\partial^+(S_t)), \pi(\partial^+(\overline{S_t})) \big\} \,dt \le \sum_v \pi(v) \cdot q(v) = \xi_{f}$;
  \item $\int_0^{\infty} \pi(S_t) \, dt \ge 1/8$;
  \item $\pi(S_t) \le \pi(V)/2$ always.
\end{itemize}
Hence, we can return some $S = S_t$, whence $0 < \pi(S) \le \pi(V) / 2$ and
\[
  \vec{\psi}(S) 
  =
  \frac{\min \big\{ \pi(\partial^+(S)), \pi(\partial^+(\overline{S})) \big\}}{\min \big\{\pi(S), \pi(\overline{S}) \big\}}
  =
  \frac{\min(\pi(\partial^+(S)), \pi(\partial^+(\overline{S})))}{\pi(S)}
  \le
  8 \xi_{f}.
\]

The proof of the second inequality about directed edge conductance is the same (with the numerator $\sum_{v} \pi(v) \cdot q(v)$ replaced by $\sum_{uv \in E} w(uv) \cdot q(uv)$ and the denominator $\sum_v \pi(v) \cdot |f(v)|$ replaced by $\sum_v d_w(v) \cdot |f(v)| = 1$) and is omitted.

\subsection{Fastest Mixing Time} \label{sec:fastest-mixing}

The goal of this subsection is to prove \autoref{thm:fastest-mixing} that
\[
\frac{1}{\vec{\psi}(G)} \cdot \frac{1}{\log(1/\pi_{\min})}\lesssim \tau^*(G) \lesssim \frac{1}{\vec{\psi}(G)^2} \cdot \log \frac{\Delta}{\vec{\psi}(G)} \cdot \log \frac{1}{\pi_{\min}}.
\]
There are two parts of the proof.
In the first part, we upper bound the fastest mixing time using \autoref{thm:chung-mixing-time} by Fill~\cite{Fil91} and Chung~\cite{Chu05}.
In the second part, we lower bound the fastest mixing time using a combinatorial argument and the $\infty$-norm mixing time that we will define.

\begin{proofof}{\autoref{thm:fastest-mixing}}
Recall that in the setting of the theorem, $\pi$ is not only a weight function, but a probability distribution.
We assume the graph is strongly connected and so $\vec{\lambda}_2^{v*}(G) > 0$.

To prove the upper bound, we prove that $\tau^*(G) \lesssim \big(\vec{\lambda}_2^{v*}(G)\big)^{-1} \cdot \log (\pi_{\min}^{-1})$, and then the result will follow from \autoref{thm:directed-vertex-expansion}.
Let $A$ be an optimal reweighted Eulerian subgraph in \autoref{def:directed-vertex-primal}.
Let $P := \Pi^{-1}A$ be the transition matrix of the ordinary random walk corresponding to the reweighted subgraph $A$.
Observe that $P := \Pi^{-1}A$ is a feasible solution to \autoref{def:fastest-mixing}, and so is $(I+P)/2$. 
Therefore, by \autoref{thm:chung-mixing-time}, 
\[
\tau^*(G) 
\leq \tau\Big( \frac{I+P}{2} \Big)
\lesssim \frac{1}{\lambda_2(\tilde{\L})} \cdot \log \Big( \frac{1}{\pi_{\min}} \Big)
= \frac{1}{\vec{\lambda}_2^{v*}(G)} \cdot \log \Big( \frac{1}{\pi_{\min}} \Big),
\]
where the last inequality is because 
$\tilde{\L} = I - \Pi^{-\frac12} (A+A^T) \Pi^{-\frac12}/2$ as defined in \eqref{e:Chung-Laplacian}
and $\lambda_2(\tilde{\L}) = \vec{\lambda}_2^{v*}(G)$ by \autoref{def:directed-vertex-expansion}.

To prove the lower bound, we consider the $\infty$-norm $\eps$-mixing time defined as 
\[
  \tau^{\infty}_{\eps}(P) := \min \bigg\{t : \max_{p_0: V \rightarrow \R_{\ge 0}} \max_{v \in V} \Big\{ 1 - \frac{p_t(v)}{\pi(v)} \Big\} < \eps \bigg\},
\]
where $p_0$ is an initial distribution on $V$ and $p_t$ denotes $p_0 P^t$.
We will prove that for any feasible solution $P$ to \autoref{def:fastest-mixing},
\begin{equation} \label{e:lower-bound}
\frac{1}{\vec{\psi}(G)} \lesssim \tau^{\infty}_{1/e}(P),
\end{equation}
and this would imply that
\[
\frac{1}{\vec{\psi}(G)} \lesssim \max_P \tau^{\infty}_{1/e}(P) 
\leq \max_P \tau_{1/e}(P) \cdot \log\Big(\frac{1}{\pi_{\min}}\Big) 
= \tau^*(G) \cdot \log\Big(\frac{1}{\pi_{\min}}\Big),
\]
proving the lower bound,  
where the second inequality is by \cite[Proposition~2.47(f)]{Gan06} relating $\tau_{\eps}^{\infty}$ and $\tau_{\eps}$ using sub-multiplicity of mixing time.

To prove \eqref{e:lower-bound}, 
let $P$ be an arbitrary feasible solution to \autoref{def:fastest-mixing},
and $S \subset V$ be a nonempty subset such that $\vec{\psi}(G) = \vec{\psi}(S)$.
We will use $S$ to define an initial distribution $p_0: V \rightarrow \R_{\ge 0}$ such that
\[
    \Delta_{\infty}(p_t, \pi) := \max_{v \in V} \bigg\{ 1 - \frac{p_t(v)}{\pi(v)}\bigg\} > \frac{1}{e}
\]
for any $t \leq 1/(4 \vec{\psi}(S) )$, and this would imply that $\tau_{1/e}^{\infty}(P) > 1/(4 \vec{\psi}(S))$.

To define $p_0$,  
we assume without loss of generality that $\pi(S) \le 1/2$ and consider two cases.

\begin{enumerate}

      \item $\pi(\partial^+(S)) \le \pi(\partial^+(\overline{S}))$. 
      In this case, we set
      \[
        p_0(u) = \begin{cases}
          \pi(u) / \pi(S), \text{ if } u \in S; \\
          0, \text{ otherwise.}
        \end{cases}.
      \]
      We will show that
        $p_t(S) := \sum_{v \in S} p_t(v)
        \ge
        1 - t \cdot \vec{\psi}(S)$
      for all $t \ge 0$.
      Note that, by induction, $p_t(v) \le \pi(v) / \pi(S)$ for all $v \in V$ and $t \ge 0$, as
      \[
        p_{t+1}(v) = \sum_{u \in V} p_{t}(u) \cdot P(u, v) \le \sum_{u \in V} \frac{\pi(u)}{\pi(S)} \cdot P(u, v) = \frac{\pi(v)}{\pi(S)}.
      \]
      It follows that at step $t+1$, the total amount of probability mass escaping from $S$ is at most
      \[
        \sum_{v \in \partial^+(S)} p_t(v) \le \frac{\pi(\partial^+(S))}{\pi(S)} = \vec{\psi}(S).
      \]
      Hence, for any $t \le 1/(4 \vec{\psi}(S))$, we have $p_t(\overline{S}) \le \frac{1}{4} \le \frac{1}{2} \cdot \pi(\overline{S})$, and so
      \[
        \Delta_{\infty}(p_t, \pi) \ge \max_{v \in \overline{S}} \Big\{ 1 - \frac{p_t(v)}{\pi(v)} \Big\} \ge 1 - \frac{p_t(\overline{S})}{\pi(\overline{S})} \ge \frac12 > \frac1e.
      \]

      \item $\pi(\partial^+(S)) > \pi(\partial^+(\overline{S}))$. 
      In this case, we define
      \[
        p_0(u) = \begin{cases}
          \pi(u) / \pi(\overline{S}), \text{ if } u \not\in S; \\
          0, \text{ otherwise.}
        \end{cases}.
      \]
      We will show that $p_t(S) \le 2t \cdot \pi(S) \cdot \vec{\psi}(S)$.
      Again, by induction, $p_t(v) \le \pi(v) / \pi(\overline{S})$ for all $v \in V$ and $t \ge 0$. 
      It follows that in step $t+1$, the total amount of probability mass entering $S$ is at most
      \[
        \sum_{v \in \partial^+(\overline{S})} p_t(v) \le \frac{\pi(\partial^+(\overline{S}))}{\pi(\overline{S})}
        \le 2 \pi(S) \cdot
        \frac{\pi(\partial^+(\overline{S}))}{\pi(S)}
        = 2 \pi(S) \cdot \vec{\psi}(S).
      \]
      Hence, for $t \le 1/(4 \vec{\psi}(S))$, 
      we have $p_t(S) \le \pi(S) / 2$, and so
      \[
        \Delta_{\infty}(p_t, \pi) \ge \max_{v \in S} \Big\{ 1 - \frac{p_t(v)}{\pi(v)} \Big\} \ge 1 - \frac{p_t(S)}{\pi(S)} \ge \frac12 > \frac1e.
      \]
  \end{enumerate}
This completes the proof of the lower bound and hence \autoref{thm:fastest-mixing}.
\end{proofof}

\subsection{Relations with Previous Work} \label{sec:relations}

In this subsection, we show some examples where the Cheeger constant~\cite{Fil91,Chu05} is very different from directed edge conductance and directed vertex expansion, and relate the semidefinite program in~\cite{ACMM05} to the one for reweighted eigenvalue in \autoref{def:directed-edge-primal}.

\subsubsection{Cheeger Constant, Edge Conductance, and Vertex Expansion} \label{sec:Chung}

We show two examples.  
In the first example, the Cheeger constant in \eqref{e:Cheeger-constant} is large while the directed edge conductance and directed vertex expansion is small.

\begin{example}[Large Cheeger Constant but Small Edge Conductance and Vertex Expansion]
Consider the directed graph shown in the figure.
\begin{figure}[H]
    \centering
    \includegraphics[width=0.5\textwidth]{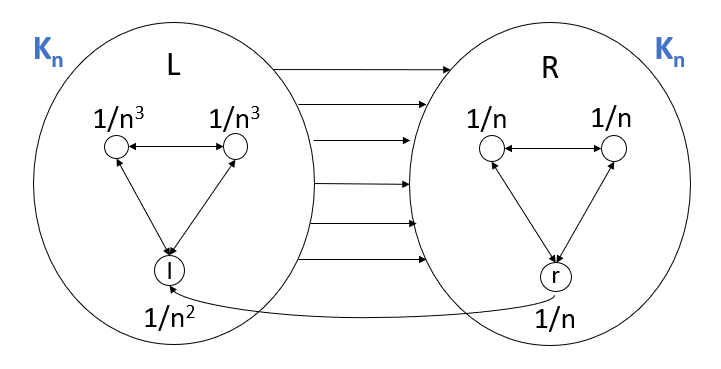}
    \label{fig:2-clique}
\end{figure}
Both $L$ and $R$ are cliques of size $n$.
There is a directed edge from every vertex in $L$ to every vertex in $R$.
There is a directed edge $rl$ from a special vertex $r \in R$ to a special vertex $l \in L$.

When the graph $G$ has the same weight on each edge and the same weight on each vertex,
it is clear that $\vec{\phi}(G) \leq 1/n^2$ and $\vec{\psi}(G) \leq 1/n$ as there is only one directed edge from $R$ to $L$.

We claim that the Cheeger constant $h(G)$ in \eqref{e:Cheeger-constant} is $\Omega(1)$.
The reason is that the Cheeger constant is normalized by the probabilities in the stationary distribution $\pi$, and this will make $L$ to have small $\pi$-weight and so both $h(L)$ and $h(R)$ become big after the normalization. 
More precisely, after some calculations that are omitted, we have $\pi(v) \approx 1/n$ for every vertex $v \in R$, $\pi(u) \approx 1/n^3$ for every vertex $u \in L-\{l\}$, and $\pi(l) \approx 1/n^2$.
This implies that $h(L)=h(R)=\Omega(1)$, and indeed $h(G) = \Omega(1)$ after a case analysis which we omit. 
\end{example}

This example shows that the edge conductance of the reweighted subgraph with respect to the stationary distribution does not provide a good approximation to directed edge conductance and directed vertex expansion, while an optimal reweighted subgraph does identify the bottlenecks in the directed graph.

In the second example, the $k$-way Cheeger constant is large but the $k$-way directed edge conductance is small.

\begin{example}[Large $k$-Way Cheeger Constant but Small $k$-Way Edge Conductance]
Let $G$ be a directed cycle over the vertex set $[n]$. 
For each $i\in \{2,3,...n-2\}$ we add an extra edge $(i,n)$.
\begin{figure}[H]
    \centering
    \includegraphics[width=0.45\textwidth]{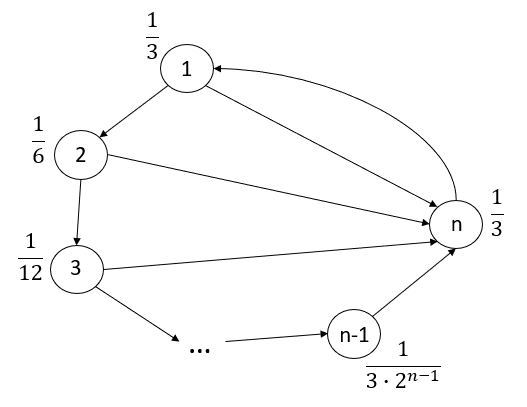}
    \label{fig:fast-dropping-cycle}
\end{figure}
The figure of the graph is shown with the stationary distribution of the ordinary random walk on the graph where every edge has the same weight.
In this example, the $k$-way directed edge conductance is $k/n$, 
but the graph has Cheeger constant $\Omega(1)$ because the vertices $\{2,3,...n-1\}$ have exponentially decreasing stationary weight. 
\end{example}

Since large Cheeger constant implies large $\lambda_2(\tilde{L})$ implies large $\lambda_k(\tilde{L})$ ($\tilde{L}$ is the Laplacian defined by Chung in \eqref{e:Chung-Laplacian}), this example shows that $\lambda_k(\tilde{L})$ is large  but the $k$-way directed edge conductance is small.
This rules out the possibility of having a higher-order Cheeger inequality for directed graphs relating $\lambda_k(\tilde{\L})$ to $k$-way directed edge conductance.

\subsubsection{Semidefinite Program for Directed Sparsest Cut} \label{sec:ACMM}

We compare the semidefinite program for $\vec{\lambda}_2^{e*}(G)$ in \autoref{prop:lambda2-edge} with the semidefinite program for the directed sparsest cut probelm in~\cite{ACMM05}.
Given a directed graph $G=(V,E)$ with edge weights $w: E \rightarrow \R_{\ge 0}$, the directed sparsest cut problem is defined as
\begin{align*}
    \varphi(G) := \min_{S\subseteq V}\frac{\min\big\{w(\delta^+(S)), w(\delta^+(\overline{S}))\big\}}{\min\{|S|,|\overline{S}|\}}.
\end{align*}
Agarwal, Charikar, Makarychev, and Makarychev~\cite{ACMM05} gave a semidefinite program relaxation ${\rm sdp}_{\varphi}$ for $\varphi(G)$ and proved that ${\rm sdp}_{\varphi} \lesssim \varphi(G) \lesssim \sqrt{\log |V| \cdot {\rm sdp}_{\varphi}}$.  

We note that ${\rm sdp}_{\varphi}$ can be modified slightly to give a similar approximation to the directed edge conductance $\vec{\phi}(G)$ in \autoref{def:directed-edge-conductance}.
Consider the semidefinite program
  \begin{align*}
    {\rm sdp}_{\vec{\phi}}  :=
     \min_{f: V\cup \{x\}\to \mathbb{R}^n} &~~~  \sum_{uv \in E} w(u,v)\big(\norm{f(u) - f(v)}^2 - \norm{f(u) - f(x)}^2 + \norm{f(v) - f(x)}^2\big)     \\
    \st&~~~
    \norm{f(u) - f(v)}^2 + \norm{f(v) - f(w)}^2 \geq \norm{f(u) - f(w)}^2  \quad \quad \forall u, v, w \in V \cup \{x\}
    \\
    &~~~
    \sum_{v \in V} d(v) \cdot f(v) = \vec{0}
    \\
    &~~~ \sum_{v \in V} d(v) \cdot \norm{f(v)}^2 = 1,
  \end{align*}
The only difference between ${\rm sdp}_{\vec{\phi}}$ and ${\rm sdp}_{\varphi}$ is the scaling of $f(v)$ by $d(v)$ (instead of $1$ for ${\rm sdp}_{\varphi}$), which corresponds to the degree weights in the denominator of the directed edge conductance in \autoref{def:directed-edge-conductance}.
We note that a simple modification of the proof in \cite{ACMM05} shows that 
${\rm sdp}_{\vec{\phi}} \lesssim \vec{\phi}(G)\lesssim \sqrt{\log |V|}\cdot{\rm sdp}_{\vec{\phi}}$.  

To our knowledge, it was not known that ${\rm sdp}_{\vec{\phi}}$ can be used to certify whether a directed graph has constant edge conductance as in \autoref{thm:directed-edge-conductance}, as the analysis using triangle inequalities based on~\cite{ARV09} has a $\sqrt{\log |V|}$ factor loss. 
However, we observe that the semidefinite program in \autoref{prop:lambda2-edge} for $\vec{\lambda}_2^{e*}(G)$ is a weaker program than ${\rm sdp}_{\vec{\phi}}$.

\begin{claim}[$\vec{\lambda}_2^{e*}(G)$ and ${\rm sdp}_{\vec{\phi}}$] \label{c:ACMM}
For any directed graph $G=(V,E)$ with a weight function $w : E \to \R_{\geq 0}$,
it holds that $\vec{\lambda}_2^{e*}(G) \leq {\rm sdp}_{\vec{\phi}}$.
\end{claim}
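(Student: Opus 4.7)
The plan is to plug an optimal feasible solution of ${\rm sdp}_{\vec{\phi}}$ into the SDP for $\vec{\lambda}_2^{e*}(G)$ from \autoref{prop:lambda2-edge}. Let $f : V \cup \{x\} \to \R^n$ be an optimal solution to ${\rm sdp}_{\vec{\phi}}$. I would use the restriction $f|_V$ as the outer-minimization embedding for $\vec{\lambda}_2^{e*}(G)$: the two normalization constraints $\sum_v d_w(v) f(v) = \vec{0}$ and $\sum_v d_w(v) \norm{f(v)}^2 = 1$ are identical in both programs, so feasibility of the outer min is immediate, and it suffices to bound the inner maximum over Eulerian reweightings $A$ by the objective value of ${\rm sdp}_{\vec{\phi}}$ on $f$.

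Writing $D(u) := \norm{f(u)-f(x)}^2$ and $D(u,v) := \norm{f(u)-f(v)}^2$ for brevity, the ${\rm sdp}_{\vec{\phi}}$ objective becomes $\sum_{uv\in E} w(uv)\bigl(D(u,v) - D(u) + D(v)\bigr)$, while each feasible $A$ contributes $\tfrac{1}{2}\sum_{uv\in E} A(u,v) D(u,v)$ to the inner max. The key observation is an Eulerian telescoping identity: for any $A$ satisfying $\sum_{v: uv \in E} A(u,v) = \sum_{v: vu \in E} A(v,u)$ at every vertex,
\[
\sum_{uv \in E} A(u,v)\bigl(D(u) - D(v)\bigr) = \sum_{u \in V} D(u)\Big(\sum_{v: uv\in E} A(u,v) - \sum_{v: vu\in E} A(v,u)\Big) = 0,
\]
so we may rewrite $\sum_{uv\in E} A(u,v) D(u,v) = \sum_{uv\in E} A(u,v)\bigl(D(u,v) - D(u) + D(v)\bigr)$ at no cost.

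The final step is to invoke the triangle-inequality constraint of ${\rm sdp}_{\vec{\phi}}$ applied to the triple $(u,v,x)$, namely $D(u,v) + D(v) \geq D(u)$, together with the edge capacity constraint $A(u,v) \leq w(uv)$. Since each bracketed quantity $D(u,v) - D(u) + D(v)$ is nonnegative, we may bound $A(u,v)$ by $w(uv)$ termwise to deduce
\[
\tfrac{1}{2}\sum_{uv\in E} A(u,v) D(u,v) \leq \tfrac{1}{2}\sum_{uv \in E} w(uv)\bigl(D(u,v) - D(u) + D(v)\bigr) = \tfrac{1}{2}\cdot {\rm sdp}_{\vec{\phi}} \leq {\rm sdp}_{\vec{\phi}}.
\]
Since this holds uniformly over feasible $A$, the claim follows. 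The whole argument poses no real obstacle: the only insight needed is spotting that the Eulerian constraint converts the symmetric quantity $A(u,v) D(u,v)$ into the asymmetric, ACMM-style quantity $A(u,v)\bigl(D(u,v) - D(u) + D(v)\bigr)$ at no cost, after which the edge capacity constraint and the triangle inequality close out the argument.
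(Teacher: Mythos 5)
Your proposal is correct and is essentially the paper's argument: the paper rewrites $\vec{\lambda}_2^{e*}(G)$ via LP duality of the inner maximization and plugs in the feasible dual solution $r(v)=\norm{f(v)-f(x)}^2$, $q(uv)=\norm{f(u)-f(v)}^2-r(u)+r(v)$ (nonnegative by the triangle inequalities through $x$), which is exactly the potential/telescoping computation you carry out directly on the primal — the weak-duality direction of the same step, with the same choice of vertex potential. The only difference is cosmetic bookkeeping of the factor $1/2$, which both arguments can afford.
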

\begin{proof}
Consider the following equivalent characterization of $\vec{\lambda}_2^{e*}(G)$ by using LP duality in the inner maximization problem as in \autoref{lem:l1-dual-edge}:
  \begin{align*}
    \vec{\lambda}_2^{e*}(G) =
     \min_{f: V \rightarrow \R^n}~\min_{\substack{q: E \rightarrow \R_{\ge 0} \\ r: V \rightarrow \R}} &~~~ \sum_{uv \in E} w(uv) \cdot q(uv)
    \\
    \st&~~~
    q(uv) \geq \norm{f(u)-f(v)}^2 - r(u) + r(v) & & \forall uv \in E
    \\
    &~~~
    \sum_{v \in V} d(v) \cdot f(v) = 0
    \\
    &~~~ \sum_{v \in V} d(v) \cdot \norm{f(v)}^2 = 1.
  \end{align*}
We will show that for every feasible solution $f:V\cup \{x\}\rightarrow \mathbb{R}^n$ to ${\rm sdp}_{\vec{\phi}}$, there is a feasible solution $f':V\rightarrow \mathbb{R}^n, q: E \rightarrow \R_{\ge 0} ,r:V \rightarrow \R$ to $\vec{\lambda}_2^{e*}(G)$ with the same objective value.
Then the lemma would follow immediately.
To this end, define $f'(v) = f(v)$ for $v\in V$, $r(v) = \norm{f(v) - f(x)}^2$ for $v \in V$ and $q(u,v) = \norm{f(u) - f(v)}^2 -r(u) + r(v)$ for $uv \in E$. 
Clearly, the objective values are equal. 
Also, we see that the constraints $q(u,v) \geq 0$ are satisfied because of the triangle inequalities. 
\end{proof}

Therefore, \autoref{thm:directed-edge-conductance} and \autoref{c:ACMM} imply that
\[
{\rm sdp}_{\vec{\phi}} \lesssim \vec{\phi}(G)
\lesssim \sqrt{ {\rm sdp}_{\vec{\phi}} \cdot \log{\frac{1}{\vec{\phi}(G)}} },
\]

where the ``easy direction'' ${\rm sdp}_{\vec{\phi}} \lesssim \vec{\phi}(G)$ follows because ${\rm sdp}_{\vec{\phi}}$ is a relaxation of directed vertex expansion $\vec{\phi}(G)$.
This provides a new analysis that ${\rm sdp}_{\vec{\phi}}$ can also be used to certify constant edge conductance in directed graphs.

\section{Generalizations of Cheeger Inequalities for Directed Graphs}
\label{sec:supporting-results}

For undirected graphs, there are several interesting generalizations of Cheeger's inequality: Trevisan's result that relates $\lambda_n$ to bipartite edge conductance~\cite{Tre09}, the higher-order Cheeger's inequality that relates $\lambda_k$ to $k$-way edge conductance~\cite{LOT12,LRTV12}, and the improved Cheeger's inequality that relates $\lambda_2$ and $\lambda_k$ to edge conductance~\cite{KLLOT13}.
Using reweighted eigenvalues for vertex expansion, close analogs of these results were obtained in~\cite{KLT22}, relating $\lambda_n^*$ to bipartite vertex expansion, $\lambda_k^*$ to $k$-way vertex expansion, and $\lambda_2^*$ and $\lambda_k^*$ to vertex expansion. 

In this section, we study whether there are close analogs of these results for directed graphs, using reweighted eigenvalues for directed vertex expansion in \autoref{def:directed-vertex-expansion} and directed edge conductance in \autoref{def:directed-edge-conductance}.
Perhaps surprisingly, we show that the natural analogs of Trevisan's result and higher-order Cheeger's inequality do not hold, but we obtain analogs of the improved Cheeger's inequality for directed vertex expansion and directed edge conductance.

\subsection{Higher-Order Cheeger Inequality for Directed Graphs} \label{sec:higher-order-directed}

Given an undirected graph $G=(V,E)$ with a weight function $w : E \to \R_{\geq 0}$,
let $\lambda_k$ be the $k$-th smallest eigenvalue of the normalized Laplacian matrix of $G$.
A basic result in spectral graph theory states that $\lambda_k=0$ if and only if $G$ has at least $k$ connected components, or equivalently $G$ has at least $k$ disjoint subsets $S_1, \ldots, S_k$ each with edge conductance zero.

The higher-order Cheeger inequality is a robust generalization of this basic result.
Define the $k$-way edge conductance of $G$ as
$\phi_k(G) := \min_{S_1, S_2, \ldots, S_k} \max_{1 \leq i \leq k} \phi(S_i)$ where $S_1, S_2, \ldots, S_k$ are over pairwise disjoint subsets of $V$.
Then the basic result is that $\lambda_k = 0$ if and only if $\phi_k(G)=0$,
and the higher-order Cheeger's inequality~\cite{LOT12,LRTV12} is that 
\begin{equation} \label{e:higher-order}
\lambda_k \lesssim \phi_{k}(G) \lesssim k^2 \sqrt{\lambda_k} 
\quad {\rm~and~} \quad
\phi_{k/2}(G) \lesssim \sqrt{\lambda_k \log k}.
\end{equation}
For a directed graph $G=(V,E)$, we can define $\vec{\lambda}_k^{v*}(G)$ and $\vec{\lambda}_k^{e*}(G)$ as in \autoref{def:directed-vertex-expansion} and \autoref{def:directed-edge-conductance}, but with the objective function replaced by maximizing the $k$-th smallest eigenvalue.
The following is an analog of the basic result.

\begin{proposition}[Reweighted Eigenvalues and Strongly Connected Components] \label{prop:lambda-k-SCC} \phantom{ }
\begin{itemize}
\item
For any directed graph $G = (V,E)$ with weight function $w : E \to \R_{\geq 0}$, then $\vec{\lambda}_k^{e*}(G) = 0$ if and only if $G$ has at least $k$ strongly connected components.
\item
For any directed graph $G = (V,E)$ with weight function $\pi : V \to \R_{\geq 0}$, then $\vec{\lambda}_k^{v*}(G) = 0$ if and only if $G$ has at least $k$ strongly connected components.
\end{itemize}
\end{proposition}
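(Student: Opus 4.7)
The plan is to prove both halves by identifying the support of any feasible Eulerian reweighting with the union of the strongly connected components (SCCs) of $G$.

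\emph{Sufficiency (at least $k$ SCCs $\Rightarrow$ reweighted eigenvalue is $0$).} Suppose $G$ has SCCs $C_1,\dots,C_j$ with $j \geq k$, and let $A$ be any feasible reweighting for $\vec{\lambda}_k^{e*}(G)$ or $\vec{\lambda}_k^{v*}(G)$. First I would show that $A$ must vanish on every edge connecting distinct SCCs. Summing the Eulerian constraint $\sum_v A(u,v) = \sum_v A(v,u)$ over all $u \in C_i$ establishes flow conservation in the condensation DAG for the cross-SCC totals $f(i,i') := \sum_{u \in C_i,\, v \in C_{i'}} A(u,v)$. Processing this DAG in topological order starting from the sources (where all incoming cross-SCC flows are trivially zero), and using $f(i,i') \geq 0$, forces $f(i,i') = 0$ for all $i \neq i'$. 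Hence $\tfrac12 (A + A^T)$ decomposes into at least $j \geq k$ connected components, its normalized Laplacian has at least $k$ zero eigenvalues, and taking the max over $A$ yields $\vec{\lambda}_k^{e*}(G) = \vec{\lambda}_k^{v*}(G) = 0$.

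\emph{Necessity (reweighted eigenvalue is $0$ $\Rightarrow$ at least $k$ SCCs).} I would prove the contrapositive: if $G$ has exactly $j < k$ SCCs, I construct a feasible $A$ whose underlying undirected graph has exactly $j$ connected components, giving $\lambda_{j+1} > 0$ and hence strictly positive reweighted $k$-th eigenvalue. The key step is to build, within each SCC $C_i$, an Eulerian reweighting $B_i$ of $G[C_i]$ supported on every directed edge of $C_i$. For the edge-capacitated problem I would then scale each $B_i$ so that $B_i \leq w$ entrywise and take $A := \sum_i B_i$. For the vertex-capacitated problem I would scale each $B_i$ so that every vertex $u$ has $B_i$-degree at most $\pi(u)$, and then use the guaranteed self-loops to raise each vertex degree up to exactly $\pi(u)$; this preserves the Eulerian property and the non-self-loop support. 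In both cases $\tfrac12(A + A^T)$ is connected within each $C_i$ and has no cross-SCC edges, hence exactly $j$ components.

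The main obstacle is producing the positive Eulerian circulation $B_i$ inside each SCC. I plan to invoke Hoffman's circulation lemma (\autoref{lem:Hoffman}) applied to $C_i$ with the all-ones edge weights: since $C_i$ is strongly connected, its asymmetric ratio is finite, and the lemma yields an Eulerian reweighting $B_i$ with $B_i(u,v) \geq 1$ on every edge of $C_i$. A self-contained alternative is to enumerate the edges $e_1,\dots,e_m$ of $C_i$, pick for each $e_t = uv$ an auxiliary directed $v \to u$ path $P_t$ in $C_i$ (existing by strong connectivity), place weight $\eps$ on each cycle $e_t \cup P_t$, and sum. All remaining verifications are routine bookkeeping with the Eulerian, capacity, and degree constraints.
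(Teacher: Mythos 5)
Your proposal is correct and follows the same overall structure as the paper's proof: in both, the forward direction rests on showing any feasible Eulerian reweighting $A$ has zero weight on all cross-SCC edges, and the converse is proved by building a positive Eulerian circulation inside each strongly connected component (the paper likewise invokes \autoref{lem:Hoffman} or a direct construction) so that the underlying undirected support has fewer than $k$ components. The one genuine difference is in how the vanishing of cross-SCC weight is argued: the paper decomposes the Eulerian subgraph into directed cycles and uses maximality of an SCC (a positive cross edge would lie on a cycle returning to the SCC, enlarging it), whereas you aggregate the Eulerian constraints over each SCC to get flow conservation on the condensation DAG and kill the cross flows in topological order from the sources; both are sound, and your DAG argument avoids invoking the cycle-decomposition fact while the paper's is slightly more local. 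A minor point in your favor: you explicitly repair the exact degree constraint $\sum_v A(u,v)=\pi(u)$ in the vertex-capacitated case by padding with the guaranteed self-loops, a bookkeeping step the paper leaves implicit; like the paper, your argument implicitly assumes the capacities $w$ (resp.\ $\pi$) are strictly positive on the relevant edges (resp.\ vertices), since otherwise the scaling step in the converse direction can fail.
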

\begin{proof}
In one direction, assume $G$ has at least $k$ strongly connected components $S_1, \ldots, S_k$.
Then, in any Eulerian reweighted subgraph $A$, we claim that $\sum_{uv \in \delta^+(S_i)} A(u,v) = \sum_{uv \in \delta^-(S_i)} A(u,v) = 0$ for $1 \leq i \leq k$.
To see this, suppose to the contrary that $uv \in \delta^+(S_i)$ and $A(u,v) > 0$, then as the edge set of any Eulerian graph can be decomposed into edge disjoint cycles, there must be a directed cycle $C$ with $uv \in C$, but then $S_i \cup C \supseteq S_i \cup \{v\}$ is also strongly connected, contradicting that $S_i$ is a maximally strongly connected subset.  
Therefore, in the underlying undirected graph defined by $\frac12 (A+A^T)$, each $S_i$ is a set of conductance zero, and thus $\lambda_k = 0$ by the basic fact.
Since this holds for any Eulerian reweighted subgraph $A$, it follows that $\vec{\lambda}_k^{v*}(G) = \vec{\lambda}_k^{e*}(G) = 0$.

In the other direction, assume $G$ has less than $k$ strongly connected components $S_1, \ldots, S_l$ for $l < k$.
Then, in each strongly connected component $S_i$, there is an Eulerian reweighting $A_i$ in the induced subgraph of $S_i$ such that $S_i$ is strongly connected.
(It is not difficult to see this directly, or one can use Hoffman's result in \autoref{lem:Hoffman}.)
So, there is an Eulerian reweighting such that the underlying undirected graph $G'$ has at most $l < k$ connected components, 
and thus $\vec{\lambda}_k^{v*}(G), \vec{\lambda}_k^{e*}(G) \geq \lambda_k(G') > 0$ by the basic result.
\end{proof}

One might expect that there is a robust generalization of \autoref{prop:lambda-k-SCC} relating $\vec{\lambda}_k^{v*}(G)$ and $\vec{\lambda}_k^{e*}(G)$ to $k$-way directed vertex expansion and $k$-way directed edge conductance, just as in the case $k=2$ in \autoref{thm:directed-vertex-expansion} and \autoref{thm:directed-edge-conductance}.
But in general, unlike undirected graphs, it is not true that $G$ has at least $k$ strongly connected components if and only if $G$ has at least $k$ disjoint subsets $S_1, \ldots, S_k$ each with directed edge conductance zero or directed vertex expansion zero.
Note the subtlety that this is true for $k=2$, as there is a source component and a sink component with directed edge conductance and directed vertex expansion zero.

\begin{example}[Counterexample to Higher-Order Cheeger Inequality for Directed Graphs] \label{ex:higher-order}
Consider the complete directed acyclic graph $G$ where the vertex set is $[n]$ and there is a directed edge $ij$ for every $i < j$.
On the one hand, $\vec{\lambda}_k^{v*}(G) = \vec{\lambda}_k^{e*}(G) = 0$ for every $k \le n$, as any Eulerian reweighting must have $n$ isolated vertices (with self-loops).
On the other hand, for any $k \geq 3$, at least one set has non-zero directed edge conductance.  
Furthermore, it can be shown that for $k \geq 2 \log_2 n$, any $k$ disjoint subsets must contain at least one subset of directed edge conductance at least $1/4$.
This provides a strong counterexample that $ \vec{\lambda}_k^{e*}(G)$ is small but the $k$-way directed edge conductance $\vec{\phi}_k(G)$ is large. 
A similar argument can be made for the case of directed vertex expansion.
\end{example}

We believe that there is still a robust generalization of \autoref{prop:lambda-k-SCC}, such that $\vec{\lambda}_k^{v*}(G), \vec{\lambda}_k^{e*}(G)$ is small if and only if there are $k$ disjoint subsets where each is ``close'' to a strongly connected component.
But it is not clear how to formulate closeness to a strongly connected component, as it is a ``global'' property that cannot be determined by only looking at the edges incident to a subset $S \subseteq V$.
On a technical level, we remark that the proofs in~\cite{LOT12,KLT22} can be followed to construct $k$ disjointly-supported functions $f_1,\ldots,f_k$ from a solution to $\vec{\lambda}_k^{v*}(G)$ and $\vec{\lambda}_k^{e*}(G)$, such that each $f_i$ has small objective value to $\vec{\lambda}_2^{v*}(G)$ and $\vec{\lambda}_2^{e*}(G)$.  
However, using the new threshold rounding algorithm for $\vec{\lambda}_2^{v*}(G)$ and $\vec{\lambda}_2^{e*}(G)$ based on $f_i \pm r_i$ in \autoref{sec:threshold-rounding}, we can no longer conclude that there is a subset $S_i$ of small directed edge conductance or directed vertex expansion in the support of $f_i$, as the support of $f_i \pm r_i$ could be very different from that of $f_i$.
This also indirectly shows that the new idea of doing threshold rounding on $f \pm r$ is a necessary modification.

We leave the problem of proving a robust generalization of \autoref{prop:lambda-k-SCC} as an open problem.

\subsection{Bipartite Cheeger Inequality for Directed Graphs}

Another basic result in spectral graph theory states that $\lambda_n=2$ if and only if $G$ has a bipartite component $S$, or equivalently $G$ has a set $S$ of conductance zero with the induced subgraph $G[S]$ being bipartite.
Trevisan~\cite{Tre09} proved a robust generalization of this basic result, by proving a Cheeger-type inequality that $\lambda_n$ is close to $2$ if and only if $G$ has a set $S$ of small conductance with the induced subgraph $G[S]$ being close to bipartite.

As in \autoref{sec:higher-order-directed}, we can use the $n$-th reweighted eigenvalue to prove an analog of the basic result for directed graphs.
We omit the proof as it is similar to that in \autoref{prop:lambda-k-SCC} and also because it is not used in other results.

\begin{proposition}[Reweighted Eigenvalues and Bipartite Strongly Connected Components] \label{prop:lambda-n-SCC}
\phantom{ }
\begin{itemize}
  \item For any directed graph $G = (V,E)$ with weight function $w : E \to \R_{\geq 0}$, $\vec{\lambda}_n^{e*}(G) = 2$ if and only if $G$ has a strongly connected component $S$ such that the induced subgraph $G[S]$ is bipartite.
  \item For any directed graph $G = (V,E)$ with weight function $\pi : V \to \R_{\geq 0}$, $\vec{\lambda}_n^{v*}(G) = 2$ if and only if $G$ has a strongly connected component $S$ such that the induced subgraph $G[S]$ is bipartite.
\end{itemize}
\end{proposition}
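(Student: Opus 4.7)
The plan is to adapt the proof strategy of \autoref{prop:lambda-k-SCC} to the $\lambda_n$ setting, using the classical fact that $\lambda_n = 2$ for the normalized Laplacian of an undirected graph is attained exactly on a bipartite connected component. I would handle the vertex-weighted and edge-weighted cases in parallel, since in both formulations the spectrum of the SDP matrix $M$ is driven by the underlying undirected graph $B = (A+A^T)/2$ of the chosen Eulerian reweighting $A$, and \autoref{lem:Hoffman} (Hoffman's circulation lemma) is again the workhorse.

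For the ``if'' direction, suppose $S$ is a strongly connected component of $G$ with $G[S]$ bipartite. By \autoref{lem:Hoffman} applied to $G[S]$ with a positive edge-weight function, there is an Eulerian reweighting $A_S$ supported on the edges of $G[S]$ that places strictly positive weight on every edge. I would then extend $A_S$ to an $A$ on all of $V$ satisfying the constraints of \autoref{def:directed-vertex-primal} (respectively \autoref{def:directed-edge-primal}): scale $A_S$ appropriately and use self-loops on $V \setminus S$ in the vertex case, or set $A$ to zero outside $G[S]$ in the edge case. Because $A$ is supported within $G[S]$, the restriction $B|_S$ is the underlying undirected graph of $G[S]$, which is bipartite, and hence sits as a bipartite connected component of $B$. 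The standard spectral characterization of bipartite components then delivers an eigenvalue equal to $2$ for $M$, certifying $\vec{\lambda}_n^{v*}(G) = 2$ (respectively $\vec{\lambda}_n^{e*}(G) = 2$).

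For the ``only if'' direction, let $A$ be a maximizer, so that $\lambda_n(M) = 2$. The equality case of the Laplacian bound $y^T L_B y \leq 2 y^T D y$ forces the top eigenvector to be supported on some $T \subseteq V$ with three properties: (i) $T$ is a union of connected components of $B$, so no $B$-edge crosses between $T$ and $V \setminus T$; (ii) $B|_T$ is bipartite, with no self-loop contribution from $A$ on $T$; and (iii) the diagonal equality $D_A(v,v) = D(v,v)$ holds for every $v \in T$, i.e.\ $A$ uses full capacity of $G$ at each vertex of $T$. Conditions (i) and (iii) together imply that every $G$-edge incident to $T$ is captured by $A$, so $G$ has no edge between $T$ and $V \setminus T$. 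Since $A|_T$ is Eulerian with connected underlying undirected graph $B|_T$, $A|_T$ is in fact strongly connected, so $T$ is strongly connected in $G$; combined with the absence of $G$-edges leaving $T$, this makes $T$ a strongly connected component of $G$. Finally, (ii) plus (iii) transfer bipartiteness from $B|_T$ to $G[T]$, because full-capacity at every $v \in T$ means every $G[T]$-edge appears in the support of $A$, and hence sits between the two sides of the bipartition carried by $B|_T$.

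The main obstacle is the ``only if'' direction, specifically nailing down the equality characterization and using it to control both self-loops and edges leaving $T$. The subtle point is that a self-loop in $A$ at $v \in T$ would break the bipartite eigenvector condition while still contributing to $D_A(v,v)$, so one must argue that on $T$ the self-loop weight must vanish, while elsewhere full capacity forces closure under $G$-adjacency. This delicate interplay between the Laplacian equality, the diagonal normalization $D$, and the Eulerian constraints is precisely the step that generalizes the classical undirected $\lambda_n = 2$ argument to the directed reweighted setting, and once it is handled the remaining combinatorial extraction of a bipartite SCC proceeds exactly as in the argument sketched in \autoref{prop:lambda-k-SCC}.
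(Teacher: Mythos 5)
Your ``only if'' direction rests on a step that is not available. Condition (iii) asserts that the equality case forces $D_A(v,v)=D(v,v)$ on $T$. But for any Eulerian $A$ the symmetrized degree is the common in/out throughput, $D_A(v,v)=\sum_u A(u,v)=\sum_u A(v,u)\le\min\{d_w^-(v),d_w^+(v)\}\le\frac12 D(v,v)$, since $D(v,v)=d_w^-(v)+d_w^+(v)$; so (iii) can never hold, and indeed with the normalization by $D$ written in \autoref{def:directed-edge-primal} the Rayleigh quotient is bounded by $1$ and the equality case you analyze never occurs. The value $2$ is only meaningful when the matrix is the normalized Laplacian of the reweighted graph itself (as it automatically is in the vertex case, where the constraints force $D_B=\Pi$), and under that reading (iii) is vacuous: nothing in the equality analysis forces the optimal $A$ to saturate, or even touch, the $G$-edges incident to $T$, because the maximization is free to put weight $0$ on precisely the edges that would ruin bipartiteness. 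Consequently the two inferences you draw from (i)+(iii) and (ii)+(iii) --- ``no $G$-edge leaves $T$, so $T$ is a strongly connected component of $G$'' and ``every edge of $G[T]$ is in the support of $A$, so $G[T]$ is bipartite'' --- do not follow. What you actually establish is only that the support of $A$ on $T$ is a strongly connected subgraph of $G$ whose symmetrization is bipartite, and that is strictly weaker: take the directed $4$-cycle $1\to2\to3\to4\to1$ with the extra chord $3\to1$ and $\pi\equiv 1$; the circulation of weight $1$ on the $4$-cycle is feasible, its symmetrization is connected, bipartite, self-loop-free and has degrees exactly $\pi$, so the top eigenvalue equals $2$, yet the unique strongly connected component of $G$ is all of $V$ and $G[V]$ contains the triangle $1,2,3$. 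So no bookkeeping of the equality case along your lines can extract a bipartite strongly connected component; this direction needs a genuinely different idea (and a careful reading of which Laplacian and which notion of component the statement intends).

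The ``if'' direction also has a gap in the vertex case. After scaling $A_S$ you must still meet the exact constraints $\sum_v A(u,v)=\pi(u)$ at every $u\in S$, and a circulation on $G[S]$ with prescribed throughput $\pi$ need not exist (on a directed triangle the circulation space is one-dimensional, so $\pi=(1,2,1)$ is unattainable); the only way to restore feasibility is to pad with self-loops inside $S$, but any self-loop weight at $v\in S$ forces an eigenvector attaining $2$ to vanish at $v$, so the padded solution certifies only $\lambda_n<2$. In the edge case your construction is fine once the objective is the reweighted graph's own normalized Laplacian, but under the $D$-normalization you are implicitly using, a bipartite component gives Rayleigh quotient at most $1$, not $2$. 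For comparison, the paper omits this proof and only indicates that it parallels \autoref{prop:lambda-k-SCC} (cycle decompositions and \autoref{lem:Hoffman}); your use of Hoffman's lemma for the ``if'' direction is in that spirit, but the capacity-saturation equality argument, which is the crux of your ``only if'' direction, is exactly the step that is not available in the directed reweighted setting.
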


As in \autoref{sec:higher-order-directed}, the natural analogs of Trevisan's result for directed graphs are not true, because the existence of a nearly strongly connected bipartite component does not imply the existence of a set $S$ of small directed edge conductance or directed vertex expansion, with the induced subgraph $G[S]$ being close to bipartite.

\begin{example}[Counterexample to Bipartite Cheeger Inequality for Directed Graphs] \label{ex:bipartite-directed}
Consider the example shown in the figure below. 
\begin{figure}[h]
     \centering
     \includegraphics[width=0.38\textwidth]{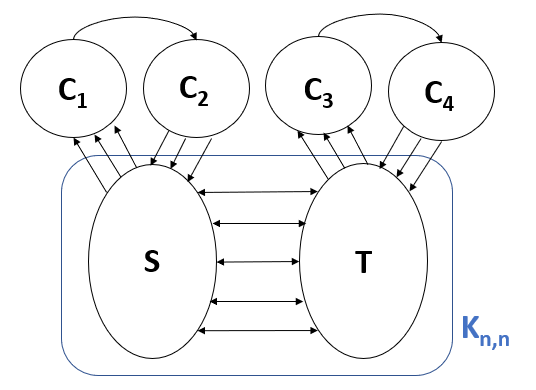}
     \label{fig:bipartite-counter-example}
 \end{figure}
In this directed graph $G$, $|S|=|T|=n$ and $|C_1|=|C_2|=|C_3|=|C_4|=n/2$.
Each $C_i$ is a clique, and there is only one edge from $C_1$ to $C_2$ and only one edge from $C_3$ to $C_4$.
The induced subgraph on $S \cup T$ is a complete bipartite graph $K_{n,n}$.
Every vertex in $S$ has an edge to every vertex in $C_1$, and every vertex in $C_2$ has an edge to every vertex in $S$.
Similarly, every vertex in $T$ has an edge to every vertex in $C_3$, and every vertex in $C_4$ has an edge to every vertex in $T$.
Every edge in $G$ has weight one.

On the one hand, because of the bottlenecks from $C_1$ to $C_2$ and from $C_3$ to $C_4$, any Eulerian reweighing $A$ will have $\sum_{uv : u \in S, v \in C_1 }A(u,v) = \sum_{uv : u \in C_2, v \in S}A(u,v) \leq 1$ and $\sum_{uv: u \in T, v \in C_3 }A(u,v) = \sum_{uv : u \in C_4, v \in T}A(u,v) \leq 1$.
Therefore, $S \cup T$ is an induced bipartite graph with small edge conductance in the underlying undirected graph $\frac12 (A+A^T)$, and one can use the easy direction of Trevisan's result to show that $\vec{\lambda}_n^{e*}(G) \geq 2 - O(1/n^2)$.
On the other hand, $S \cup T$ has large directed edge conductance, and any subset with small directed edge conductance must be far from bipartite because of the edges induced in $C_i$.
We could formally define directed bipartite edge conductance $\vec{\phi}_B(G)$ and show that $\vec{\phi}_B(G) = \Omega(1)$ is large, but we decide to omit these details.
To summarize, this gives a strong counterexample where $\vec{\lambda}_n^{e*}(G)$ is very close to $2$ but there does not exists any subset $S$ with small directed edge conductance and the induced subgraph $G[S]$ being close to bipartite.
A similar argument can be made for the case of directed vertex expansion.
\end{example}

As in \autoref{sec:higher-order-directed}, we believe that there is a robust generalization of \autoref{prop:lambda-n-SCC} that $\vec{\lambda}_n^{e*}(G)$ and $\vec{\lambda}_n^{e*}(G)$ are close to $2$ if and only if $G$ has a nearly bipartite strongly connected component. 
We leave it as an open problem to formulate the combinatorial condition and to prove such a Cheeger-type inequality.

\subsection{Improved Cheeger Inequality for Directed Graphs} \label{sec:improved-Cheeger-directed}

The improved Cheeger inequality in~\cite{KLLOT13} states that $\phi(G) \lesssim k \cdot \lambda_2(G) / \sqrt{\lambda_k(G)}$ for any $k \geq 2$ and any undirected graph $G$.
This shows that $\lambda_2(G)$ is a tighter approximation to $\phi(G)$ when $\lambda_k(G)$ is large for a small $k$.
The result provides an explanation for the good empirical performance of the spectral partitioning algorithm.

The improved Cheeger inequality was extended in~\cite{KLT22} to relate the reweighted second eigenvalue and vertex expansion, showing that
$\psi(G) \lesssim k^{\frac{3}{2}} \cdot \lambda_2^*(G) \cdot \log \Delta / \sqrt{\lambda_k^*(G)}$.

Unlike the higher-order and bipartite Cheeger inequalities, 
we can extend this result to directed graphs as this is only about $2$-way partitioning (recall the discussion above \autoref{ex:higher-order}).
This potentially can also be used to explain the good empirical performance of the spectral algorithm in \autoref{thm:directed-edge-conductance}.

\begin{theorem}[Improved Cheeger's inequality for Directed Vertex Expansion] \label{thm:improved-cheeger-vertex}
Let $G = (V, E, \pi)$ be a vertex-weighted directed graph.
For any $2 \le k \le n$,
  \[
    \vec{\lambda}_2^{v*}(G) \lesssim \vec{\psi}(G) 
    \lesssim \frac{k^{3/2} \cdot \log (\Delta \cdot \alpha(G)) \cdot \vec{\lambda}_2^{v*}(G)}
    {\sqrt{\vec{\lambda}_k^{v*}(G)}}
    \lesssim \frac{k^{3/2} \cdot \log (\Delta / \vec{\psi}(G) ) \cdot \vec{\lambda}_2^{v*}(G)}
    {\sqrt{\vec{\lambda}_k^{v*}(G)}}.
  \]
\end{theorem}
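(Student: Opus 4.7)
The easy direction $\vec{\lambda}_2^{v*}(G) \lesssim \vec{\psi}(G)$ is immediate from \autoref{prop:directed-vertex-easy}. For the hard direction, the plan is to adapt the improved Cheeger framework of~\cite{KLLOT13,KLT22} to the directed vertex-expansion setting. Start by solving the SDP for $\vec{\lambda}_2^{v*}(G)$ in \autoref{prop:lambda2-vertex} to obtain an optimal $n$-dimensional embedding, then invoke the random projection result \autoref{thm:dimension-reduction} to get a $1$-dimensional solution $f$ satisfying $\vec{\lambda}_v^{(1)}(G) \lesssim \log(\Delta \cdot \alpha(G)) \cdot \vec{\lambda}_2^{v*}(G)$. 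Convert $f$ to an $\ell_1$ solution via the squaring map in \autoref{prop:l22-to-l1} to obtain a feasible solution to $\eta_v(G)$, and dualize to get a dual $\ell_1$ triple $(f,q,r)$ to $\xi_v(G)$ by \autoref{lem:l1-dual-vertex}. Once the final bound is established, using $\alpha(G) \lesssim \Delta/\vec{\psi}(G)$ from \autoref{lem:asymmetric-ratio-vertex} upgrades $\log(\Delta \cdot \alpha(G))$ to $\log(\Delta/\vec{\psi}(G))$.

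The heart of the argument is a sharper rounding of the dual $\ell_1$ solution that uses $\vec{\lambda}_k^{v*}(G)$ rather than $\vec{\lambda}_2^{v*}(G)$. Following the strategy of~\cite{KLLOT13,KLT22}, consider the four candidate rounding functions $g_1,\ldots,g_4$ from \autoref{sec:threshold-rounding}. Fix the one whose threshold rounding gives average outer boundary $\lesssim \xi_v(G)$ and support measure $\Omega(1)$. Instead of applying a plain threshold, partition the nonzero range of $g_i$ into roughly $k$ intervals whose support has comparable $\pi$-measure, and define disjointly supported ``level'' functions $h_1, \ldots, h_{k}$. If threshold rounding already returns a set of directed vertex expansion $\lesssim k \cdot \xi_v(G)/\sqrt{\vec{\lambda}_k^{v*}(G)}$, we are done. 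Otherwise, by a standard ``smoothness argument'' (cf. Lemma 3.5 of~\cite{KLLOT13}), the $h_i$'s are $k$ disjointly-supported test functions each of small Rayleigh quotient with respect to any feasible Eulerian reweighting; by the variational characterization of $\vec{\lambda}_k^{v*}(G)$, this forces $\vec{\lambda}_k^{v*}(G)$ to be small, yielding the desired contradiction. Tracking the losses: one factor of $k$ comes from the $k$ level sets, and a further factor of $\sqrt{k}$ comes from the squaring step in \autoref{prop:l22-to-l1}, producing the $k^{3/2}$ in the statement, while the $\log(\Delta \cdot \alpha(G))$ comes from dimension reduction.

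The main obstacle is adapting the disjoint-support argument when the rounding is on $g_i$ rather than directly on the embedding $f$ used to define $\vec{\lambda}_k^{v*}(G)$. As discussed in \autoref{sec:higher-order-directed}, the twist of using $f \pm r$ in \autoref{sec:threshold-rounding} creates a mismatch between the supports of the test functions $h_i$ built from $g_i$ and the embedding on which $\vec{\lambda}_k^{v*}(G)$'s variational characterization naturally operates. The key technical step is to show that each level function $h_i$, extended suitably by the vertex potential $r$, still yields a legitimate test vector for $\vec{\lambda}_k^{v*}(G)$ with Rayleigh quotient controlled by the local $\ell_1$ contribution of that level, so that the contradiction with large $\vec{\lambda}_k^{v*}(G)$ goes through. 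Since we are producing a single good set rather than a $k$-way partition, the obstruction from \autoref{ex:higher-order} is avoided: we only need that \emph{at least one} level function remains disjointly supported, which can be arranged by choosing the rounding function $g_i$ after examining the structure of $(f,r)$.
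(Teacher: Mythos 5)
Your front end (solve the SDP of \autoref{prop:lambda2-vertex}, project via \autoref{thm:dimension-reduction}, pass to $\ell_1$ via \autoref{prop:l22-to-l1}, dualize via \autoref{lem:l1-dual-vertex}, finish with \autoref{lem:asymmetric-ratio-vertex}) matches the paper, but the core of your argument has a genuine quantitative gap. All the bounds you then aim for are expressed in terms of $\xi_v(G)\asymp\eta_v(G)\lesssim\sqrt{\log(\Delta\cdot\alpha(G))\cdot\vec{\lambda}_2^{v*}(G)}$, i.e.\ with a square-root dependence on $\vec{\lambda}_2^{v*}(G)$, whereas the theorem requires \emph{linear} dependence --- that is the whole content of an ``improved'' Cheeger inequality. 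In fact your dichotomy collapses: since normalized Laplacian eigenvalues are at most $2$, we always have $k\,\xi_v(G)/\sqrt{\vec{\lambda}_k^{v*}(G)}\ge\xi_v(G)$, so plain threshold rounding (\autoref{prop:threshold-rounding}) already lands in your ``success'' branch, and the resulting bound $\vec{\psi}(G)\lesssim k\sqrt{\log(\Delta\cdot\alpha(G))\cdot\vec{\lambda}_2^{v*}(G)}/\sqrt{\vec{\lambda}_k^{v*}(G)}$ does not imply the statement when $\vec{\lambda}_2^{v*}(G)$ is small; the ``otherwise'' branch has nothing to contradict, since the theorem carries no hypothesis that $\vec{\lambda}_k^{v*}(G)$ is large. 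The missing mechanism is the one the paper uses: keep the one-dimensional $\ell_2^2$ solution $f$, prove an improved rounding bound $\eta_v(G)\lesssim k\lambda_f+k\norm{f-y_f}_\pi\sqrt{\lambda_f}$ for a $k$-step approximation $y_f$ (\autoref{prop:k-step-rounding-vertex}), and separately prove $\norm{f-y_f}_\pi^2\lesssim k\,\xi_f/\vec{\sigma}_k^{v*}(G)$ (\autoref{prop:k-step-construction-vertex}); it is the product $\norm{f-y_f}_\pi\sqrt{\lambda_f}$ that converts $\sqrt{\lambda_f}$ into $\lambda_f/\sqrt{\vec{\sigma}_k^{v*}(G)}$. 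Your sketch never forms this product. Relatedly, your accounting of the exponent is off: the extra $\sqrt{k}$ beyond [KLLOT13] comes from replacing the non-convex $\vec{\lambda}_k^{v*}(G)$ by the SDP surrogate $\vec{\sigma}_k^{v*}(G)$ with $\vec{\lambda}_k^{v*}\le\vec{\sigma}_k^{v*}\le k\,\vec{\lambda}_k^{v*}$, not from the squaring step of \autoref{prop:l22-to-l1}.

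The step you yourself flag as the ``key technical step'' is also asserted rather than argued, and the surrounding claims are confused. To force $\vec{\lambda}_k^{v*}(G)$ small via test vectors you need, for \emph{every} feasible Eulerian reweighting $A$, $k$ pairwise disjointly supported (equivalently $\Pi$-orthogonal) functions of small Rayleigh quotient; ``at least one level function remains disjointly supported'' is neither meaningful nor sufficient, and ``choosing $g_i$ after examining $(f,r)$'' is not an argument. The concrete difficulty with level functions of $g_i$, i.e.\ of $f\pm r$, is that the dual constraint $q(v)\ge|f(u)-f(v)|-r(u)+r(v)$ controls the increase of $f+r$ along a directed edge $uv$ but not its decrease; a repair is possible via the circulation identity (for Eulerian $A$ the positive and negative increments of any potential balance, giving $\sum_{uv}A(u,v)\,|(f\pm r)(u)-(f\pm r)(v)|\le 2\xi_f$), but you would have to state and prove it, and \autoref{ex:higher-order} is not the relevant obstruction in any case (it concerns the opposite direction, extracting $k$ sets from a small $\vec{\lambda}_k^{v*}$). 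Even granting that repair, it feeds back into the flawed assembly above, so the proposal as written does not establish the theorem; the paper's route through $\vec{\sigma}_k^{v*}(G)$ and the two propositions is genuinely different from yours.
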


\begin{theorem}[Improved Cheeger's inequality for Directed Edge Conductance] \label{thm:improved-cheeger-edge}
Let $G = (V, E, w)$ be an edge-weighted directed graph.
For any $2 \le k \le n$,
  \[
    \vec{\lambda}_2^{e*}(G) \lesssim \vec{\phi}(G) 
    \lesssim \frac{k^{3/2} \cdot \log \alpha(G) \cdot \vec{\lambda}_2^{e*}(G)}
    {\sqrt{\vec{\lambda}_k^{e*}(G)}}
    \lesssim \frac{k^{3/2} \cdot \log (1 / \vec{\phi}(G) ) \cdot \vec{\lambda}_2^{e*}(G)}
    {\sqrt{\vec{\lambda}_k^{e*}(G)}}.
  \]
\end{theorem}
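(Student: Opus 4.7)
The easy direction is \autoref{prop:directed-edge-easy}, and the last inequality follows from \autoref{lem:asymmetric-ratio-edge}, so the main task is the middle inequality. The plan is to adapt the improved Cheeger analysis of \cite{KLLOT13}, extended to reweighted eigenvalues in \cite{KLT22}, to the directed edge setting by combining the dimension reduction with $\log\alpha(G)$ loss (\autoref{thm:dimension-reduction}) with the new threshold rounding on $\pm f \pm r$ from \autoref{sec:threshold-rounding}.

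First, I would solve the SDP for $\vec{\lambda}_2^{e*}(G)$ and apply \autoref{thm:dimension-reduction} to obtain a $1$-dimensional embedding $f:V\to\R$ whose $\ell_2^2$ value is $\lesssim \log\alpha(G)\cdot\vec{\lambda}_2^{e*}(G)$. Then take the LP dual of the inner $A$-maximization (analogously to \autoref{lem:l1-dual-edge} but keeping the quadratic form $(f(u)-f(v))^2$ instead of $|f(u)-f(v)|$, so as to avoid the square-root loss that was needed only for standard Cheeger) to obtain dual variables $q:E\to\R_{\ge 0}$ and $r:V\to\R$ with $q(uv)\ge (f(u)-f(v))^2 - r(u)+r(v)$ for every $uv\in E$ and $\sum_{uv\in E} w(uv)\,q(uv) \lesssim \log\alpha(G)\cdot\vec{\lambda}_2^{e*}(G)$. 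Form the four threshold embeddings $g_1,g_2,g_3,g_4$ built from $\pm f \pm r$ exactly as in \autoref{sec:threshold-rounding}.

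The key new ingredient is a spread lemma: if $\vec{\lambda}_k^{e*}(G)\ge \Lambda$, then for some $i\in\{1,2,3,4\}$ the $d_w$-weighted $1/k$-quantile of $g_i$ is $\Omega(\sqrt{\Lambda}/k^{3/2})$. The strategy, following \cite{KLLOT13,KLT22}, is by contradiction: assume $g_i$ is concentrated on a set of small $d_w$-volume, partition the level sets of $f$ into $k$ disjoint slabs $T_1,\dots,T_k$ of roughly balanced $d_w$-mass, truncate $f$ to each slab (while respecting the $r$-shift) to produce disjointly-supported test embeddings, and plug them into \autoref{prop:sum-of-lambda-k} to exhibit a feasible Eulerian $A$ whose $k$ smallest normalized Laplacian eigenvalues sum to less than $k\Lambda$, contradicting $\vec{\lambda}_k^{e*}(G)\ge \Lambda$. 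The factor $k^{3/2}$ is the standard interface loss between slab truncation and the quadratic Dirichlet form; the large-optimal property (\autoref{lem:large-optimal}) is used to verify that the constructed $A$ can be taken Eulerian within the prescribed edge capacities.

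Given the spread lemma, I would then run the threshold rounding of \autoref{sec:threshold-rounding} with the threshold $t$ restricted to $[0,T]$ for $T\asymp\sqrt{\vec{\lambda}_k^{e*}(G)}/k^{3/2}$. The numerator (expected directed edge boundary) is still at most $\sum_{uv\in E}w(uv)\,q(uv) \lesssim \log\alpha(G)\cdot\vec{\lambda}_2^{e*}(G)$, while the expected denominator becomes $\Omega(T)$ by the spread lemma, delivering $\vec{\phi}(G) \lesssim k^{3/2}\log\alpha(G)\cdot \vec{\lambda}_2^{e*}(G)/\sqrt{\vec{\lambda}_k^{e*}(G)}$; applying \autoref{lem:asymmetric-ratio-edge} yields the stated form. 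The main obstacle I anticipate is the spread lemma itself: the slab truncation must be compatible with the asymmetric shift $r$ in the dual, so one cannot simply cite \cite{KLT22} as a black box, and the $\pm f \pm r$ construction of \autoref{sec:threshold-rounding} is exactly what is needed to guarantee that the truncated test embeddings provide a valid upper bound to compare against $\vec{\lambda}_k^{e*}(G)$.
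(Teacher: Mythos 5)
Your outer scaffolding matches the paper (easy direction from \autoref{prop:directed-edge-easy}, the last inequality from \autoref{lem:asymmetric-ratio-edge}, and dimension reduction with $\log\alpha(G)$ loss from \autoref{thm:dimension-reduction}), but the core of your argument is not what the paper does, and it has a genuine gap at the decisive step. The paper proves the middle inequality by the two-step $k$-step-function machinery of \cite{KLLOT13,KLT22}: first, an improved rounding statement (the edge analog of \autoref{prop:k-step-rounding-vertex}) showing that if the one-dimensional solution $f$ is close to a $k$-step function $y$ then the $\ell_1$ objective satisfies $\eta_e(G)\lesssim k\lambda_f + k\norm{f-y}\sqrt{\lambda_f}$; second, a construction (the analog of \autoref{prop:k-step-construction-vertex}) of a $k$-step $y$ with $\norm{f-y}^2\lesssim k\,\xi_f/\vec{\sigma}_k^{e*}(G)$, where $\vec{\sigma}_k^{e*}$ is the sum-of-$k$-smallest-eigenvalues SDP from \autoref{prop:sum-of-lambda-k}; combining these with \autoref{thm:dimension-reduction} and $\vec{\sigma}_k^{e*}\ge\vec{\lambda}_k^{e*}$ gives the $k^{3/2}$ bound. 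You discard this machinery and instead keep the quadratic dual ($q(uv)\ge (f(u)-f(v))^2-r(u)+r(v)$), invoke a ``spread lemma'', and restrict the threshold range. The fatal step is the claim that the expected directed boundary of the threshold cuts is at most $\sum_{uv\in E}w(uv)q(uv)$: the boundary integral is an $\ell_1$ quantity --- for $g_1=\max\{0,f+r-c_1\}$ the per-edge contribution is governed by $|f(u)-f(v)|+r(v)-r(u)$ --- and this is not controlled by the squared differences; with $r\equiv 0$ one has $q(uv)=(f(u)-f(v))^2\ll|f(u)-f(v)|$ in the typical regime of small stretches. Restricting $t$ to $[0,T]$ does not repair this, since for $|f(u)-f(v)|\le T$ the contribution $|f(u)-f(v)|$ is at least $(f(u)-f(v))^2/T$, i.e.\ the inequality you need goes the wrong way. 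This $\ell_1$-versus-$\ell_2^2$ mismatch is exactly the square-root loss that the $k$-step approximation is introduced to beat; once the numerator is bounded correctly (through the $\ell_1$ dual of \autoref{lem:l1-dual-edge}) it scales like $\sqrt{\vec{\lambda}_2^{e*}(G)}$, so a denominator-only gain from a spread lemma cannot yield the linear dependence on $\vec{\lambda}_2^{e*}(G)$ that the theorem asserts.

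There is also a logical problem inside your spread lemma. To contradict $\vec{\lambda}_k^{e*}(G)\ge\Lambda$ you must show that \emph{every} feasible Eulerian reweighting $A$ has $\lambda_k<\Lambda$; the standard way is to exhibit $k$ disjointly supported test functions whose Rayleigh quotients are small simultaneously for all feasible $A$, which follows by bounding their Dirichlet forms by the capacities $w$ themselves. Exhibiting one feasible Eulerian $A$ with a small sum of its $k$ smallest eigenvalues, as you propose, contradicts nothing, because $\vec{\lambda}_k^{e*}$ is a maximum over $A$; and \autoref{lem:large-optimal} plays no role in that direction (it is used only in the dimension reduction, to lower bound the inner maximum). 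Moreover, what largeness of $\vec{\lambda}_k^{e*}$ actually buys in this framework is that $f$ is close to a $k$-step function (a clustering statement), not a lower bound on the $1/k$-quantile of the rounding functions $g_i$, which in addition depend on the dual potentials $r$. To complete the proof you should revert to the $\ell_1$ dual and prove the edge analogs of \autoref{prop:k-step-rounding-vertex} and \autoref{prop:k-step-construction-vertex}, then combine them exactly as in the vertex case.
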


The proofs of the two results are similar to that in~\cite{KLLOT13,KLT22} and also similar to each other as in \autoref{sec:main-results}, 
so we just provide a sketch of the proof of \autoref{thm:improved-cheeger-vertex} in the following.

Note that $\vec{\lambda}_k^{v*}(G)$ not a convex optimization problem.
As in~\cite{KLT22}, we change the objective in \autoref{def:directed-vertex-primal} to maximize the sum of the $k$ smallest eigenvalues $\sum_{i=1}^k \lambda_i(\L)$, so that we can use \autoref{prop:sum-of-lambda-k} to write this as a semidefinite program, which we call $\vec{\sigma}_k^{v*}(G)$.
Using the same manipulations as in~\cite[Proposition 5.2]{KLT22} and \autoref{prop:lambda2-vertex},
we can write
\begin{align*}
    \vec{\sigma}_k^{v*}(G) :=
     \min_{f: V \rightarrow \R^n} \max_{A \geq 0} &~~~ \frac12 \sum_{uv \in E} A(u,v) \cdot \norm{f(u) - f(v)}^2
    \\
    \st&~~~
    A(u, v) = 0 & & \forall uv \not \in E
    \\
    &~~~
    \sum_{v \in V} A(u, v) = \sum_{v \in V} A(v, u)  & & \forall u \in V
    \\
    &~~~
    \sum_{v \in V} A(v, u) = \pi(u) & & \forall u \in V
    \\
    &~~~
    \sum_{v \in V} \pi(v) \cdot f(v) f(v)^T \preceq I_n
    \\
    &~~~ \sum_{v \in V} \pi(v) \cdot \norm{f(v)}^2 = k.
\end{align*}
The proof will relate $\vec{\lambda}_2^{v*}(G)$ and $\vec{\sigma}_k^{v*}(G)$ to $\vec{\psi}(G)$.
Note that $ \vec{\lambda}_k^{v*}(G)  \leq  \vec{\sigma}_k^{v*}(G)  \leq  k \cdot \vec{\lambda}_k^{v*}(G)$.
This is where an extra factor of $\sqrt{k}$ is lost compared to the bound in~\cite{KLLOT13}.

We follow the same two-step approach in~\cite{KLLOT13,KLT22}.
The first step is to prove that if there is a $1$-dimensional solution to $\vec{\lambda}_v^{(1)}(G)$ that is close to a $k$-step function (i.e. a function with at most $k$ distinct values), then the approximation guarantee of threshold rounding in \autoref{sec:threshold-rounding} is improved.

\begin{proposition}[Improved Threshold Rounding] \label{prop:k-step-rounding-vertex}
Let $G = (V, E, \pi)$ be a vertex-weighted directed graph.
Given a solution $f : V \to \R$ to $\vec{\lambda}_v^{(1)}(G)$ with objective value $\lambda_f$ and a $k$-step function $y_f: V \to \R$ approximating $f$, it holds that
\[
\vec{\psi}(G) \lesssim 
\eta_v(G) \lesssim k \cdot \lambda_f + k \norm{f-y_f}_{\pi} \sqrt{\lambda_f},
\]
where $\eta_v(G)$ is the $\ell_1$-version of $\vec{\lambda}_v^{(1)}(G)$ in \autoref{def:l1-vertex}, and $\norm{z}_{\pi}^2 := \sum_v \pi(v) \cdot z(v)^2$ for any $z : V \to \R$.
Note the first inequality is by \autoref{prop:threshold-rounding}.
\end{proposition}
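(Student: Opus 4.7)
The plan is to adapt the improved Cheeger argument from \cite{KLLOT13}, extended to reweighted eigenvalues in \cite{KLT22}, to the $\ell_1$-formulation $\eta_v(G)$ of \autoref{def:l1-vertex}. After shifting $f$ so that $0$ is a $\pi$-weighted median and normalizing $\norm{f}_\pi^2 = 1$, I would take as a trial feasible solution to $\eta_v(G)$ the squared transform $g(u) := \sgn(f(u))\cdot f(u)^2$ from the proof of \autoref{prop:l22-to-l1}, which (after rescaling by a constant factor in $[1/2,1]$) satisfies $\sum_u \pi(u) g(u) = 0$ and $\sum_u \pi(u)|g(u)| \in [1,2]$, so its objective value upper bounds $\eta_v(G)$.

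With this choice of $g$, the inner maximization becomes
$$\max_A \frac12 \sum_{uv\in E} A(u,v) \cdot |f(u)-f(v)| \cdot |f(u)+f(v)|.$$
Writing $f = y_f + z$ with $z := f - y_f$, I would use $|f(u)+f(v)| \le |y_f(u)+y_f(v)| + |z(u)| + |z(v)|$ to split the numerator into a $y_f$-contribution and a $z$-contribution. The $z$-contribution is handled by Cauchy-Schwarz against $\sum A(u,v)(f(u)-f(v))^2 \le 2\lambda_f$, combined with the vertex-capacity constraints $\sum_v A(u,v) = \pi(u)$ to convert $\sum A(u,v)(z(u)^2+z(v)^2)$ into $2\norm{z}_\pi^2$; this yields the error term of order $\norm{z}_\pi \sqrt{\lambda_f}$, which is dominated by $k\norm{z}_\pi\sqrt{\lambda_f}$. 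The $y_f$-contribution is the main term: here I would group the edges by the pair of levels of $y_f$ at their endpoints and apply Cauchy-Schwarz \emph{within each group}, summing at most $O(k)$ nontrivial ``level-pair'' contributions and exploiting the Eulerian flow constraints to redistribute the weight across levels. This level-by-level structural argument is what supplies the linear dependence on $\lambda_f$ (as opposed to the $\sqrt{\lambda_f}$ one would get from a global Cauchy-Schwarz) and produces the $k\lambda_f$ term.

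The hardest step will be this level-pair analysis. In \cite{KLLOT13} the corresponding argument uses the symmetric adjacency structure of the undirected graph directly, whereas here $A$ is only Eulerian and the construction of $g$ must interact compatibly with the signed dual potential $r$ coming from \autoref{lem:l1-dual-vertex}; in particular, the ``four functions'' idea $f \pm r$ from \autoref{sec:threshold-rounding} must be carried through the squaring step, so that the bound survives after threshold rounding on whichever of the candidates yields a cut of the desired expansion. Threading the reweighted-eigenvalue formulation through the $k$-step decomposition while maintaining $\ell_1$ feasibility of the rescaled $g$ is what I expect to require the most care.

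Once this is established, an entirely analogous argument proves the corresponding bound $\eta_e(G) \lesssim k\lambda_f + k\norm{f-y_f}_{d_w}\sqrt{\lambda_f}$ for directed edge conductance (with $\pi$ replaced by the total-degree weight $d_w$ and the vertex-capacity constraint replaced by the edge-capacity constraint in \autoref{def:l1-edge}), which would be the input to the proof of \autoref{thm:improved-cheeger-edge}.
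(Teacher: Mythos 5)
The paper itself gives no written proof of this proposition: it states only that it follows by combining the arguments of [KLT22, Proposition 6.3] with \autoref{prop:l22-to-l1}. Your skeleton is consistent with that plan in outline — take the signed-square function $g$ from the proof of \autoref{prop:l22-to-l1} as a candidate feasible solution to $\eta_v(G)$, split $|f(u)+f(v)|$ using $y_f$, and control the $f-y_f$ part by Cauchy--Schwarz together with the degree constraints, which correctly yields the $\norm{f-y_f}_{\pi}\sqrt{\lambda_f}$ term. (Two smaller slips: shifting $f$ so that $0$ is a $\pi$-weighted median and then ``rescaling'' does not make $\sum_v \pi(v)g(v)=0$ — rescaling cannot fix the mean — the correct move is to choose the additive constant $c$ inside the square so that the signed square is $\pi$-mean zero, exactly as in \autoref{prop:l22-to-l1}, noting that $y_f+c$ is still a $k$-step function at the same $\ell_2(\pi)$ distance; and your worry about threading the dual potential $r$ and the four functions $f\pm r$ through the squaring is unnecessary for this proposition, since the first inequality $\vec{\psi}(G)\lesssim \eta_v(G)$ is exactly \autoref{prop:threshold-rounding} and all you must do is exhibit one feasible $\ell_1$ solution with small objective value.)

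The genuine gap is the main term, $\sum_{uv}B(u,v)\,|f(u)-f(v)|\,|y_f(u)+y_f(v)|$ (with the shift absorbed). Your proposed ``level-pair'' argument — group edges by the pair of $y_f$-levels at their endpoints, apply Cauchy--Schwarz within each group, and ``redistribute via the Eulerian constraints'' — does not, as described, produce a bound linear in $\lambda_f$: there are up to $\Theta(k^2)$ level pairs, not $O(k)$, and within any group Cauchy--Schwarz (or AM--GM with any parameter) still leaves a factor of $\bigl(\sum_{\mathrm{group}}B(u,v)(f(u)-f(v))^2\bigr)^{1/2}\le\sqrt{2\lambda_f}$ multiplied by a term of order $\norm{y_f}_{\pi}\approx 1$, so summing over groups you recover only a $\sqrt{\lambda_f}$-type bound — i.e.\ ordinary Cheeger, not the improved inequality. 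The linear $k\lambda_f$ term in [KLLOT13] and [KLT22, Prop.\ 6.3] comes from a genuinely different mechanism that your sketch does not reproduce: one works interval by interval between consecutive levels of $y_f$, uses that a vertex whose $f$-value lies in such an interval is within distance $|f(v)-y_f(v)|$ of the interval's endpoints, and thereby charges the portion of $|g(u)-g(v)|$ not chargeable to $|f-y_f|$-terms directly to $(f(u)-f(v))^2$, accumulating a factor $k$ from the at most $k$ intervals an edge can cross. Without importing that step/interval charging argument (adapted to the Eulerian reweighting $B$, which is routine once the pointwise charging is in place), the claimed bound $\eta_v(G)\lesssim k\lambda_f + k\norm{f-y_f}_{\pi}\sqrt{\lambda_f}$ is unsupported at precisely its crucial point.
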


The second step is to prove that if $\vec{\sigma}_k^{v*}(G)$ is large for a small $k$, then there is a good $k$-step approximation to a good solution to $\vec{\lambda}_v^{(1)}(G)$.
As in \autoref{sec:threshold-rounding}, we consider the $\ell_1$ dual program $\xi_v(G)$ in \autoref{lem:l1-dual-vertex} of $\eta_v(G)$.

\begin{proposition}[Constructing $k$-Step Approximation] \label{prop:k-step-construction-vertex}
Let $G = (V, E, \pi)$ be a vertex-weighted directed graph.
Given a solution $f : V \to \R$ to $\xi_v(G)$ with objective value $\xi_f$,
there exists a $k$-step function $y: V \rightarrow \R$ with
\[
\norm{f - y}_{\pi}^2 \lesssim \frac{k \cdot \xi_f}{\vec{\sigma}_k^{v*}(G)}.
\] 
\end{proposition}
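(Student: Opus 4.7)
The plan is to adapt the construction of Kwok--Lau--Lee--Oveis Gharan--Trevisan~\cite{KLLOT13} as reformulated for reweighted eigenvalues in~\cite{KLT22}, with the new twist of handling the dual potentials $r$ that are peculiar to the directed setting. The strategy is to build $y$ by cutting the real line into at most $k$ intervals, and then to verify that $\norm{f-y}_\pi^2$ cannot be too large because otherwise we could construct a good test solution to the SDP for $\vec{\sigma}_k^{v*}(G)$ (which, recall, upper bounds $k \cdot \vec{\lambda}_k^{v*}(G)$).

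First I would sort the vertices by $f$ (or by one of the shifted functions $f + r$ or $f - r$ used in the threshold rounding of~\autoref{sec:threshold-rounding}) and choose thresholds $c_0 < c_1 < \cdots < c_k$ so as to partition $V$ into blocks $B_1, \ldots, B_k$ according to the intervals $[c_{j-1}, c_j)$. The step function $y$ is then defined to be constant on each block, e.g.\ $y(v) := c_{j-1}$ when $v \in B_j$. The goal is to show that the $k$-step function produced by the right choice of thresholds satisfies the claimed bound; this is a convex bound, so it suffices to exhibit \emph{some} choice with small $\norm{f-y}_\pi^2$.

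Next, to certify that some such choice exists, I would construct a test solution $g : V \to \R^k$ to the SDP formulation of $\vec{\sigma}_k^{v*}(G)$. For each block $B_j$, define $g_j : V \to \R$ to be supported on $B_j$ and to equal $f(v) - y(v)$ (with possibly a truncation or shift so that $g_j$ vanishes at block boundaries). The coordinate $g := (g_1, \ldots, g_k)$ then has pairwise disjoint supports, so $\sum_v \pi(v)\, g(v) g(v)^T$ is diagonal and hence (after rescaling by at most a factor $k$) can be made to satisfy both SDP constraints $\sum_v \pi(v)\, g(v) g(v)^T \preceq I_k$ and $\sum_v \pi(v) \norm{g(v)}^2 = k$; the denominator after rescaling is essentially $\norm{f-y}_\pi^2$. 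By SDP optimality, for the optimal Eulerian reweighting $A$ in $\vec{\sigma}_k^{v*}(G)$,
\[
\vec{\sigma}_k^{v*}(G) \cdot \norm{f-y}_\pi^2 \;\lesssim\; k \cdot \tfrac{1}{2}\sum_{uv \in E} A(u,v) \cdot \norm{g(u)-g(v)}^2,
\]
so it remains to show the right-hand numerator is $O(k \cdot \xi_f)$.

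The main obstacle, and the place where the proof departs from~\cite{KLT22}, is bounding this numerator in terms of $\xi_f$ despite the asymmetry introduced by the dual potentials $r$. In the undirected case the constraint is the symmetric $q(v) \geq |f(u)-f(v)|$, so $\sum_{uv} A(u,v)|f(u)-f(v)|^2$ telescopes into $\sum_v \pi(v) q(v) \cdot \max_u|f(u)-f(v)|$ which is controlled by $\xi_f$; in our setting $q(v) \geq |f(u)-f(v)| - r(u)+r(v)$, and the extra $\pm r$ terms do not cancel across an arbitrary edge. I would resolve this exactly as in~\autoref{sec:threshold-rounding}: instead of partitioning along $f$, partition along one of the shifted functions $h := f+r$ or $h := f-r$, chosen so that the dual constraints become clean bounds on $|h(u)-h(v)|$ along the direction of each edge (via the Eulerian constraint on $A$, so in-edges and out-edges can be paired up). Because $h - (y \pm r)$ and $f - y$ have the same $\pi$-norm up to constants (shifting $y$ by $r$ coordinate-wise is still a $k$-step function on each $h$-block), the bound transfers back to $\norm{f-y}_\pi^2$. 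Putting everything together gives $\norm{f-y}_\pi^2 \lesssim k \cdot \xi_f / \vec{\sigma}_k^{v*}(G)$, and the proposition follows from $\vec{\sigma}_k^{v*}(G) \geq \vec{\lambda}_k^{v*}(G)$.
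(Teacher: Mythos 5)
Your overall plan -- cut the range of the embedding into $k$ intervals, build disjointly supported truncations supported on the blocks, use them (after rescaling) as a test solution to the SDP for $\vec{\sigma}_k^{v*}(G)$, and bound the resulting numerator by the dual objective -- is exactly the route the paper intends, which is the argument of \cite{KLLOT13} as adapted in \cite{KLT22}. The problem is the place where you deliberately depart from that route. The ``main obstacle'' you identify is not an obstacle: you do not need the $\pm r$ terms to cancel edge by edge. The test embedding $g=(g_1,\dots,g_k)$ is built from $f$ alone, and the inner maximization in $\vec{\sigma}_k^{v*}(G)$ is over Eulerian reweightings $A$ with $\sum_v A(u,v)=\sum_v A(v,u)=\pi(u)$; summing the dual constraints against any such $A$ kills the potentials globally, since $\sum_{uv\in E}A(u,v)\big(r(v)-r(u)\big)=\sum_v r(v)\big(d_A^-(v)-d_A^+(v)\big)=0$. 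Hence, with $g_j$ chosen $1$-Lipschitz with respect to $f$ (so $\lvert g_j(u)-g_j(v)\rvert\le\lvert f(u)-f(v)\rvert$ across every edge, including edges leaving a block), one gets $\max_A\sum_{uv}A(u,v)\norm{g(u)-g(v)}^2\lesssim \big(k/\norm{f-y}_\pi^2\big)\cdot \xi_f$ directly, with no modification for the directed setting -- this is precisely why the paper says the proof is ``essentially the same'' as \cite[Proposition 6.3]{KLT22} with no new steps. (Also note the inner problem is a maximum over all feasible $A$ once $g$ is fixed, so the bound must hold for every feasible $A$, which the dual argument gives; there is no single ``optimal $A$ in $\vec{\sigma}_k^{v*}(G)$'' to restrict to.)

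Worse, your actual workaround is invalid as stated. If you partition along $h=f\pm r$ and produce a $k$-step function $z$ constant on the $h$-blocks, the corresponding approximation of $f$ is $z\mp r$, and this is \emph{not} a $k$-step function unless $r$ happens to be constant on every block; your parenthetical claim that ``shifting $y$ by $r$ coordinate-wise is still a $k$-step function'' is false in general. So the bound you obtain on $\norm{h-z}_\pi^2$ does not yield the statement being proved, which requires a genuine $k$-step approximation of $f$ itself -- the object consumed by the improved rounding in \autoref{prop:k-step-rounding-vertex}. (The $f\pm r$ orderings are needed in the threshold rounding of \autoref{sec:threshold-rounding}, not here; indeed the paper remarks that this is what obstructs a higher-order statement, whereas the improved Cheeger direction goes through untouched.) A smaller gap: you wave at ``a truncation or shift so that $g_j$ vanishes at block boundaries,'' but the content of the KLLOT13 lemma is exactly the choice of thresholds (e.g.\ equal-mass intervals) guaranteeing $\sum_j\norm{g_j}_\pi^2\gtrsim\norm{f-y}_\pi^2$ while each $g_j$ remains $1$-Lipschitz in $f$; this needs to be carried out, not assumed. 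Dropping the $f\pm r$ detour and executing the standard construction along $f$ gives the paper's proof.
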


Combining the two propositions, using $\lambda_f = \xi_f$, and applying the dimension reduction result in \autoref{thm:dimension-reduction}, we get
\[
\vec{\psi}(G) 
\lesssim k \cdot \lambda_f + k \norm{f-y_f}_{\pi} \sqrt{\lambda_f}
\lesssim \frac{k^{1.5} \cdot \lambda_f}{\sqrt{\vec{\sigma}_k^{v*}(G)}}
\lesssim \frac{k^{1.5} \cdot \log(\Delta \cdot \alpha(G)) \cdot \vec{\lambda}_2^{v*}(G)  }{\sqrt{\vec{\lambda}_k^{v*}(G)}}.
\]

The proof of \autoref{prop:k-step-rounding-vertex} is by combining the arguments in \cite[Proposition 6.3]{KLT22} and \autoref{prop:l22-to-l1}.
The proof of \autoref{prop:k-step-construction-vertex} is essentially the same as in \cite[Proposition 6.3]{KLT22}.
There are no new steps in these proofs, 
so we omit them so as not to overload this paper.

\section{Cheeger-Type Inequalities for Hypergraphs} \label{sec:hypergraphs}

Louis~\cite{Lou15} and Chan, Louis, Tang, Zhang~\cite{CLTZ18} developed a spectral theory for hypergraphs based on a continuous time diffusion process.
They used it to derive a Cheeger inequality for hypergraph edge conductance, a higher-order Cheeger inequality for hypergraph $k$-way edge conductance, and a Cheeger inequality for hypergraph small-set conductance.

In this section, we will use the reweighted eigenvalue approach to derive similar results and compare with the results in~\cite{CLTZ18}.
In addition, we will prove an improved Cheeger inequality for hypergraph edge conductance, that was not known before.
Since the proofs of these results are all essentially the same as the corresponding proofs in~\cite{KLT22}, we just provide quick sketches so as to not overload this paper.

We note that vertex expansion in a hypergraph $H$ can simply be reduced to vertex expansion in its clique-graph $G$, and so the results of~\cite{KLT22} can be directly applied with $\Delta(G) \leq \Delta(H) \cdot r$, where $r$ is the maximum size of a hyperedge in $H$.
So we will only focus on hypergraph edge conductance in this section.

\subsection{Cheeger Inequality for Hypergraphs}

{\bf Results in~\cite{Lou15,CLTZ18}}: Given a hypergraph $H=(V,E)$ with a weight function $w: E \to \R_{\geq 0}$, they defined a nonlinear Laplacian operator and its eigenvalues $\gamma_1 \leq \gamma_2 \leq \ldots \leq \gamma_{|V|}$ based on a continuous time diffusion process.
Then they derived a Cheeger inequality $\frac12 \gamma_2 \leq \phi(H) \leq \sqrt{2 \gamma_2}$ in~\cite[Theorem 6.1]{CLTZ18},
where $\phi(H)$ is the hypergraph edge conductance of $H$ in \autoref{def:hypergraph-edge-conductance}.
But the quantity $\gamma_2$ in~\cite[Definition 3.1]{CLTZ18} is not polynomial time computable, 
and so a semidefinite programming relaxation $\tilde{\gamma}_2$ of $\gamma_2$ (see~\cite[SDP 8.3]{CLTZ18}) was used in~\cite[Theorem 8.1]{CLTZ18} to prove that 
\begin{equation} \label{e:Louis-Cheeger}
\tilde{\gamma}_2 \lesssim \phi(H) \lesssim \sqrt{ \tilde{\gamma}_2 \cdot \log r}.
\end{equation}

{\bf Our Results}: In the reweighted eigenvalue approach, we use $\gamma_2^*(H)$ in \autoref{def:hypergraph-edge-primal} as a relaxation to $\phi(H)$.
We can prove the easy direction as in \autoref{prop:directed-edge-easy} by a reduction, but we actually do not need to prove it as we will see soon.
As in \autoref{prop:lambda2-edge}, we can write $\gamma_2^*(H)$ as the following semidefinite program:
\begin{align*}
    \gamma_2^{*}(H) :=
     \min_{f: V \rightarrow \R^n} \max_{A \geq 0} &~~~\sum_{e\in E}\sum_{\{u,v\}\subseteq e} c(u,v,e)||f(u)-f(v)||^2
     \\
    \st&~~~
    \sum_{\{u,v\} \subseteq e} c(u,v,e) \leq w(e) \quad \quad \forall u,v \in V
    \\
    &~~~
    \sum_{v \in V} d_w(v)\cdot f(v) = \vec{0}
    \\
    &~~~ \sum_{v \in V} d_w(v) \cdot \norm{f(v)}^2 = 1.
\end{align*}
By using LP duality in the inner maximization problem as in \autoref{lem:l1-dual-vertex}, it follows that
\begin{align*}
    \gamma_2^{*}(H) =
     \min_{f: V \rightarrow \R^n} \min_{g:V \to \R_{\geq 0}} &~~~\sum_{e\in E} g(e) \cdot w(e) 
     \\
    \st&~~~
    g(e) \geq \norm{f(u)-f(v)}^2 \quad \quad \forall \{u,v\} \subseteq e, \forall e \in E
    \\
    &~~~
    \sum_{v \in V} d_w(v)\cdot f(v) = \vec{0}
    \\
    &~~~ \sum_{v \in V} d_w(v) \cdot \norm{f(v)}^2 = 1.
\end{align*}
It turns out that $\gamma_2^*(H)$ in this form is exactly the same as $\tilde{\gamma}_2$ in~\cite[SDP 8.3]{CLTZ18}.
(Also, $\gamma_2$ in~\cite{CLTZ18} is simply this dual program restricted to one dimensional embeddings $f:V\to\R$ as stated in \autoref{lem:l1-dual-edge}.)
Therefore, \autoref{thm:hypergraph-edge-conductance} follows from their result in \eqref{e:Louis-Cheeger}.
We would like to mention that we initially proved \autoref{thm:hypergraph-edge-conductance} using the same proofs as in~\cite{KLT22}, which is not surprising as the proofs in~\cite{KLT22} are very similar to that in~\cite{LRV13,CLTZ18}.

\subsection{Higher-Order Cheeger Inequality and Small-Set Expansion for Hypergraphs}

{\bf Results in~\cite{Lou15,CLTZ18}}: Given a hypergraph $H=(V,E)$ with a weight function $w: E \to \R_{\geq 0}$, the $k$-way edge conductance of $H$ is defined as
$\phi_k(H) := \min_{S_1, S_2, \ldots, S_k} \max_{1 \leq i \leq k} \phi(S_i)$ where $S_1, S_2, \ldots, S_k$ are over pairwise disjoint subsets of $V$.
\cite[Theorem 6.14]{CLTZ18} states that 
\begin{equation} \label{e:higher-order-Louis}
\tilde{\gamma}_k \lesssim \phi_{k}(H) \lesssim k^4 \cdot \log k \cdot \log \log k \cdot \log r \cdot \sqrt{\tilde{\gamma}_k}
{\rm~and~}
\phi_{(1-\eps)k}(H) \lesssim \frac{k^{2.5}}{\eps^{1.5}} \cdot \log k \cdot \log \log k \cdot \log r \cdot \sqrt{\tilde{\gamma}_k}
\end{equation}
for any $\eps \geq 1/k$, 
where $\tilde{\gamma}_k$ is an SDP relaxation of $\gamma_k$ which can be computed in polynomial time.
Furthermore, they proved a stronger bound in~\cite[Corollary 3.23]{CLTZ18} about small-set conductance that there is a subset $S$ with $|S| = \Theta(n/k)$ and
\begin{equation} \label{e:SSE-Louis}
\phi(S) \lesssim k^{1.5} \cdot \log k \cdot \log \log k \cdot \log r \cdot \sqrt{\tilde{\gamma}_k}.
\end{equation}

{\bf Our Results}:
We define $\gamma_k^*(H)$ as in \autoref{def:hypergraph-edge-primal} but the objective is to maximize the $k$-th smallest eigenvalue of the normalized Laplacian matrix $\L = I - D^{-1/2}AD^{-1/2}$.
This is, however, not a convex optimization problem.
As in~\cite{KLT22}, we change the objective to maximize the sum of the $k$ smallest eigenvalues $\sum_{i=1}^k \lambda_i(\L)$, so that we can use \autoref{prop:sum-of-lambda-k} to write this as a semidefinite program that we call $\sigma_k^*(H)$.
Using the same manipulations as in \cite[Proposition 5.2]{KLT22}, we can write
\begin{align*}
    \sigma_k^{*}(H) :=
     \min_{f: V \rightarrow \R^n} \min_{g:V \to \R_{\geq 0}} &~~~\sum_{e\in E} g(e) \cdot w(e) 
     \\
    \st&~~~
    g(e) \geq \norm{f(u)-f(v)}^2 \quad \quad \forall \{u,v\} \subseteq e, \forall e \in E
    \\
    &~~~
    \sum_{v \in V} d_w(v)\cdot f(v) f(v)^T \preceq I_n
    \\
    &~~~ \sum_{v \in V} d_w(v) \cdot \norm{f(v)}^2 = k.
\end{align*}
Following the same but rather long proof in~\cite[Section~5.2-5.4]{KLT22}, we can construct functions $f_1,\ldots,f_l$ with disjoint supports such that each is a good solution to \autoref{def:hypergraph-edge-primal}, and prove the exact same statement as in \cite[Theorem 5.20]{KLT22}.
\begin{theorem}[Higher-Order Cheeger Inequality for Hypergraphs] \label{thm:higher-order-hypergraphs}
For any hypergraph $H=(V,E)$ with weight function $w: E \to \R_{\geq 0}$,
\[
\frac{1}{k} \sigma_k^*(H) 
\lesssim \phi_k(H) 
\lesssim k^4 \log k \sqrt{\log r \cdot \sigma_k^*(H)}
\quad {\rm and} \quad
\phi_{(1-\eps)k}(H) 
\lesssim \frac{1}{\eps^4} \log k \sqrt{\log r \cdot \sigma_{k}^*(H)}. 
\]
\end{theorem}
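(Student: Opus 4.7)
The plan is to transcribe the proof of \cite[Theorem~5.20]{KLT22} for undirected vertex expansion into this hypergraph setting, exploiting the fact that the dual form of $\sigma_k^*(H)$ (obtained as in the derivation of the dual of $\gamma_2^*(H)$ above) is structurally identical to the dual SDP for reweighted vertex expansion, once we identify each hyperedge $e$ with the pair $\{u,v\} \subseteq e$ realizing $\max_{u,v \in e}\norm{f(u)-f(v)}^2$. Under this identification, the hyperedge weights $w(e)$ play the role of the vertex weights $\pi(v)$, and the ``$\log r$'' factor will appear in exactly the same place where $\log \Delta$ appears in the vertex-expansion proof, namely at the final Cheeger-rounding step. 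The easy direction $\sigma_k^*(H) \lesssim k \cdot \phi_k(H)$ follows by taking $k$ disjoint sets $S_1,\dots,S_k$ witnessing $\phi_k(H)$, building the usual orthonormal indicator embedding $f_i(v) = \one[v \in S_i]/\sqrt{\vol_w(S_i)}$, and choosing clique weights $c(u,v,e) \le w(e)$ so that the $i$-th block of the normalized Laplacian has Rayleigh quotient at most $\phi(S_i)$; then $\sigma_k^*(H) \le \sum_{i=1}^k \phi(S_i) \le k \cdot \phi_k(H)$.

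For the hard direction I would follow the two-step structure of \cite{KLT22}. In the first step (disjoint-support decomposition), starting from an optimal SDP embedding $f : V \to \R^n$ for $\sigma_k^*(H)$, the rank constraint $\sum_v d_w(v) f(v) f(v)^T \preceq I_n$ combined with the normalization $\sum_v d_w(v) \norm{f(v)}^2 = k$ forces the embedding to be suitably spread out. Invoking the random localization lemma of \cite[Section~5.3]{KLT22} (a variant of the Louis--Makarychev partitioning already used in \cite{CLTZ18}) then produces $\ell \geq (1-\eps)k$ disjointly-supported functions $f_1, \dots, f_\ell$ with $\gamma_2^*$-type Rayleigh quotients bounded in terms of $\sigma_k^*(H)$, incurring the factors $\eps^{-4} \log k$ and $k^4 \log k$ respectively. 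The clique structure is handled by charging each hyperedge $e$ to the pair $\{u,v\} \subseteq e$ attaining $\max \norm{f_i(u)-f_i(v)}^2$, so the entire argument runs formally as in~\cite{KLT22}. In the second step (Cheeger rounding per block), I would apply \autoref{thm:hypergraph-edge-conductance} to each $f_i$ separately, which costs a factor $\sqrt{\log r}$ per block and produces a set $S_i \subseteq \supp(f_i)$ of small $\phi(S_i)$. Taking the worst $S_i$ yields the claimed bounds.

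The step I expect to be most delicate is the Rayleigh-quotient bookkeeping when one restricts $f$ to a disjoint support: the hypergraph objective takes a \emph{maximum} over pairs inside each hyperedge, so cutting the embedding could in principle lose a factor proportional to $r$ on any hyperedge split between two blocks. This potential loss is absorbed into the $\sqrt{\log r}$ factor from the Cheeger rounding in Step~2 — a split hyperedge contributes its full weight $w(e)$ to both the cut and the volume on each side, exactly as in the analysis of \cite{CLTZ18} — so no new ideas beyond those in \cite{KLT22,CLTZ18} are needed, which is why the authors elect to omit the detailed calculation and simply reference the corresponding proofs.
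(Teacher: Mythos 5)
Your proposal follows essentially the same route as the paper, which itself only sketches the argument: write $\sigma_k^*(H)$ as the sum-of-$k$-smallest-eigenvalues SDP, run the disjoint-support construction of \cite[Sections 5.2--5.4]{KLT22} (charging each hyperedge to its maximizing pair) to get functions $f_1,\dots,f_\ell$ that are good solutions to the $\gamma_2^*$-type program, and round each with the hypergraph Cheeger inequality at a $\sqrt{\log r}$ cost, with the easy direction via disjointly supported indicator test functions. One small wording caveat: for the easy direction you must bound the Rayleigh quotients of the indicators for \emph{every} feasible clique reweighting (which the capacity constraints $\sum_{u,v\in e} c(u,v,e)\le w(e)$ give directly), rather than ``choosing'' clique weights to make them small, since $\sigma_k^*(H)$ is a maximum over reweightings.
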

Compared to \eqref{e:higher-order-Louis},
the result for $k$-way partitioning is comparable with an extra factor of $\sqrt{k}$ but a factor of $\sqrt{\log r}$ less,
while the result for $[(1-\eps)k]$-way partitioning for constant $\eps$ is an improvement by a factor of more than $k^2$.
As a consequence, this also implies an improvement of \eqref{e:SSE-Louis} for small-set conductance by a factor of more than $k$.

\subsection{Improved Cheeger Inequality for Hypergraphs}

Using the reweighted eigenvalue approach, we can also prove an analog of the improved Cheeger's inequality as described in \autoref{sec:improved-Cheeger-directed}.
This is a new result that was not obtained in~\cite{Lou15,CLTZ18}.
Combining with the higher-order Cheeger inequality for hypergraphs in \autoref{thm:higher-order-hypergraphs}, this implies the following corollary that only depends on the combinatorial structure of $H$: If the $k$-way edge conductance $\phi_k(H)$ is large for a small $k$, then $\gamma_2^*(H)$ is a tighter approximation to $\phi(H)$.

\begin{theorem}[Improved Cheeger's Inequality for Hypergraphs]
\label{thm:improved-cheeger-hypergraphs}
Let $H = (V, E)$ be a hypergraph with weight function $w: E \rightarrow \R_{\ge 0}$.
For any $2 \le k \le n$,
\[
  \phi(H) \lesssim \frac{k^{3/2} \cdot \log r \cdot \gamma_2^*(H)}{\sqrt{\gamma_k^*(H)}}.
\]
\end{theorem}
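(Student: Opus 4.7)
The plan is to transport the proof of \autoref{thm:improved-cheeger-edge} to the hypergraph setting, working with the SDP underlying $\gamma_2^*(H)$ from \autoref{def:hypergraph-edge-primal} in place of the directed reweighted program. Because $\gamma_k^*(H)$ maximizes $\lambda_k$ alone and is not convex in the adjacency variables, I would first introduce the convex proxy $\sigma_k^*(H) := \max_A \sum_{i=1}^k \lambda_i(\mathcal{L})$ subject to the same edge-capacity constraints as in \autoref{def:hypergraph-edge-primal}. By \autoref{prop:sum-of-lambda-k} and the manipulations in \cite[Proposition 5.2]{KLT22}, this can be written as an SDP, and elementary eigenvalue inequalities give $\gamma_k^*(H) \le \sigma_k^*(H) \le k\cdot\gamma_k^*(H)$. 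Consequently it suffices to prove the stronger bound with $\sigma_k^*$ in place of $\gamma_k^*$; this substitution is the only place where an extra $\sqrt{k}$ is lost relative to the undirected bound of~\cite{KLLOT13}.

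The core of the proof consists of two ingredients, the hypergraph analogues of \autoref{prop:k-step-rounding-vertex} and \autoref{prop:k-step-construction-vertex}. First, an \emph{improved threshold rounding lemma}: given any one-dimensional solution $f : V \to \R$ to the SDP for $\gamma_2^*(H)$ with objective $\lambda_f$, together with any $k$-step function $y_f : V \to \R$ approximating $f$, one can round to a set $S$ with
\[
\phi(S) \;\lesssim\; k\cdot \lambda_f + k\cdot\norm{f-y_f}_{d}\cdot\sqrt{\lambda_f},
\]
where $\norm{z}_{d}^2 := \sum_v d_w(v)\cdot z(v)^2$. This is obtained by first passing from the $\ell_2^2$ SDP to its $\ell_1$ version (a routine square-root loss, exactly as in \autoref{prop:l22-to-l1}) and then running the threshold-rounding argument of~\cite[Proposition 6.3]{KLT22} on the $\ell_1$ dual stated in the previous subsection. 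Since the hypergraph dual has no vertex-potential variable $r$, the rounding is in fact simpler than in \autoref{sec:threshold-rounding} and uses only the single ordering defined by $f$. Second, a \emph{$k$-step approximation lemma}: given a one-dimensional solution $(f,g)$ to the $\ell_1$ dual with objective value $\xi_f$, there is a $k$-step function $y:V\to\R$ with
\[
\norm{f-y}_{d}^2 \;\lesssim\; \frac{k\cdot\xi_f}{\sigma_k^*(H)}.
\]
This follows verbatim from \cite[Proposition 6.3]{KLT22}, since that argument only uses the reweighted spectral SDP and generic spectral-partitioning estimates, both available here in the same form.

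Combining the two ingredients (identifying $\lambda_f$ with $\xi_f$ as the paper does in \autoref{sec:improved-Cheeger-directed}) yields
\[
\phi(H) \;\lesssim\; k\cdot\lambda_f + k\sqrt{\lambda_f}\cdot\sqrt{\frac{k\cdot\lambda_f}{\sigma_k^*(H)}} \;\lesssim\; \frac{k^{3/2}\cdot\lambda_f}{\sqrt{\sigma_k^*(H)}}.
\]
To produce the one-dimensional $f$ with small $\lambda_f$, I would invoke the $O(\log r)$ dimension-reduction step already implicit in the proof of \autoref{thm:hypergraph-edge-conductance} (due to~\cite{CLTZ18}), which supplies $\lambda_f \lesssim \log r\cdot\gamma_2^*(H)$; together with $\sigma_k^*(H) \ge \gamma_k^*(H)$ this delivers the claimed bound. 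No step is genuinely hard: the work is one of verifying that the hypergraph reweighted SDP and its dual slot into the KLLOT--KLT template identically, with the sole hypergraph-specific factor of $\log r$ entering through the dimension-reduction step.
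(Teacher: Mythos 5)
Your proposal is correct and follows essentially the same route as the paper, which itself only sketches this proof by deferring to \autoref{sec:improved-Cheeger-directed} and \cite{KLLOT13,KLT22}: introduce the convex proxy $\sigma_k^*(H)$ with $\gamma_k^*(H) \le \sigma_k^*(H) \le k\cdot\gamma_k^*(H)$ (the source of the extra $\sqrt{k}$), prove the hypergraph analogues of \autoref{prop:k-step-rounding-vertex} and \autoref{prop:k-step-construction-vertex}, combine them, and feed in a one-dimensional solution with a $\log r$ loss from the Gaussian-projection dimension reduction. Your observation that the hypergraph dual lacks the potential $r$, making the rounding simpler than in \autoref{sec:threshold-rounding}, matches the paper's remark that the hypergraph case is closer to the undirected vertex-expansion argument of \cite{KLT22} than to the directed one.
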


The proof of \autoref{thm:improved-cheeger-hypergraphs} follows the same two-step approach as in~\cite{KLLOT13,KLT22} and also in \autoref{sec:improved-Cheeger-directed} in this paper.
Actually, the proof for hypergraphs is very similar to that for undirected vertex expansion in~\cite{KLT22}, and is easier than that for directed graphs in \autoref{sec:improved-Cheeger-directed}. 
So, we omit the details and refer the reader to \autoref{sec:improved-Cheeger-directed} for an overview.
The only new element is the formulation in \autoref{def:hypergraph-edge-primal}.

\section{Concluding Remarks}

In this paper, we show that the reweighted eigenvalue approach can be extended substantially to derive Cheeger inequalities for directed graphs and hypergraphs.
Most notably, this develops into an interesting new spectral theory for directed graphs, which is much closer to the spectral theory for undirected graphs than what are previously known.
We hope that this spectral theory will find more applications in practice, in clustering and partitioning of directed graphs and hypergraphs.

Technically, the reweighted eigenvalue approach provides an intuitive and unifying method to reduce the study of expansion properties in more general settings to the basic setting of edge conductance in undirected graphs.
We believe that this approach can be used to lift more results in spectral graph theory for undirected graphs to more general settings,
as the ideas are consistent with recent works on directed Laplacian solvers and hypergraph spectral sparsification that we mentioned in \autoref{sec:related-work}.

There are some concrete open problems.
The most obvious one is to prove tight bounds for the two main results \autoref{thm:directed-vertex-expansion} and \autoref{thm:directed-edge-conductance}, to settle whether the dependency on the asymmetric ratio can be completely removed or not\footnote{See \autoref{r:tight} that the dimension reduction result for directed edge conductance is tight, and so a positive result removing the $\log \alpha(G)$ factor in \autoref{thm:directed-edge-conductance} would probably need substantial new ideas.
We incline to believe that the $\log \alpha(G)$ factor in \autoref{thm:directed-edge-conductance} cannot be completely removed, but we do not have an example supporting this belief.
We are less sure about what the right bound should be for \autoref{thm:directed-vertex-expansion}.}.
Another one is to formulate and prove higher-order Cheeger inequality and bipartite Cheeger inequality for directed graphs as discussed in \autoref{sec:supporting-results}. 
An important one for applications is to design fast algorithms (ideally near-linear time algorithms) for computing reweighted eigenvalues.
\newpage

\bibliographystyle{plain}

\end{document}